\documentclass[prd,
%author-year,
%author-numerical,
%notesinbib
reprint,
onecolumn,
notitlepage,
longbibliography,
%nofootinbib
]{revtex4-1}
%\documentclass[aps,prd,preprint]{revtex4-1}

%\expandafter\let\csname equation*\endcsname\relax
%\expandafter\let\csname endequation*\endcsname\relax

\usepackage{amssymb}
\usepackage{amsmath}
\usepackage{amsthm}
\usepackage{latexsym}
\usepackage{mathrsfs}
\usepackage{eucal}
\usepackage{tipa}
\usepackage{graphicx}
\usepackage{slashed}
\usepackage{amsbsy}

%\expandafter\let\csname equation*\endcsname\relax
%\expandafter\let\csname endequation*\endcsname\relax

\usepackage{color}

\usepackage[pdftex,plainpages=false,pdfpagelabels,verbose=true]{hyperref}

\newcounter{mnotecount}[section]

\allowdisplaybreaks[2]

\theoremstyle{plain}
\newtheorem{theorem}{Theorem}[section]

\newtheorem{lemma}[theorem]{Lemma}
\newtheorem{corollary}[theorem]{Corollary}

\newtheorem{definition}[theorem]{Definition}
\newtheorem{remark}[theorem]{Remark}
\newtheorem{example}[theorem]{Example}

\def\sDiv{\mathscr{D}}
\def\sCurl{\mathscr{C}}
\def\sCurlDagger{\mathscr{C}^\dagger}
\def\sTwist{\mathscr{T}}

\DeclareMathOperator{\tho}{\text{\rm\textthorn}}
\DeclareMathOperator{\edt}{\eth}

\newcommand{\MidMet}{\mathcal{M}}
\renewcommand{\Re}{\mathrm{Re}}
\renewcommand{\Im}{\mathrm{Im}}
\def\ImA{\mathrm{Im}\mathcal{A}}

\newcommand{\Hcal}{\mathcal{H}}

\newcommand{\gfrak}{\mathfrak{g}}

\newcommand{\NPl}{l}
\newcommand{\NPn}{n}
\newcommand{\NPm}{m}
\newcommand{\NPmbar}{\bar m}

\newcommand{\Lie}{\mathcal L}
\newcommand{\half}{\frac{1}{2}}

\newcommand{\SL}{\mathrm{SL}}
\newcommand{\SO}{\mathrm{SO}}

\newcommand{\Co}{{\mathbb C}} 
\newcommand{\Spin}{\text{Spin}}	
\newcommand{\met}{g} 

\newcommand{\SymSpinSec}{\mathcal S}

\newcommand{\sfrak}{\mathfrak{s}}

\def\Mass{M}

\begin{document}

\title{New identities for linearized gravity on the Kerr spacetime}

\author{Steffen Aksteiner} \email{steffen.aksteiner@aei.mpg.de}
\affiliation{Albert Einstein Institute, Am M\"uhlenberg 1, D-14476 Potsdam,
Germany}

\author{Lars Andersson} \email{laan@aei.mpg.de}
\affiliation{Albert Einstein Institute, Am M\"uhlenberg 1, D-14476 Potsdam,
  Germany}

\author{Thomas B\"ackdahl} 
\email{thobac@chalmers.se}
\affiliation{Mathematical Sciences, Chalmers University of Technology and University of Gothenburg, SE-412 96 Gothenburg, Sweden}
\affiliation{School of Science and Technology, \"Orebro University, SE-701 82 \"Orebro, Sweden}
%\affiliation{Albert Einstein Institute, Am M\"uhlenberg 1, D-14476 Potsdam,
%  Germany}
\affiliation{The University of Edinburgh, James Clerk Maxwell Building, Peter Guthrie Tait Road, Edinburgh,  EH9 3FD, UK}

\keywords{linearized gravity, spin geometry, algebraically special spacetimes}
%\pacs{04.20.-q,02.70.Wz}
%\subjclass[2010]{83C20,83C60}

\begin{abstract}
In this paper we derive a differential identity for linearized gravity on the Kerr spacetime and more generally on vacuum spacetimes of Petrov type D. We show that a linear combination of second derivatives of the linearized Weyl tensor can be formed into a complex symmetric 2-tensor $\MidMet_{ab}$ which solves the linearized Einstein equations. The identity makes this manifest by relating $\MidMet_{ab}$ to two terms solving the linearized Einstein equations by construction. The self-dual Weyl curvature of $\MidMet_{ab}$ gives a covariant version of the Teukolsky-Starobinsky identities for linearized gravity which, in addition to the two classical identities for linearized Weyl scalars with extreme spin weights, includes three additional equations. In particular, they are not consequences of the classical Teukolsky-Starobinsky identities, but are additional integrability conditions for linearized gravity. The result has direct application in the construction of symmetry operators and also yields a set of non-trivial gauge invariants for linearized gravity.
\end{abstract} 

\maketitle

\section{Introduction}

In this paper we prove a new identity for linearized gravity on the Kerr spacetime, and more generally for spacetimes of Petrov type D. Let $\dot g_{ab}$ be a solution to the source-free linearized Einstein equations and let $\dot{\Psi}_0, \dot{\Psi}_4$ be the corresponding gauge invariant extreme Weyl scalars. The Teukolsky master equations (TME) are two wave equations for $\dot{\Psi}_0, \dot{\Psi}_4$, respectively, while the classical Teukolsky-Starobinsky identities (TSI) are two fourth order differential relations between them. 

From up to second derivatives of $\dot{\Psi}_0, \dot{\Psi}_4$ we construct a complex symmetric 2-tensor $\MidMet_{ab}$ solving the linearized Einstein equations. The main result of this paper (here specialized to a Kerr background) is an identity of the form
\begin{equation}\label{eq:MainIdentity_TensorVersion} 
\MidMet_{ab} =  
\nabla_{(a}\mathcal{A}_{b)} + \frac{\Mass}{54} \mathcal{L}_{\xi}\dot g_{ab},
\end{equation} 
with $\mathcal{L}_{\xi}\dot g_{ab}$ being the Lie derivative of  $\dot g_{ab}$ along the time-like isometry $\xi$, $\Mass$ the Mass and the vector field $\mathcal{A}^a$ being defined in terms of up to three derivatives of the linearized metric $\dot g_{ab}$. The classical TSI can be derived covariantly as the extreme components of the self-dual Weyl curvature of \eqref{eq:MainIdentity_TensorVersion}. In fact, as we shall see, one finds not only the two classically known Teukolsky-Starobinsky identities, but three additional identities, which are presented here for the first time. 

The \citet{kerr:1963PhRvL..11..237K} family of rotating, stationary, vacuum spacetimes, with parameters mass $M$ and angular momentum per unit mass $a$ can be characterized as the asymptotically flat vacuum Petrov type D spacetimes with positive mass, cf. \cite{2015arXiv150402069A}, see also \cite{stephani:etal:2009esef.book.....S, *Penrose:1986fk} for background. If $|a| \leq M$, the maximally extended Kerr spacetime contains a black hole, which in the subextreme case $|a| < M$ is non-degenerate, with a bifurcate event horizon. The Kerr rotating black hole model, which plays a central role in astrophysics, and gravitational wave research, is conjectured to be unique and stable. The black hole uniqueness conjecture states that an asymptotically flat, stationary, vacuum spacetime containing a non-degenerate horizon belongs to the subextreme Kerr family. The black hole stability conjecture states that the subextreme Kerr family is dynamically stable, i.e. that a small perturbation of Kerr Cauchy data generates a Cauchy development which is asymptotic to the future to a member of the Kerr family. Both of these conjectures remain open, in spite of intense work during the last decades. The black hole uniqueness and stability problems, and related problems including scattering are among the most important problems in general relativity and the special features of the Kerr geometry, which are to a large extent due to its algebraically special nature play a fundamental role in most of the work on these problems.

As was shown by \citet{walker:penrose:1970CMaPh..18..265W} any vacuum spacetime of Petrov type D admits a Killing spinor $\kappa_{AB}$ of valence 2, or equivalently a conformal Killing-Yano tensor. 
The Kerr geometry belongs to the generalized Kerr-NUT subclass \citep{ferrando:saez:2007JMP....48j2504F}, characterized by the condition that the Killing vector field $\nabla_{A'}{}^B \kappa_{AB}$ is proportional to a real Killing vector field, or equivalently, by the condition that it admits a Killing-Yano tensor. If $Y_{ab}$ is a Killing-Yano tensor, then $K_{ab} = Y_{ac} Y^c{}_b$ is a Killing tensor. In the case of Kerr, $K_{ab}$ is proportional to the Carter Killing tensor \citep{carter:1968PhRv..174.1559C},  and gives rise to a conserved quantity for geodesics, the Carter constant. In addition to the just mentioned facts, field equations on Petrov type D have several further useful and nontrivial properties including decoupling, existence of non-trivial symmetries, and separation of variables, as well as associated conservation laws.  

The TME \citep{teukolsky:1973} and the TSI \citep{1974ApJ...193..443T, 1974JETP...38....1S} which play a crucial role in the analysis of higher spin fields on the Kerr background, generalize to the Petrov type D class. For a spin-$\sfrak$ field, $\sfrak = 1, 2$, where the spin-$1$ field is a Maxwell test field, and the spin-$2$ field linearized gravity, the TME are wave equations governing the Newman-Penrose scalars \citep{1962JMP.....3..566N, *GHP}, defined with respect to a principal null tetrad, with extreme spin weights $\pm \sfrak$. Decoupling refers to the fact, valid in the whole Petrov type D class, that for each $\sfrak$, the spin-$\sfrak$ field equation implies a pair of decoupled scalar TME. For spacetimes in the generalized Kerr-NUT subclass, the TME admit separation of variables, which is a consequence of the existence of non-trivial commuting symmetry operators.  The existence of commuting symmetry operators for the TME, and more generally symmetry operators taking solutions of the spin-$\sfrak$ field equation to solutions, are also consequences of the Petrov type D or generalized Kerr-NUT condition, cf. \citep{ABB:symop:2014CQGra..31m5015A,2016arXiv160904584A}. 

In its classical form on Kerr, the TSI  relate the solutions of the radial Teukolsky equations for fields of spin-weights $\pm \sfrak$ on the Kerr spacetime, and are thus valid only for the separated form of the TME. In that context the TSI are sometimes referred to as the Teukolsky-Press, or Starobinsky-Churilov identities. For the case of linearized gravity on a Petrov type D vacuum spacetime, a derivation of the TSI using the Newman-Penrose formalism, which does not require a separation of variables, was given by \citet{TorresDelCastillo:1994}, later corrected by \citet{SilvaOrtigoza:1997}. See the paper by \citet{Whiting:2005hr} for discussion and background. 

The TME and TSI are consequences of the spin-$\sfrak$ field equations and may thus be viewed as integrability conditions. As pointed out by 
\citet{coll:etal:1987JMP....28.1075C}, in the spin-1 case the classical TSI system must be completed by adding one equation in order for the system of integrability conditions given by the TME and TSI systems to be \emph{equivalent} to the Maxwell system, modulo charge. Examining the full TSI system, one finds that those equations which correspond to the classical TSI have extreme spin weights. For this reason we shall use the term \emph{extreme TSI} when referring to the classical form of the TSI. Due to the fact that the spin-1 TME and TSI involve only the Maxwell scalars of extreme spin-weights,  the non-radiating mode carrying the charge, also known as the Coulomb solution, cancels out of the TME and TSI systems. Thus, in order to reconstruct a Maxwell field from a solution of the TME and full TSI systems, it is necessary to specify the charge as an additional parameter.

In order to understand how to derive the full TSI for the spin-2 case, the following remarks are helpful. In the Maxwell case, the Debye potential construction \citep{wald:1978PhRvL..41..203W} on the Kerr background can be used to construct from the Maxwell field a complex pure gauge vector potential 
$$
\boldsymbol{\alpha}_a = (df)_a
$$
given by a first order differential operator acting on the Maxwell field strength. 
The field strength of $\boldsymbol{\alpha}_a$ has vanishing complex anti-self dual and self dual parts, which correspond to the TME and full TSI systems respectively. 
In the case of linearized gravity, the role of the vector potential is played by the 
linearized metric. Here, the analogous situation holds. The Debye potential construction can be used to construct from a solution $\dot g_{ab}$ of the linearized vacuum Einstein equations\footnote{We shall sometimes refer to the linearized vacuum Einstein equations as the source-free linearized Einstein equations.}, a complex, traceless, symmetric 2-tensor $\MidMet_{ab}$ which is essentially a pure gauge metric satisfying the linearized vacuum Einstein equations. This result is the main theorem of the present paper, cf. Theorem \ref{thm:MainThmTensorVersionIntro}. 

The self dual Weyl curvature of $\MidMet_{ab}$ yields the full TSI for linearized gravity on vacuum spacetimes of Petrov type D, and in particular on the Kerr spacetime. The full TSI system is a differential relation of order four in the linearized Weyl curvature. We also note that the anti-self dual Weyl curvature of $\MidMet_{ab}$ yields a fourth order identity related to the TME. A new feature is encountered compared to the spin-1 case, since the resulting identities contain terms involving the Lie derivative of the background curvature. 

As just mentioned, the intermediate metric $\MidMet_{ab}$ is defined in terms of Debye potentials for the linearized vacuum Einstein equations. We now recall this construction, first restricting to Minkowski space. Following \citet{1958PhRv..112..674S}, let $\Hcal_{abcd}$ be an anti-self dual Weyl field\footnote{Here we use a complex anti-self dual Weyl field for consistence with the rest of the paper, although this is not used in \cite{1958PhRv..112..674S}.}, i.e. a tensor with the symmetries of the Riemann tensor, 
$\Hcal_{abcd}  = \Hcal_{[ab]cd} = \Hcal_{cdab}$, $\Hcal_{[abc]d} = 0$, satisfying $\Hcal^a{}_{bac} = 0$ and $\half \epsilon_{ab}{}^{ef} \Hcal_{efcd} = -i \Hcal_{abcd}$, 
and let 
\begin{equation}\label{eq:BS}
\gfrak_{ab} = \nabla^c \nabla^d \Hcal_{acbd} .
\end{equation} 
Then, if $\nabla^e \nabla_e \Hcal_{abcd} = 0$, it follows  that $\gfrak_{ab}$ solves the linearized vacuum Einstein equation.

The analogous construction for massless spin-$\sfrak$ fields on the 4-dimensional Minkowski space was discussed by  \citet{penrose:1965}. In \cite{2014CMaPh.331..755A} this was used to prove decay estimates for such fields, based on decay estimates for the wave equation. We shall now describe the analogue of the Sachs-Bergmann construction in the case of a vacuum Petrov type D metric. 

Introduce the following complex anti-self dual tensors with the symmetries of the Weyl tensor  
\begin{align*} 
Z^0_{abcd} ={}& 4 \NPmbar_{[a}\NPn_{b]} \NPmbar_{[c} \NPn_{d]}, \\ 
Z^4_{abcd} ={}& 4 \NPl_{[a}\NPm_{b]} \NPl_{[c} \NPm_{d]} ,
\end{align*} 
where $(\NPl^a, \NPn^a, \NPm^a, \NPmbar^a)$ constitutes a principal null tetrad. 
These are analogues of the anti-self dual bivectors $Z^0_{ab}, Z^2_{ab}$, see \cite[\S 2]{aksteiner:andersson:2013CQGra..30o5016A}. 
For a complex scalar $\chi_0$, let $\Hcal_{abcd}$ be given by 
\begin{equation}\label{eq:Hcal0} 
\Hcal_{abcd} = \kappa_1^4 \chi_0  Z^0_{abcd} .
\end{equation}
with the complex function $\kappa_1$ being the Killing spinor coefficient, see \eqref{eq:TypeDKS} for details. Define the 1-form $U_a$ by
$$
U_a = - \nabla_a \log (\kappa_1) ,
$$ 
cf. \eqref{eq:UU11Def},
and consider the following analogue\footnote{More precisely it differs by a gauge transformation of \textit{third kind}, cf. \cite{1979PhRvD..19.1641K}, so that the scalar potential solves the TME. } of \eqref{eq:BS}, which sends $\Hcal_{abcd}$ to a 2-tensor $\gfrak_{ab}$, 
$$
\gfrak_{ab} = 
\nabla^{c}(\nabla_{d}+4 U_{d}) \Hcal_{(a}{}^{d}{}_{b)c} .
$$
A calculation shows that $\gfrak_{ab}$ is a complex solution to the linearized vacuum Einstein equation provided the scalar $\kappa_1^4 \chi_0$ solves the TME for spin weight $+2$, cf.  \cite{1979PhRvD..19.1641K}. See corollary~\ref{cor:covTME} below for the covariant form of the TME system,  see also equation \eqref{eq:TME:spin-2} for the component form. The analogous construction with  
\begin{equation}\label{eq:Hcal4}
\Hcal_{abcd} = \kappa_1^4 \chi_4 Z^4_{abcd}
\end{equation}
yields a solution to the linearized Einstein equation in the same way, provided that now the scalar $\kappa_1^4 \chi_4$ solves the TME for spin weight $-2$. Note that in general, the linearized metrics $\gfrak_{ab}$ constructed from \eqref{eq:Hcal0} and \eqref{eq:Hcal4} are different. 
We are now able to state the tensor version of our main result, which describes this difference. Given scalars $\dot \Psi_0, \dot \Psi_4$ of spin weights $2$ and $-2$ respectively, define 
\begin{equation}\label{eq:Hcalpm-def}
\Hcal^\pm_{abcd} = \kappa_1^4 \dot \Psi_0 Z^0_{abcd} \pm \kappa_1^4 \dot \Psi_4 Z^4_{abcd} . 
\end{equation} 
\begin{theorem}[Tensor version]\label{thm:MainThmTensorVersionIntro} Let $\dot g_{ab}$ be a solution to the source-free linearized Einstein equation on a vacuum background of Petrov type D, and let $\dot \Psi_0, \dot \Psi_4$ be the linearized Weyl scalars of spin weights $\pm 2$ defined with respect to a principal null tetrad. Let 
\begin{equation}\label{eq:Mabdef:tensor}
\mathcal{M}_{ab} = \nabla^{c}(\nabla_{d}+4 U_{d}) \Hcal^{-}\negmedspace{}_{(a}{}^{d}{}_{b)c} .
\end{equation}
Then, there is a complex vector field $\mathcal{A}_a$ depending on up to three derivatives of the linearized metric $\dot g_{ab}$, such that  
\begin{equation}\label{eq:nice-tensor} 
\mathcal{M}_{ab} =  
\nabla_{(a}\mathcal{A}_{b)} + \tfrac{1}{2} \Psi_{2} \kappa_1^3 \mathcal{L}_{\xi}\dot g_{ab}.
\end{equation} 
Here, $\xi^a$ is a Killing vector defined in \eqref{eq:XiDef} and $\Psi_2$ is the only non-vanishing component of the background curvature.
\end{theorem}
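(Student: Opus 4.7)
My approach would be to work in the two-spinor formalism adapted to the Killing spinor $\kappa_{AB}=\kappa_1\, o_{(A}\iota_{B)}$, which is the only piece of background structure that enters. In these variables the anti-self-dual Weyl-type tensor $\Hcal^{-}_{abcd}$ of \eqref{eq:Hcalpm-def} takes the form $\hat\Psi^{-}_{ABCD}\,\epsilon_{A'B'}\epsilon_{C'D'}$, with $\hat\Psi^{-}_{ABCD}$ a fully explicit combination of $\kappa_1^{4}\dot\Psi_0$, $\kappa_1^{4}\dot\Psi_4$ and the flag spinors $o_A,\iota_A$. The one-form $U_a=-\nabla_a\log\kappa_1$ is exactly what is needed so that the outer operator $\nabla^{c}(\nabla_{d}+4U_{d})$ in \eqref{eq:Mabdef:tensor} becomes the GHP-weighted divergence of $\hat\Psi^{-}$ with the four factors of $\kappa_1$ absorbed; this identification uses the Killing spinor equation $\nabla_{A'(A}\kappa_{BC)}=0$ on the type D vacuum background. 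The first step is therefore to rewrite \eqref{eq:Mabdef:tensor} as a second-order primed-derivative operator acting on $\hat\Psi^{-}_{ABCD}$, symmetrised in $(AA',BB')$, with all $\kappa_1$-dependence confined to overall scalar factors.

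The second step is to substitute for $\dot\Psi_0,\dot\Psi_4$ their expressions as components of the linearised Weyl spinor $\dot\Psi_{ABCD}$, which is itself a second-order operator applied to the symmetric trace-free part of $\dot g_{AA'BB'}$. This makes $\mathcal{M}_{ab}$ a fourth-order differential expression in $\dot g_{ab}$ modulated by $\kappa_1$ and $\Psi_2$. To rearrange it into \eqref{eq:nice-tensor} I would commute the two inner derivatives through the two outer ones using the background curvature commutators (which on vacuum type D produce only $\Psi_2$ and $\kappa_{AB}$), and then apply the linearised Bianchi identity and the linearised vacuum Einstein equation to eliminate all divergences and traces that arise. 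Every term that retains an outer symmetric derivative is collected into $\nabla_{(a}\mathcal{A}_{b)}$ and this defines $\mathcal{A}_b$ constructively; by construction $\mathcal{A}_b$ depends on at most three derivatives of $\dot g_{ab}$, since $\mathcal{M}_{ab}$ carries four derivatives and exactly one outer $\nabla$ is peeled off.

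The residual terms must then reassemble into $\tfrac12\Psi_2\kappa_1^{3}\mathcal{L}_\xi\dot g_{ab}$. The mechanism is that the Killing vector of \eqref{eq:XiDef} has the spinor form $\xi_{AA'}\propto\nabla^{B}{}_{A'}\kappa_{AB}$, so that commutators of derivatives against the $\kappa_1^{3}$-weighted Bianchi identity naturally produce a factor $\xi^{c}\nabla_{c}\dot g_{ab}$; the remaining $\nabla_{(a}(\xi^{c}\dot g_{b)c})$ term is absorbed into $\mathcal{A}_b$, leaving precisely $\tfrac12\Psi_2\kappa_1^{3}\mathcal{L}_\xi\dot g_{ab}$. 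The main obstacle I anticipate is not structural but combinatorial: \eqref{eq:nice-tensor} is a fourth-order identity with numerous curvature-weighted terms which must cancel exactly, and the appearance of a Lie derivative rather than a pure gauge piece (unlike the spin-1 case) means one cannot simply look for a cohomological argument. I would therefore execute the calculation entirely in GHP scalar components on a principal null tetrad, where the Bianchi identities, commutators, and algebraic specialness of the background are in their most compact form, and where the covariant TME of corollary~\ref{cor:covTME} can be invoked directly to suppress the many putative cross-terms.
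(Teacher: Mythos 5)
Your overall strategy coincides with the paper's: pass to the spinor form, recognize \eqref{eq:Mabdef:tensor} as the weighted second-order operator $\mathcal{M}_{ABA'B'}=(\sCurlDagger_{3,1}\sCurlDagger_{4,0,4}\widehat\phi)_{ABA'B'}$ acting on $\widehat\phi_{ABCD}=\kappa_1^4(\mathcal{K}^1_{4,0}\mathcal{P}^{2}_{4,0}\vartheta\Psi)_{ABCD}$ (your ``GHP-weighted divergence with the $\kappa_1^4$ absorbed'' is exactly the extended index $4$), then use the linearized Bianchi identity specialized to type D to peel one derivative into $\nabla_{(a}\mathcal{A}_{b)}$ and identify the remainder, via the Killing spinor relation $\xi_{AA'}=(\sCurlDagger_{2,0}\kappa)_{AA'}$, with $\tfrac12\Psi_2\kappa_1^3\mathcal{L}_\xi\dot g_{ab}$. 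Where you differ is in execution: the paper never descends to GHP scalars for the main argument, but runs the whole computation covariantly with the $\mathcal{K}$-projection operators and extended fundamental operators, the essential inputs being the Bianchi identity in the form of corollary~\ref{cor:cD4-id} (applied and then re-used inside the second $\sCurlDagger$), the commutator identities of lemma~\ref{lem:Kcommutators} and lemma~\ref{lem:strangeidentities} (the last of which is precisely where the $\mathcal{L}_\xi$ term is generated), and the divergence/curl identities for $\mathcal{A}_{AA'}$ in lemma~\ref{lem:DivA} to reassemble $\sTwist_{1,1}\mathcal{A}$ plus its trace into $\nabla_{(a}\mathcal{A}_{b)}$. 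Your GHP-component route is legitimate in principle (a tensor identity holds iff all its principal-dyad components do, and GHP covariance takes care of the boost-spin freedom), and the authors themselves verify some lemmas this way, but the covariant operator calculus is what makes the bookkeeping of the roughly fourth-order cancellation tractable and what exhibits $\mathcal{A}_a$ in closed form. One detail in your plan is misdirected: the covariant TME of corollary~\ref{cor:covTME} cannot be ``invoked directly'' here, since it controls the combination $\sCurl_{3,1}\sCurlDagger_{4,0,4}\widehat\phi$, whereas $\mathcal{M}$ is built from $\sCurlDagger_{3,1}\sCurlDagger_{4,0,4}\widehat\phi$; in the paper the TME is itself a corollary of the same Bianchi-derived identity \eqref{eq:CurlDgEphi40t}, not an input to the main theorem, so the cross-terms must be killed by the Bianchi identity \eqref{eq:LinBianchi2} and the operator commutators rather than by the wave equation.
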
 
\begin{remark} 
In the Maxwell case, the analogue of \eqref{eq:nice-tensor} is that the vector potential arising out of the Debye potential construction by taking the difference of the extreme Maxwell scalars from \emph{the same} Maxwell field, is pure gauge, cf. \citet[eq. (5.40)]{aksteiner:thesis}. In the spin-2 case above, the term involving $\Lie_\xi \dot g_{ab}$ is a new feature, which indicates an important qualitative difference between the spin-1 and spin-2 cases. 
\end{remark} 

In a Petrov type D vacuum spacetime we have that $\xi^a$ is Killing and $\Psi_2 \kappa_1^3$ is constant, see \eqref{eq:TypeDKS} below. Hence, the intermediate metric $\mathcal{M}_{ab}$ given in \eqref{eq:nice-tensor} is a complex solution of the linearized Einstein equation\footnote{The first term on the right-hand side of equation \eqref{eq:nice-tensor} is pure gauge since it is the action of a linearized diffeomorphism generated by $\mathcal{A}_a$.}.

Let us finally mention some further properties and immediate applications of the result. From \eqref{eq:MainIdentity_TensorVersion} it follows that $\nabla_{(a}Im\mathcal{A}_{b)}$ is gauge invariant (since the LHS is gauge invariant and the second term on the RHS is real). In fact, as we show in Lemma~\ref{lem:GaugeDependence}, $Im\mathcal{A}_{b}$ itself is gauge invariant and contributes to a complete set of invariants, as explored by two of the authors in \cite{2018PhRvL.121e1104A}. It should also be noted, that \eqref{eq:MainIdentity_TensorVersion} can be viewed as an operator identity applied to $\dot{g}_{ab}$. This perspective, together with properties of its formal adjoint identity has been made use of by two of the authors in \cite{2016arXiv160904584A} to derive a symmetry operator for the linearized Einstein equations on type D spacetimes.

\subsection*{Overview of this paper} In section~\ref{sec:prel} we give some background and preliminary results. Section~\ref{sec:2-spinors} introduces the 2-spinor formalism which shall be used throughout the paper, and section~\ref{sec:TypeDstructure} contains a review of the consequences of the existence of a Killing spinor on vacuum type D spacetimes. In sections~\ref{sec:ExtFundSpinOp} and \ref{sec:projop} we introduce a set of geometrically defined operators together with commutation rules, which allow us to exploit the special geometry in Petrov type D spacetimes. In section~\ref{sec:LinearizedGravity} a spinorial form of the field equations of linearized gravity is presented. We derive a convenient form of the linearized Bianchi identity in corollary~\ref{cor:cD4-id}. This equation plays a central role in the proof of the main theorem given in section~\ref{sec:MainTheorem}. 
In lemma~\ref{lem:Mcurvature} we analyze the curvature of the intermediate metric $\MidMet_{ab}$. In particular, its self dual linearized Weyl curvature gives a covariant form of the full spin-2 TSI.  
Finally, in corollary~\ref{cor:tomain:intro} we give a simplified form of the TSI for the Kerr case, containing only gauge invariant quantities. 
Appendix~\ref{secApp:GHPForm} contains the GHP component form of various equations and appendix~\ref{app:KopComm} states commutator relations of certain spinorial operators  discussed in this paper.

\section{Preliminaries} \label{sec:prel} 
\subsection{2-spinors and irreducible decompositions} \label{sec:2-spinors}

Let $(\mathcal{N}, \met_{ab})$ be a 3+1 dimensional Lorentzian spin manifold with real metric $\met_{ab}$ of signature $+---$. The spacetimes we are interested in here are spin, in particular any orientable, globally hyperbolic 3+1 dimensional spacetime is spin, cf. \citet[page 346]{Ger70spinstructII}. We shall throughout the paper make use of 2-spinor formalism, which simplifies calculations and makes many geometrical structures more transparent. See  \cite{Penrose:1986fk} and \cite{2015arXiv150402069A} for background.

If $\mathcal{N}$ is spin, then the orthonormal frame bundle $\SO(\mathcal{N})$ admits a lift to $\Spin(\mathcal{N})$, a principal $\SL(2,\Co)$-bundle. The group $\SL(2,\Co)$ has two inequivalent fundamental representations $\Co^2$ and $\bar \Co^2$. We denote sections of the corresponding spinor bundles with unprimed and primed uppercase indices, respectively. The action of $\SL(2,\Co)$ leaves invariant an anti-symmetric 2-spinor $\epsilon_{AB}$, the spin-metric.

The associated bundle construction now gives vector bundles over $\mathcal{N}$ corresponding to the  representations of $\SL(2,\Co)$, in particular we have bundles of valence $(k,l)$ spinors with sections $\varphi_{A\cdots K A' \cdots L'}$. Here $k$ and $l$ are the numbers of unprimed and primed indices. An important aspect of the 2-spinor formalism is the correspondence between tensors and spinors. An example is provided by the correspondence between metric and spin-metric $g_{ab} = \epsilon_{AB} \bar \epsilon_{A'B'}$. The Levi-Civita connection lifts to act on sections of the spinor bundles, 
\begin{equation}\label{eq:nablavarphi}
\nabla_{AA'} : \varphi_{B \cdots D B' \cdots D' } \to \nabla_{AA'} \varphi_{B \cdots D B' \cdots D'},
\end{equation} 
where we have used the tensor-spinor correspondence to replace the index $a$ by $AA'$. 

Irreducible representations of $\SL(2,\Co)$ and hence also of $\SO(1,3)$ correspond exactly to symmetric spinors, which are automatically traceless.  The space of symmetric spinors of valence $(k,l)$ is denoted by $\SymSpinSec_{k,l}$. The correspondence between symmetric spinors and irreducible representations of $\SL(2,\Co)$ yields efficient methods for decomposition of geometric expressions into irreducible pieces, which can be used for canonicalization. The  \emph{SymManipulator} package \citep{Bae11a}, which has been developed by one of the authors (T.B.) for the \emph{Mathematica} based symbolic differential geometry suite \emph{xAct} \citep{xAct}, exploits in a systematic way the above mentioned decompositions and allows one to carry out investigations which are not feasible to do by hand.  The related 
\emph{SpinFrames} package \citep{SpinFrames} developed by two of the authors implements computations in tetrad components using the Newman-Penrose (NP) and Geroch-Held-Penrose (GHP) \citep{GHP} formalisms.

The above mentioned correspondence between spinors and tensors, and the decomposition into irreducible pieces, can be applied to the Riemann curvature tensor. In this case, they correspond to the scalar curvature $R$, traceless Ricci tensor $S_{ab}$, and the Weyl tensor $C_{abcd}$. The Riemann tensor then takes the form  
\begin{align}
R_{abcd}={}&- \tfrac{1}{12} g_{ad} g_{bc} R
 + \tfrac{1}{12} g_{ac} g_{bd} R
 + \tfrac{1}{2} g_{bd} S_{ac}
 -  \tfrac{1}{2} g_{bc} S_{ad}
 -  \tfrac{1}{2} g_{ad} S_{bc}
 + \tfrac{1}{2} g_{ac} S_{bd}
 + C_{abcd},
\end{align}
and the spinor equivalents of these tensors are
\begin{subequations}
\begin{align}
C_{abcd}={}&\Psi_{ABCD} \bar\epsilon_{A'B'} \bar\epsilon_{C'D'}+\bar\Psi_{A'B'C'D'} \epsilon_{AB} \epsilon_{CD},\\
S_{ab} ={}& -2 \Phi_{ABA'B'},\\
R={}&24 \Lambda.
\end{align}
\end{subequations}
The irreducible decomposition into symmetric spinors in particular applies to covariant derivatives of symmetric spinors $\varphi_{A\cdots D A' \cdots D'} \in \SymSpinSec_{k,l}$. Decomposing \eqref{eq:nablavarphi} into its irreducible parts leads to
\begin{align}
\nabla_A{}^{A'} \varphi_{A_1 \cdots A_k}{}^{A'_1 \cdots A'_l} ={}& 
(\sTwist \varphi)_{AA_1 \cdots A_k}{}^{A'A'_1 \cdots A'_l} - \tfrac{l}{l+1} \bar{\epsilon}^{A'(A'_1}(\sCurl \varphi)_{AA_1 \cdots A_k}{}^{A'_2 \cdots A'_l)} \nonumber \\
&- \tfrac{k}{k+1}\epsilon_{A(A_1}(\sCurlDagger \varphi)_{A_2 \cdots A_k)}{}^{A'A'_1 \cdots A'_l} + \tfrac{kl}{(k+1)(l+1)}\epsilon_{A(A_1}\bar{\epsilon}^{A'(A'_1}(\sDiv \varphi)_{A_2 \cdots A_k)}{}^{A'_2 \cdots A'_l)}
\end{align}
with coefficients given by the following four \emph{fundamental spinor operators} \cite[\S 2.1]{ABB:symop:2014CQGra..31m5015A}, also implemented in the \emph{SymManipulator} package  \citep{Bae11a}.
\begin{definition}
The differential operators
$$
\sDiv:\mathcal{S}_{k,l}\rightarrow \mathcal{S}_{k-1,l-1}, \quad 
\sCurl:\mathcal{S}_{k,l}\rightarrow \mathcal{S}_{k+1,l-1}, \quad 
\sCurlDagger:\mathcal{S}_{k,l}\rightarrow \mathcal{S}_{k-1,l+1}, \quad 
\sTwist:\mathcal{S}_{k,l}\rightarrow \mathcal{S}_{k+1,l+1}
$$
are defined as
\begin{subequations} \label{eq:FundamentalOperators}
\begin{align}
(\sDiv \varphi)_{A_1\dots A_{k-1}}{}^{A_1'\dots A_{l-1}'}\equiv{}&
\nabla^{BB'}\varphi_{A_1\dots A_{k-1}B}{}^{A_1'\dots A_{l-1}'}{}_{B'},\\
(\sCurl \varphi)_{A_1\dots A_{k+1}}{}^{A_1'\dots A_{l-1}'}\equiv{}&
\nabla_{(A_1}{}^{B'}\varphi_{A_2\dots A_{k+1})}{}^{A_1'\dots A_{l-1}'}{}_{B'},\\
(\sCurlDagger \varphi)_{A_1\dots A_{k-1}}{}^{A_1'\dots A_{l+1}'}\equiv{}&
\nabla^{B(A_1'}\varphi_{A_1\dots A_{k-1}B}{}^{A_2'\dots A_{l+1}')},\\
(\sTwist \varphi)_{A_1\dots A_{k+1}}{}^{A_1'\dots A_{l+1}'}\equiv{}&
\nabla_{(A_1}{}^{(A_1'}\varphi_{A_2\dots A_{k+1})}{}^{A_2'\dots A_{l+1}')}.
\end{align}
\end{subequations}
The operators are called respectively the divergence, curl, curl-dagger, and twistor operators. 
\end{definition}
Note that in contrast to \cite{ABB:symop:2014CQGra..31m5015A} we suppress valence indices on the operators. With respect to complex conjugation, the operators satisfy $\overline{\sDiv} = \sDiv$, $\overline{\sTwist} = \sTwist$, $\overline{\sCurl} = \sCurlDagger$, $\overline{\sCurlDagger} = \sCurl$, but note that $\overline{\SymSpinSec_{k,l}}=\SymSpinSec_{l,k}$. Commutation formulas for the fundamental operators are given in \cite[\S 2.2]{ABB:symop:2014CQGra..31m5015A}. The Bianchi identity in terms of these operator is given by
\begin{subequations}
\begin{align}
(\sCurlDagger \Psi)_{ABCA'} ={}& (\sCurl \Phi)_{ABCA'}, \label{eq:Bianchi1} \\
(\sDiv \Phi)_{AA'} ={}& -3(\sTwist \Lambda)_{AA'}. \label{eq:Bianchi2}
\end{align}
\end{subequations}

\subsection{Geometric structure of Petrov type D spacetimes} \label{sec:TypeDstructure}
It is well known \citep{walker:penrose:1970CMaPh..18..265W} that vacuum spacetimes of Petrov type D admit a non-trivial irreducible symmetric 2-spinor $\kappa_{AB}$ solving the Killing spinor equation
\begin{align} \label{eq:Twistkappa2}
(\sTwist \kappa)_{ABCA'} = 0.
\end{align}
Defining the spinors
\begin{align}
\xi_{AA'}={}&(\sCurlDagger \kappa)_{AA'}, \label{eq:XiDef} \\
\lambda_{A'B'}={}&(\sCurlDagger \sCurlDagger \kappa)_{A'B'},
\end{align}
the complete table of derivatives reads
\begin{subequations}\label{eq:kappasys}
\begin{align}
\nabla_{AA'}\kappa_{BC}={}&- \tfrac{1}{3} \xi_{CA'} \epsilon_{AB}
 -  \tfrac{1}{3} \xi_{BA'} \epsilon_{AC},\\
\nabla_{AA'}\xi_{LL'}={}&- \tfrac{1}{2} \lambda_{A'L'} \epsilon_{AL}
 -  \tfrac{3}{4} \kappa^{BC} \Psi_{ALBC} \bar\epsilon_{A'L'}, \label{eq:kappasysb}\\
\nabla_{CC'}\lambda_{A'B'}={}&2 \xi_{C}{}^{D'} \bar\Psi_{A'B'C'D'}.
\end{align}
\end{subequations} 
The fact that the system of equations \eqref{eq:kappasys} for  $(\kappa_{AB}, \xi_{AA'}, \lambda_{A'B'})$ is closed (in the sense that the right hand side involves, in addition to background fields, only the fields $(\kappa_{AB}, \xi_{AA'}, \lambda_{A'B'})$ themselves and no derivatives), implies in particular that higher derivatives of $\kappa_{AB}$ do not produce any relations not derivable from \eqref{eq:kappasys}. Tensor symmetrizing \eqref{eq:kappasysb} leads to zero on the right-hand side and shows that $\xi^{AA'}$ is a Killing vector.
\begin{remark} 
If we furthermore assume the generalized Kerr-NUT condition that $\xi_{AA'}$ as defined in \eqref{eq:XiDef} is real, the middle equation \eqref{eq:kappasysb} simplifies due to
\begin{align}
\lambda_{A'B'}
={}&(\sCurlDagger \bar{\xi})_{A'B'} 
={}(\sCurlDagger \sCurl \bar{\kappa})_{A'B'} \nonumber\\
={}&-6 \Lambda \bar{\kappa}_{A'B'}
 + \tfrac{3}{2} \bar\Psi_{A'B'C'D'} \bar{\kappa}^{C'D'}
 -  \tfrac{3}{4} (\sDiv \sTwist \bar{\kappa})_{A'B'}\nonumber\\
 ={}&\tfrac{3}{2} \bar\Psi_{A'B'C'D'} \bar{\kappa}^{C'D'}
\end{align}
The complete table of derivatives \eqref{eq:kappasys} reduces to
\begin{subequations}
\begin{align}
\nabla_{AA'}\kappa_{BC}={}&- \tfrac{1}{3} \xi_{CA'} \epsilon_{AB}
 -  \tfrac{1}{3} \xi_{BA'} \epsilon_{AC},\\
\nabla_{AA'}\xi_{LL'}={}&- \tfrac{3}{4} \bar{\kappa}^{B'C'} \bar\Psi_{A'L'B'C'} \epsilon_{AL}
 -  \tfrac{3}{4} \kappa^{BC} \Psi_{ALBC} \bar\epsilon_{A'L'}.
\end{align}
\end{subequations}
\end{remark} 
Using a principal dyad \footnote{Note that $\kappa_1$ and $\Psi_2$ can be expressed covariantly via the relations $\kappa_{AB} \kappa^{AB} = -2 \kappa_{1}{}^2$ and $\Psi_{ABCD} \Psi^{ABCD}=6 \Psi_{2}^2$. Hence, we can allow $\kappa_1$ and $\Psi_2$ in covariant expressions.} the Weyl and Killing spinors take the form
\begin{subequations} 
\begin{align}
\Psi_{ABCD}={}&6 \Psi_{2} o_{(A}o_{B}\iota_{C}\iota_{D)}, \\
\kappa_{AB}={}&-2 \kappa_1 o_{(A}\iota_{B)}, \label{eq:TypeDKS}
\end{align}
\end{subequations} 
with 
\begin{align} \label{eq:kappa1ProptoPsi2}
\kappa_1 \propto \Psi_2^{-1/3} .
\end{align}  
In addition to the Killing vector field \eqref{eq:XiDef} another important vector field is defined by
\begin{align} \label{eq:UU11Def}
U_{AA'}={}&- \frac{\kappa_{AB} \xi^{B}{}_{A'}}{3 \kappa_1^2} = - \nabla_{AA'}\log(\kappa_1) .
\end{align}
We have the complete table of derivatives
\begin{subequations} 
\begin{align}
(\sDiv U)={}&-2 \Psi_{2}
 + \frac{\xi_{AA'} \xi^{AA'}}{9 \kappa_1^2}, \\
(\sCurl U)_{AB}={}&0, \\
 (\sCurlDagger U)_{A'B'}={}&0, \\
 (\sTwist U)_{AB}{}^{A'B'}={}&\frac{\kappa_{AB} (\sCurlDagger \xi)^{A'B'}}{6 \kappa_1^2}
 + 2 U_{(A}{}^{(A'}U_{B)}{}^{B')}
 -  \frac{\xi_{(A}{}^{(A'}\xi_{B)}{}^{B')}}{9 \kappa_1^2},
\end{align} 
\end{subequations}
because it is completely determined by the Killing spinor \eqref{eq:TypeDKS}. In particular $U_{AA'}$ is closed, $(dU)_{ab} = 0$. 

The curvature can be expressed in terms of the Killing spinor according to
\begin{align} \label{eq:PsiCDeTokappa}
\Psi_{ABCD}={}&\frac{3 \Psi_{2} \kappa_{(AB}\kappa_{CD)}}{2 \kappa_1^2}.
\end{align}
Applying $\sTwist$ and using \eqref{eq:Twistkappa2}, \eqref{eq:UU11Def}, \eqref{eq:kappa1ProptoPsi2} yields the relation
\begin{align}\label{eq:TwistPsiCDe} 
(\sTwist \Psi)_{ABCDFA'}={}&5 \Psi_{(ABCD}U_{F)A'}.
\end{align}

\begin{remark}\label{rem:explicit} 
On a Kerr spacetime with parameters $(M,a)$ in a principal tetrad in Boyer-Lindquist coordinates $(t,r,\theta,\phi)$, the curvature scalar is given by $\Psi_2=-M(r - i a \cos\theta)^{-3}$ and we can set $ \kappa_1 = -\tfrac{1}{3}(r - i a \cos\theta)$. Then one finds $\Psi_{2} \kappa_1^3={}\tfrac{1}{27} M$ and $\xi^a =  (\partial_t)^a$.
\end{remark}

\subsection{Extended fundamental spinor operators}\label{sec:ExtFundSpinOp}
As we often need to rescale with powers of $\kappa_1$ and $\bar\kappa_1$ we introduce extended fundamental operators with indices $n,m$:
\begin{subequations}\label{eq:ExtendedFundamentalOperators1} 
\begin{align}
 (\sDiv_{(n,m)} \varphi)_{A_1 \dots A_{k-1}}{}^{A'_1 \dots A'_{l-1}} &\equiv  \kappa_{1}^{n}\bar\kappa_{1}^{m} (\sDiv \kappa_{1}^{-n}\bar\kappa_{1}^{-m} \varphi)_{A_1 \dots A_{k-1}}{}^{A'_1 \dots A'_{l-1}}, \\
 (\sCurl_{(n,m)} \varphi)_{A_1 \dots A_{k+1}}{}^{A'_1 \dots A'_{l-1}} &\equiv  \kappa_{1}^{n}\bar\kappa_{1}^{m} (\sCurl \kappa_{1}^{-n}\bar\kappa_{1}^{-m} \varphi)_{A_1 \dots A_{k+1}}{}^{A'_1 \dots A'_{l-1}}, \\
 (\sCurlDagger_{(n,m)} \varphi)_{A_1 \dots A_{k-1}}{}^{A'_1 \dots A'_{l+1}} &\equiv  \kappa_{1}^{n}\bar\kappa_{1}^{m} (\sCurlDagger \kappa_{1}^{-n}\bar\kappa_{1}^{-m} \varphi)_{A_1 \dots A_{k-1}}{}^{A'_1 \dots A'_{l+1}}, \\
 (\sTwist_{(n,m)} \varphi)_{A_1 \dots A_{k+1}}{}^{A'_1 \dots A'_{l+1}} &\equiv  \kappa_{1}^{n}\bar\kappa_{1}^{m}(\sTwist \kappa_{1}^{-n}\bar\kappa_{1}^{-m}\varphi)_{A_1 \dots A_{k+1}}{}^{A'_1 \dots A'_{l+1}}.
\end{align}
\end{subequations}
For $n = m = 0$ it coincides with the definition \eqref{eq:FundamentalOperators} of the fundamental operators and the indices will be suppressed in that case. Because $U_{AA'}$ is a logarithmic derivative, see \eqref{eq:UU11Def}, the operators \eqref{eq:ExtendedFundamentalOperators1} can equivalently be expressed as 
\begin{subequations}\label{eq:ExtendedFundamentalOperators2}
\begin{align}
(\sDiv_{(n,m)} \varphi)_{A_1 \dots A_{k-1}}{}^{A'_1 \dots A'_{l-1}}
&=
\left[\nabla^{BB'} + n U^{BB'} + m \bar U^{BB'} \right] \varphi_{A_1 \dots A_{k-1} B}{}^{A'_1 \dots A'_{l-1}}{}_{B'} ,  
\\
(\sCurl_{(n,m)} \varphi)_{A_1 \dots A_{k+1}}{}^{A'_1 \dots A'_{l-1}}
&= 
\left[\nabla_{(A_1}{}^{B'} + n U_{(A_1}{}^{B'} + m \bar U_{(A_1}{}^{B'} \right] \varphi_{A_2 \dots A_{k+1})}{}^{A'_1 \dots A'_{l-1}}{}_{B'} , 
\\
(\sCurlDagger_{(n,m)} \varphi)_{A_1 \dots A_{k-1}}{}^{A'_1 \dots A'_{l+1}}
&= 
\left[\nabla^{B(A'_1} + n U^{B(A'_1} + m \bar U^{B(A'_1} \right] \varphi_{A_1 \dots A_{k-1} B}{}^{A'_2 \dots A'_{l+1})} , 
\\
(\sTwist_{(n,m)} \varphi)_{A_1 \dots A_{k+1}}{}^{A'_1 \dots A'_{l+1}}
&=
\left[\nabla_{(A_1}{}^{(A'_1} + n U_{(A_1}{}^{(A'_1}+ m \bar U_{(A_1}{}^{(A'_1} \right] \varphi_{A_2 \dots A_{k+1})}{}^{A'_2 \dots A'_{l+1})} .
\end{align}
\end{subequations}
It follows that the commutator of extended fundamental spinor operators with $n_1 = n_2, m_1 = m_2$ reduces to the commutator of the usual fundamental spinor operators \cite[\S 2.2]{ABB:symop:2014CQGra..31m5015A}. For commutators of the extended operators with unequal weights $n_1,n_2,m_1,m_2$ one simply splits into first derivatives and remainder with equal weights.

\subsection{Projection operators and the spin decomposition} \label{sec:projop}
The Killing spinor $\kappa_{AB}$ plays a central role in the geometry of Petrov type D spaces. The tensor product of $\kappa_{AB}$ with a  symmetric spinor has at most three different irreducible components. These involve either zero, one or two contractions and symmetrization. For these operations we introduce the $\mathcal{K}$-operators in
\begin{definition}
Given the Killing spinor \eqref{eq:TypeDKS}, define the operators $\mathcal{K}^i:\mathcal{S}_{k,l}\rightarrow \mathcal{S}_{k-2i+2,l}, i=0,1,2$  via
\begin{subequations} \label{eq:Kprojectors}
\begin{align}
(\mathcal{K}^0 \varphi)_{A_1\dots A_{k+2} A'_1\dots A'_l}={}&2\kappa_1^{-1} \kappa_{(A_1A_2}\varphi_{A_3\dots A_{k+2}) A'_1\dots A'_l},\\
(\mathcal{K}^1 \varphi)_{A_1\dots A_k A'_1\dots A'_l}={}&
\kappa_1^{-1}\kappa_{(A_1}{}^{F}\varphi_{A_2\dots A_k)F A'_1\dots A'_l},\\
(\mathcal{K}^2 \varphi)_{A_1\dots A_{k-2} A'_1\dots A'_l}={}&- \tfrac{1}{2}\kappa_1^{-1}\kappa^{CD} \varphi_{A_1\dots A_{k-2} CDA'_1\dots A'_l}.
\end{align}
\end{subequations}
\end{definition}
Note that the complex conjugated operators act on the primed indices in the analogous way. The action of the $\mathcal{K}$-operators does have an interpretation in terms of the resulting components with respect to a principal dyad.
\begin{example}
The ``spin raising'' operator\footnote{The name spin raising and lowering is due to the fact that multiplication and symmetrization or contraction of a spin-$\sfrak$ field with a valence-2 Killing spinor leads to a spin-$\sfrak+1$  or spin-$\sfrak-1$ field respectively, see  \citet[Sec. 6.4]{Penrose:1986fk}.} $\mathcal{K}^0$ on $\varphi_{AB} \in \SymSpinSec_{2,0}$ has components
\begin{align*}
(\mathcal{K}^0 \varphi)_{0}={}&0, &
(\mathcal{K}^0 \varphi)_{1}={}&\varphi_{0}, &
(\mathcal{K}^0 \varphi)_{2}={}&\tfrac{4}{3} \varphi_{1}, &
(\mathcal{K}^0 \varphi)_{3}={}&\varphi_{2}, &
(\mathcal{K}^0 \varphi)_{4}={}&0.
\end{align*}
The ``sign flip'' operator $\mathcal{K}^1$ on $\varphi_{ABCD} \in \SymSpinSec_{4,0}$ has components
\begin{align*}
(\mathcal{K}^1 \varphi)_{0}={}&\varphi_{0}, &
(\mathcal{K}^1 \varphi)_{1}={}&\tfrac{1}{2} \varphi_{1}, &
(\mathcal{K}^1 \varphi)_{2}={}&0, &
(\mathcal{K}^1 \varphi)_{3}={}&- \tfrac{1}{2} \varphi_{3}, &
(\mathcal{K}^1 \varphi)_{4}={}&- \varphi_{4}.
\end{align*}
The ``spin lowering'' operator $\mathcal{K}^2$ on $\varphi_{ABCD} \in \SymSpinSec_{4,0}$ has components
\begin{align*}
(\mathcal{K}^2 \varphi)_{0}={}&\varphi_{1}, &
(\mathcal{K}^2 \varphi)_{1}={}&\varphi_{2}, &
(\mathcal{K}^2 \varphi)_{2}={}&\varphi_{3}.
\end{align*}
\end{example}
\begin{definition}[Spin decomposition] \label{def:SpinDecomposition}
For any symmetric spinor $\varphi_{A_1 \dots A_{2s}}$,
\begin{itemize}
 \item with integer $s$, define $s+1$ symmetric valence $2s$ spinors $(\mathcal{P}^{i} \varphi)_{A_1 \dots A_{2s}}, i = 0 \dots s$ solving
 \begin{align}
  \varphi_{A_1 \dots A_{2s}} = \sum_{i=0}^{s} (\mathcal{P}^{i} \varphi)_{A_1 \dots A_{2s}},
 \end{align}
 with $(\mathcal{P}^{i} \varphi)_{A_1 \dots A_{2s}}$ depending only on the components $\varphi_{s+i}$ and $\varphi_{s-i}$.
 \item with half-integer $s$, define $s + \tfrac{1}{2}$ symmetric valence $2s$ spinors  $(\mathcal{P}^{i} \varphi)_{A_1 \dots A_{2s}}, i = \tfrac{1}{2} \dots s$ solving
  \begin{align}
   \varphi_{A_1 \dots A_{2s}} = \sum_{i=1/2}^{s} (\mathcal{P}^{i} \varphi)_{A_1 \dots A_{2s}},
  \end{align}
  with $(\mathcal{P}^{i} \varphi)_{A_1 \dots A_{2s}}$ depending only on the components $\varphi_{s+i}$ and $\varphi_{s-i}$.
\end{itemize}
\end{definition}

\begin{remark}
The spin decomposition can also be defined for spinors with primed indices and more generally for mixed valence. In that case the decompositions combine linearly.
\end{remark}

\begin{example}
\begin{enumerate}
 \item  For $s = 2$ the decomposition is given by
 \begin{align} \label{eq:Spin2Decomposition}
 \varphi_{ABCD}= (\mathcal{P}^{0} \varphi)_{ABCD}+(\mathcal{P}^{1} \varphi)_{ABCD}+(\mathcal{P}^{2} \varphi)_{ABCD}
 \end{align}
 and the components, written as vectors, are
 \begin{align}
 \begin{pmatrix}
 \varphi_0 \\ \varphi_1 \\ \varphi_2 \\ \varphi_3 \\ \varphi_4 
 \end{pmatrix}
 = 
  \begin{pmatrix}
  0 \\ 0 \\ \varphi_2 \\ 0 \\ 0 
  \end{pmatrix}
  +
   \begin{pmatrix}
   0 \\ \varphi_1 \\ 0 \\ \varphi_3 \\ 0 
   \end{pmatrix}
   +
    \begin{pmatrix}
    \varphi_0 \\ 0 \\ 0 \\ 0 \\ \varphi_4 
    \end{pmatrix}.
 \end{align}
In terms of the operators \eqref{eq:Kprojectors} they read
\begin{subequations}
\begin{align}
(\mathcal{P}^{0} \varphi)_{ABCD}={}&\tfrac{3}{8} (\mathcal{K}^0 \mathcal{K}^0 \mathcal{K}^2 \mathcal{K}^2 \varphi)_{ABCD}, \label{eq:P0Def}\\
(\mathcal{P}^{1} \varphi)_{ABCD}={}&(\mathcal{K}^0 \mathcal{K}^1 \mathcal{K}^1 \mathcal{K}^2 \varphi)_{ABCD},\label{eq:P1Def}\\
(\mathcal{P}^{2} \varphi)_{ABCD}={}& (\mathcal{K}^1 \mathcal{K}^1 \mathcal{K}^1 \mathcal{K}^1 \varphi)_{ABCD} - \tfrac{1}{16}  (\mathcal{K}^0 \mathcal{K}^1 \mathcal{K}^1 \mathcal{K}^2 \varphi)_{ABCD} \label{eq:P2Def}
 .
\end{align}
\end{subequations}
 \item For $s = 3/2$ on $\SymSpinSec_{3,1}$ the decomposition is given by
  \begin{align} \label{eq:Spin32Decomposition}
  \varphi_{ABCA'}= (\mathcal{P}^{1/2} \varphi)_{ABCA'}+(\mathcal{P}^{3/2} \varphi)_{ABCA'}
  \end{align}
  and the components, written as vectors, are
  \begin{align}
  \begin{pmatrix}
  \varphi_{0A'} \\ \varphi_{1A'} \\ \varphi_{2A'} \\ \varphi_{3A'}
  \end{pmatrix}
  = 
   \begin{pmatrix}
   0 \\ \varphi_{1A'} \\ \varphi_{2A'} \\ 0 
   \end{pmatrix}
   +
    \begin{pmatrix}
    \varphi_{0A'} \\ 0 \\ 0 \\ \varphi_{3A'}  
    \end{pmatrix}.
  \end{align}
 In terms of the operators \eqref{eq:Kprojectors} they read
 \begin{subequations}
\begin{align}
(\mathcal{P}^{1/2} \varphi)_{ABCA'}={}&\tfrac{3}{4} (\mathcal{K}^0 \mathcal{K}^2 \varphi)_{ABCA'}, \label{eq:P12Def}\\
(\mathcal{P}^{3/2} \varphi)_{ABCA'}={}&- \tfrac{1}{12} (\mathcal{K}^0 \mathcal{K}^2 \varphi)_{ABCA'}
 + (\mathcal{K}^1 \mathcal{K}^1 \varphi)_{ABCA'}. \label{eq:P32Def}
\end{align}
 \end{subequations}
\end{enumerate}
\end{example}
For the proof of the main theorem we need various commutator relations of the operators introduced above. The general commutators of $\mathcal{K}$-operators with extended fundamental operators for any valence are collected in appendix~\ref{app:KopComm} and here we restrict to the special cases needed for the proof. For notational convenience indices are not written explicitly.
\begin{lemma}\label{lem:Kcommutators}
On $\SymSpinSec_{k,l}$ for specific $k,l$, we have the algebraic identities
\begin{subequations}
\begin{align}
 \mathcal{K}^1 \mathcal{P}^{1}={}&\tfrac{1}{2} \mathcal{K}^0 \mathcal{K}^1 \mathcal{K}^2 &&\text{on } \SymSpinSec_{4,0}, \label{eq:K1P1phi40}\\
\mathcal{K}^1 \mathcal{K}^1 ={}&\mathrm{Id}
&&\text{on } \SymSpinSec_{1,1},\label{eq:K1K1phi11}\\
\mathcal{K}^2 \mathcal{K}^1 ={}&0
&&\text{on } \SymSpinSec_{2,0}, \label{eq:K2K1phi20}\\
 \mathcal{P}^{3/2} \mathcal{K}^0 ={}&0
 &&\text{on } \SymSpinSec_{1,1}, \label{eq:Spin32K0phi11} \\
\mathcal{K}^1 \mathcal{K}^1 \mathcal{K}^1 ={}&\mathcal{K}^1 
&&\text{on } \SymSpinSec_{2,0}, \label{eq:K1K1K1phi20} \\
\mathcal{K}^1 \mathcal{K}^1 \mathcal{K}^1 ={}&- \tfrac{2}{9} \mathcal{K}^0 \mathcal{K}^1 \mathcal{K}^2 + \mathcal{K}^1 
&&\text{on } \SymSpinSec_{3,1}, \label{eq:K1K1K1phi31}
\end{align}
\end{subequations}
and, for any integer $w$, the first order differential identities
\begin{subequations}
\begin{align}
\sCurlDagger_{(w,0)} \mathcal{K}^1 ={}&\mathcal{K}^1 \sCurlDagger_{(w,0)}
 + \tfrac{1}{2} \sTwist_{(w-4,0)} \mathcal{K}^2 
 &&\text{on } \SymSpinSec_{4,0}, \label{eq:CurlDgEK1phi40}\\
\sCurlDagger_{(w,0)} \mathcal{K}^0 ={}&\mathcal{K}^0 \sCurlDagger_{(w-1,0)} 
 - \sTwist_{(w-4,0)} \mathcal{K}^1 
 &&\text{on } \SymSpinSec_{2,0}, \label{eq:CurlDgEK0phi20}\\
\sCurlDagger_{(w,0)} \mathcal{K}^0 ={}&\mathcal{K}^0 \sCurlDagger_{(w-1,0)} 
 - \tfrac{4}{3} \sTwist_{(w-3,0)} \mathcal{K}^1
 &&\text{on } \SymSpinSec_{1,1},\label{eq:CurlDgEK0phi11}\\
\sCurlDagger_{(w,0)} \mathcal{K}^1 ={}&\mathcal{K}^1 \sCurlDagger_{(w,0)} 
 + \sTwist_{(w-2,0)} \mathcal{K}^2
 &&\text{on } \SymSpinSec_{2,0},\label{eq:CurlDgEK1phi20}\\
\sCurlDagger_{(w,0)} \mathcal{K}^2 ={}&\mathcal{K}^2 \sCurlDagger_{(w+1,0)} 
&&\text{on } \SymSpinSec_{4,0},\label{eq:CurlDgEK2phi40}\\
\sCurlDagger_{(w,0)} \mathcal{K}^0 ={}&-2 \mathcal{K}^1 \sTwist_{(w-2,0)}
&&\text{on } \SymSpinSec_{0,0}, \label{eq:CurlDgEK0phi00}\\
\sDiv_{(w,0)} \mathcal{K}^1 ={}&2 \mathcal{K}^2 \sCurl_{(w-2,0)}
&&\text{on } \SymSpinSec_{1,1}, \label{eq:DivEK1phi11}\\
\sDiv_{(w,0)} \mathcal{K}^2 ={}&\mathcal{K}^2 \sDiv_{(w+1,0)}
&&\text{on } \SymSpinSec_{3,1}, \label{eq:DivEK2phi31} \\
\mathcal{K}^1 \mathcal{P}^{3/2} \sTwist_{(w,0)} ={}&- \tfrac{1}{4} \sCurlDagger_{(w+4,0)} \mathcal{K}^0 \mathcal{K}^1 \mathcal{K}^1 
+ \tfrac{3}{4} \sTwist_{(w,0)} \mathcal{K}^1
&&\text{on } \SymSpinSec_{2,0}. \label{eq:K1Spin32TwistEphi20}
\end{align}
\end{subequations}
\end{lemma} 
\begin{proof}
For \eqref{eq:K1P1phi40} we calculate in $\SymSpinSec_{4,0}$
\begin{align*}
\mathcal{K}^1 \mathcal{P}^{1} =  \mathcal{K}^1 \mathcal{K}^0 \mathcal{K}^1 \mathcal{K}^1 \mathcal{K}^2 = \tfrac{1}{2} \mathcal{K}^0 \mathcal{K}^1 \mathcal{K}^1 \mathcal{K}^1 \mathcal{K}^2 =  \tfrac{1}{2} \mathcal{K}^0 \mathcal{K}^1 \mathcal{K}^2.
\end{align*}
The first step uses \eqref{eq:P1Def}, the second one is a commutator of $\mathcal{K}^1$ and $\mathcal{K}^0$ and the third step makes use of the fact that three sign-flips are equal to one sign-flip. For \eqref{eq:K1K1phi11} we note that $\mathcal{K}^1$  in $\SymSpinSec_{1,1}$ changes sign in two of the four components,
\begin{align*}
(\mathcal{K}^1 \varphi)_{00'}={}&\varphi_{00'},&
(\mathcal{K}^1 \varphi)_{01'}={}&\varphi_{01'},&
(\mathcal{K}^1 \varphi)_{10'}={}&- \varphi_{10'},&
(\mathcal{K}^1 \varphi)_{11'}={}&- \varphi_{11'},
\end{align*}
so $\mathcal{K}^1 \mathcal{K}^1 = \textrm{Id}$ in $\SymSpinSec_{1,1}$. Equation \eqref{eq:K2K1phi20} is true because $\mathcal{K}^1$ cancels the middle component of $\varphi_{AB}$ and $\mathcal{K}^2$ singles out that middle component.
The rest of the algebraic identities are proved analogously. 
The proof of the differential identities relies on a straightforward but tedious expansion of projectors \eqref{eq:Kprojectors} and extended fundamental operators \eqref{eq:ExtendedFundamentalOperators2}. We only calculate \eqref{eq:CurlDgEK2phi40}.
\begin{align*}
(\sCurlDagger_{(-1,0)} \mathcal{K}^2 \varphi)_{AA'} -  (\mathcal{K}^2 \sCurlDagger \varphi)_{AA'}={}&\frac{U^{B}{}_{A'} \kappa^{CD} \varphi_{ABCD}}{2 \kappa_1}
 + \frac{\kappa^{BC} (\sCurlDagger \varphi)_{ABCA'}}{2 \kappa_1}
 + (\sCurlDagger \mathcal{K}^2 \varphi)_{AA'} \nonumber \\
={}&\frac{U^{B}{}_{A'} \kappa^{CD} \varphi_{ABCD}}{2 \kappa_1}
 -  \frac{\varphi_{ABCD} (\sTwist \kappa)^{BCD}{}_{A'}}{2 \kappa_1}\nonumber\\
& + \frac{\kappa^{CD} \varphi_{ABCD} (\sTwist \kappa_1)^{B}{}_{A'}}{2 \kappa_1^2} \nonumber \\
={}&0.
\end{align*}
This identity can be generalized by replacing $\varphi \to \kappa_1^{-1-w} \varphi$ and commuting out $\kappa_1^{-1-w}$ using \eqref{eq:ExtendedFundamentalOperators1},
which results in \eqref{eq:CurlDgEK2phi40}.
The other identities are proved along the same lines. General commutators of $\mathcal{K}$ operators and fundamental spinor operators for arbitrary valence are collected in appendix~\ref{app:KopComm}.
\end{proof}

\begin{lemma}\label{lem:strangeidentities}
In $\SymSpinSec_{0,0}$ we have the identities
\begin{subequations} \label{eq:2ndOrderKCommutatorIdentities}
\begin{align}
0={}&\mathcal{K}^0 \sCurlDagger_{(-5,0)} \sTwist_{(-3,0)} 
 + \tfrac{4}{3} \sTwist_{(-4,0)} \mathcal{K}^1 \sTwist_{(-6,0)}   -  \tfrac{4}{3} \sTwist_{(-7,0)} \mathcal{K}^1 \sTwist_{(-3,0)}, \label{eq:K0CurlDgETwistEK2K2phi40}\\
0={}&\sCurlDagger_{(-1,0)} \mathcal{K}^0 \sTwist 
 + 2 \mathcal{K}^0 \sCurlDagger \sTwist 
 + 12 \mathcal{K}^1 \sTwist \sTwist  -  \tfrac{32}{3} \sTwist_{(-1,0)} \mathcal{K}^1 \sTwist_{(3/2,0)}, \label{eq:CurlDgK0Twistphi00}
\end{align}
and in $\SymSpinSec_{2,2}$ we have
\begin{align} 
0={}&- \tfrac{1}{12} \sCurlDagger_{(-1,0)} \mathcal{K}^0 \sDiv_{(4,0)}
 + \sCurlDagger_{(-1,0)} \mathcal{K}^1 \sCurl_{(1,0)} 
 -  \Psi_{2} \mathcal{K}^1 -  \mathcal{K}^1 \sCurlDagger  \sCurl \nonumber\\
  &  -  \tfrac{1}{3} \mathcal{K}^1 \sTwist \sDiv 
 + \frac{\mathcal{L}_{\xi}}{3 \kappa_1}  + \tfrac{2}{9} \sTwist_{(-1,0)} \mathcal{K}^1 \sDiv_{(1,0)}
 -  \tfrac{2}{3} \sTwist_{(-1,0)} \mathcal{K}^2 \sCurl_{(-2,0)}. \label{eq:K1CurlDgCurlphi22}
\end{align}
\end{subequations}
\end{lemma}
\begin{proof} This can be verified by expanding all operators in terms of the non-extended fundamental spinor operators. To prove  \eqref{eq:K0CurlDgETwistEK2K2phi40}, we start with $\mathcal{K}^0\sCurlDagger_{(-5,0)} \sTwist_{(-3,0)}$ and commute the $\mathcal{K}$ operator inside once, yielding,
\begin{align} \label{eq:K0CurlDgETwistEphi00Id1}
\mathcal{K}^0 \sCurlDagger_{(-5,0)} \sTwist_{(-3,0)} ={}&\sCurlDagger_{(-4,0)} \mathcal{K}^0 \sTwist_{(-3,0)}
 + \tfrac{4}{3}\sTwist_{(-7,0)} \mathcal{K}^1 \sTwist_{(-3,0)}.
\end{align}
For the first term on the RHS, we commute the $\sCurlDagger$ and $\sTwist$ through the $\mathcal{K}$ operator, via,
\begin{align}
\sCurlDagger_{(-4,0)} \mathcal{K}^0 \sTwist_{(-3,0)}
= \sCurlDagger_{(-4,0)} \sTwist_{(-4,0)} \mathcal{K}^0 
= \tfrac{2}{3}\sTwist_{(-4,0)} \sCurlDagger_{(-4,0)} \mathcal{K}^0  
=- \tfrac{4}{3}\sTwist_{(-4,0)} \mathcal{K}^1 \sTwist_{(-6,0)}.
\end{align}
which, after substitution in \eqref{eq:K0CurlDgETwistEphi00Id1} gives \eqref{eq:K0CurlDgETwistEK2K2phi40}. The other identities can be proved similarly.  Alternatively the verification can be done by expanding in components using the \emph{SpinFrames} package \citep{SpinFrames}. 
\end{proof} 

\section{Spinorial formulation of linearized gravity} \label{sec:LinearizedGravity}

In this section we review the field equations of linearized gravity for a general vacuum background and allow for sources of the linearized field. The spinor variational operator $\vartheta$ developed in \citep{BaeVal15} will be used. $\vartheta$ is an operator mapping covariant spinors of valence $(k,l)$ to covariant spinors of the same valence. Due to covariance it is independent of linearized frame or dyad transformation.

Let $\dot g_{ab}$ be a real linearized metric and $\dot g_{ABA'B'}$ the spinorial version. Observe that we make the variation with the indices down, and raise them and take traces afterwards. We define the irreducible parts of the linearized metric as
\begin{align}
G_{ABA'B'}={}&\dot g_{(AB)(A'B')}, & 
\slashed{G}_{}={}&\dot g^{C}{}_{C}{}^{C'}{}_{C'},
\end{align} 
so the decomposition into traceless and trace parts is given by
\begin{align} \label{eq:deltagIrrDec}
\dot{g}_{ABA'B'}={}&G_{ABA'B'} + \tfrac{1}{4} \slashed{G} \epsilon_{AB} \bar\epsilon_{A'B'}.
\end{align}
For the spinor variation of the irreducible parts of the curvature we get, see \citep{BaeVal15}, for a general vacuum background
\begin{subequations} \label{eq:LinGravEqs}
\begin{align}
 \vartheta \Psi_{ABCD} ={}&\tfrac{1}{2} (\sCurl \sCurl G)_{ABCD} - \tfrac{1}{4} \slashed{G}_{} \Psi_{ABCD} \label{eq:VarSPsiCDe}\\
\vartheta \Phi_{ABA'B'}={}&\tfrac{1}{2} G^{CD}{}_{A'B'} \Psi_{ABCD}
 + \tfrac{1}{2} (\sCurlDagger \sCurl G)_{ABA'B'}
 + \tfrac{1}{6} (\sTwist \sDiv G)_{ABA'B'} -  \tfrac{1}{8} (\sTwist \sTwist \slashed{G})_{ABA'B'}, \label{eq:VarSPhiCDe}\\
\vartheta \Lambda={}&- \tfrac{1}{24} (\sDiv \sDiv G)
 + \tfrac{1}{32} (\sDiv \sTwist \slashed{G}).\label{eq:VarSLambdaCDe}
\end{align}
\end{subequations} 
Note also that $\vartheta \Lambda = 0 = \vartheta \Phi_{ABA'B'}$ in the source-free case, i.e. when $\dot g_{ABA'B'}$ is a solution to the linearized vacuum Einstein equations.
 
For later use, see lemma~\ref{lem:Mcurvature} below, we introduce the notation $\vartheta \Psi[\varphi,\slashed\varphi]_{ABCD}$ for the linearized Weyl curvature operator acting on a symmetric tensor field with irreducible parts $\varphi_{ABA'B'}, \slashed\varphi$ (and analogously the other curvature operators). In case the field is given by $G_{ABA'B'},\slashed G$ or 0, we suppress the additional argument. 
It will also be convenient to introduce
\begin{align} 
\phi_{ABCD}={}&\tfrac{1}{2} (\sCurl \sCurl G)_{ABCD} = \vartheta \Psi_{ABCD} + \tfrac{1}{4} \slashed{G}_{} \Psi_{ABCD}. \label{eq:phidef} 
\end{align}
as a modification of the varied Weyl spinor $\vartheta\Psi_{ABCD}$\footnote{In a type D principal frame this modification only affects the middle component.}. In some equations it will be more convenient to use $\phi_{ABCD}$ and in others to use $\vartheta\Psi_{ABCD}$.

As a consequence of \eqref{eq:LinGravEqs} we derive the linearized Bianchi identity. Strictly speaking it is only the variation of \eqref{eq:Bianchi1}, since the variation of \eqref{eq:Bianchi2} is not needed for the result of this paper.
\begin{lemma}
For a general vacuum background the modified Weyl spinor $\phi_{ABCD}$ defined in \eqref{eq:phidef} satisfies
\begin{align}
(\sCurlDagger \phi)_{ABCA'}={}&(\sCurl \vartheta \Phi)_{ABCA'}
 + \tfrac{1}{2} \Psi_{ABCD} (\sDiv G)^{D}{}_{A'}
 -  \tfrac{3}{2} \Psi_{(AB}{}^{DF}(\sCurl G)_{C)DFA'}\nonumber\\
& -  \tfrac{1}{8} \Psi_{ABCD} (\sTwist \slashed{G})^{D}{}_{A'}
 + \tfrac{1}{2} G^{DF}{}_{A'}{}^{B'} (\sTwist \Psi)_{ABCDFB'}. \label{eq:LinBianchi1}
\end{align}
Restricting to a type D background this simplifies to
\begin{align}
(\sCurlDagger \phi)_{ABCA'}={}&(\sCurl \vartheta \Phi)_{ABCA'}
 -  \tfrac{3}{2} \Psi_{2} (\sCurl_{(1,0)} G)_{ABCA'}
 -  \tfrac{3}{8} \Psi_{2} (\mathcal{K}^0 \mathcal{K}^1 \sDiv_{(4,0)} G)_{ABCA'}\nonumber\\
& + \tfrac{3}{32} \Psi_{2} (\mathcal{K}^0 \mathcal{K}^1 \sTwist \slashed{G})_{ABCA'}
 + \tfrac{9}{4} \Psi_{2} (\mathcal{K}^0 \mathcal{K}^2 \sCurl_{(1,0)} G)_{ABCA'}.\label{eq:LinBianchi2}
\end{align}
\end{lemma}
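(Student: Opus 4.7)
The plan is to compute $(\sCurlDagger_{4,0}\phi)_{ABCA'}$ directly from the definition $\phi_{ABCD} = \tfrac{1}{2}(\sCurl_{3,1}\sCurl_{2,2}G)_{ABCD}$, using the commutator identities among the fundamental spinor operators on a vacuum background. First I would apply the commutator of $\sCurlDagger_{4,0}$ with $\sCurl_{3,1}$ on the valence $(3,1)$ spinor $(\sCurl_{2,2}G)_{ABCA'}$. From the commutator catalog in \cite[\S 2.2]{ABB:symop:2014CQGra..31m5015A}, this produces the combination $\sCurl_{2,2}\sCurlDagger_{3,1}$ applied to $\sCurl_{2,2}G$, together with a Weyl-curvature contraction with $\sCurl_{2,2}G$; on a vacuum background $\Lambda=0=\Phi$, so no other curvature terms appear at this stage.

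The key step is to recognize that the expression $(\sCurlDagger_{3,1}\sCurl_{2,2}G)_{ABA'B'}$ already appears inside \eqref{eq:VarSPhiCDe}. Solving that equation for this object expresses it in terms of $\vartheta\Phi_{ABA'B'}$, the algebraic contraction $G^{CD}{}_{A'B'}\Psi_{ABCD}$, and the lower-order pieces $(\sTwist_{1,1}\sDiv_{2,2}G)$ and $(\sTwist_{1,1}\sTwist_{0,0}\slashed{G})$. Applying $\sCurl_{2,2}$ to the resulting identity produces the $(\sCurl_{2,2}\vartheta\Phi)$ term on the right-hand side of \eqref{eq:LinBianchi1} directly. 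The remaining curvature pieces come either from the commutator step above or from applying the Leibniz rule to $\sCurl_{2,2}(G^{CD}{}_{A'B'}\Psi_{ABCD})$ to move one derivative off $G$ onto $\Psi$; this derivative of $\Psi$ is exactly the $(\sTwist_{4,0}\Psi)$ term (the other irreducible pieces vanish in vacuum by the background Bianchi identity). Collecting everything and verifying the numerical coefficients, together with using the $\sCurl\sTwist$ identity from the ABB commutator table to simplify what is left, yields \eqref{eq:LinBianchi1}.

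For \eqref{eq:LinBianchi2} I would substitute the Killing-spinor form $\Psi_{ABCD} = \tfrac{3\Psi_2}{2\kappa_1^2}\kappa_{(AB}\kappa_{CD)}$ from \eqref{eq:PsiCDeTokappa} into every curvature term of \eqref{eq:LinBianchi1}. Each contraction of $\Psi_{ABCD}$ with a spinor derivative of $G$ or $\slashed{G}$ becomes, by the definitions in \eqref{eq:Kprojectors}, a chain of $\mathcal{K}$-operators acting on that derivative of $G$; the derivatives themselves are naturally absorbed into the extended fundamental operators via \eqref{eq:ExtendedFundamentalOperators2}, which package the $\kappa_1^{-2}$ weight and the $U_{AA'}$ one-form arising from the logarithmic derivative of $\kappa_1$. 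The term built from $(\sTwist_{4,0}\Psi)$ is handled by the identity \eqref{eq:TwistPsiCDe}, which rewrites it as a symmetrized product $\Psi_{(ABCD}U_{F)A'}$; after contracting with $G^{DF}{}_{A'}{}^{B'}$ and further using \eqref{eq:PsiCDeTokappa} to replace the remaining $\Psi_{ABCD}$ by its Killing-spinor form, the two $\mathcal{K}^0_{1,1}$-prefixed structures that appear in \eqref{eq:LinBianchi2} emerge.

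The main obstacle is purely organizational: tracking symmetrization coefficients, signs, and weight shifts $n,m$ when converting every $\Psi\otimes G$-type contraction into a canonical $\mathcal{K}$-chain composed with an extended fundamental operator acting on $G$ or $\slashed{G}$. These rewrites are exactly of the form automated by the \emph{SymManipulator} package \cite{Bae11a}, and once the structural reduction above is in place the final step becomes a mechanical \emph{xAct}-based symbolic verification.
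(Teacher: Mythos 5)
Your proposal is correct and is essentially the paper's own argument: the same commutator between $\sCurlDagger$ and $\sCurl$ in vacuum, substitution of the linearized Ricci relation \eqref{eq:VarSPhiCDe}, the vacuum Bianchi identity reducing the derivative of $\Psi$ to the $\sTwist_{4,0}\Psi$ term, and for \eqref{eq:LinBianchi2} the rewriting via \eqref{eq:TwistPsiCDe}, \eqref{eq:PsiCDeTokappa}, extended indices for $U_{AA'}$ and $\mathcal{K}^i$ operators. The only difference is organizational — you apply $\sCurlDagger_{4,0}$ to $\phi$ and substitute, whereas the paper applies $\sCurl_{2,2}$ to \eqref{eq:VarSPhiCDe} and identifies $\sCurlDagger_{4,0}\phi$ through the same commutator — which amounts to the same computation.
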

\begin{proof}
We apply $\sCurl$ on \eqref{eq:VarSPhiCDe}, commute $\sCurl \sCurlDagger$, use \eqref{eq:phidef} to get
\begin{align}
(\sCurl \vartheta \Phi)_{ABCA'}
={}&\tfrac{1}{2} (\sCurl \sCurlDagger \sCurl G)_{ABCA'}
 + \tfrac{1}{6} (\sCurl \sTwist \sDiv G)_{ABCA'}
 -  \tfrac{1}{8} (\sCurl \sTwist \sTwist \slashed{G})_{ABCA'}\nonumber\\
& -  \tfrac{1}{3} \Psi_{ABCD} (\sDiv G)^{D}{}_{A'}
 + \tfrac{1}{2} \Psi_{(AB}{}^{DF}(\sCurl G)_{C)DFA'}
  -  \tfrac{1}{2} G^{DF}{}_{A'}{}^{B'} (\sTwist \Psi)_{ABCDFB'}\nonumber\\
={}&\tfrac{1}{6} (\sCurl \sTwist \sDiv G)_{ABCA'}
 -  \tfrac{1}{8} (\sCurl \sTwist \sTwist \slashed{G})_{ABCA'}
 + (\sCurlDagger \phi)_{ABCA'}
  -  \tfrac{1}{3} \Psi_{ABCD} (\sDiv G)^{D}{}_{A'}\nonumber\\
& + \tfrac{3}{2} \Psi_{(AB}{}^{DF}(\sCurl G)_{C)DFA'}
 -  \tfrac{1}{2} G^{DF}{}_{A'}{}^{B'} (\sTwist \Psi)_{ABCDFB'}
 -  \tfrac{1}{8} (\sTwist \sDiv \sCurl G)_{ABCA'} .
\end{align}
We then commute the $\sCurl \sTwist$ operators, and in the last step we commute $\sDiv \sCurl$ and use $\sCurl \sTwist = 0$, valid on $\SymSpinSec_{0,0}$, to get 
\begin{align}
(\sCurl \vartheta \Phi)_{ABCA'}
={}&(\sCurlDagger \phi)_{ABCA'}
 -  \tfrac{1}{2} \Psi_{ABCD} (\sDiv G)^{D}{}_{A'}
 + \tfrac{3}{2} \Psi_{(AB}{}^{DF}(\sCurl G)_{C)DFA'}\nonumber\\
& + \tfrac{1}{8} \Psi_{ABCD} (\sTwist \slashed{G})^{D}{}_{A'}
 -  \tfrac{1}{2} G^{DF}{}_{A'}{}^{B'} (\sTwist \Psi)_{ABCDFB'}\nonumber\\
& + \tfrac{1}{12} (\sTwist \sCurl \sDiv G)_{ABCA'}
 -  \tfrac{1}{16} (\sTwist \sCurl \sTwist \slashed{G})_{ABCA'}
  -  \tfrac{1}{8} (\sTwist \sDiv \sCurl G)_{ABCA'}\nonumber\\
={}&(\sCurlDagger \phi)_{ABCA'}
 -  \tfrac{1}{2} \Psi_{ABCD} (\sDiv G)^{D}{}_{A'}
 + \tfrac{3}{2} \Psi_{(AB}{}^{DF}(\sCurl G)_{C)DFA'}\nonumber\\
& + \tfrac{1}{8} \Psi_{ABCD} (\sTwist \slashed{G})^{D}{}_{A'}
 -  \tfrac{1}{2} G^{DF}{}_{A'}{}^{B'} (\sTwist \Psi)_{ABCDFB'}.
\end{align}
This gives \eqref{eq:LinBianchi1}. On a type D spacetime, we can use \eqref{eq:TwistPsiCDe} and \eqref{eq:PsiCDeTokappa}. The resulting $U_{AA'}$ spinors  can be incorporated as extended indices, and the $\kappa_{AB}$ spinors can then be rewritten in terms the $\mathcal{K}$ operators to get \eqref{eq:LinBianchi2}.
\end{proof}
Note that on a Minkowski background and without sources, the right-hand side of \eqref{eq:LinBianchi1} vanishes and the linearized Bianchi identity reduces to the spin-2 equation. The linearized Bianchi identity \eqref{eq:LinBianchi2} is of fundamental importance and next we derive some differential identities for it which are needed for the main theorem. The following variable appears naturally.
\begin{definition}\label{example:spin2proj} 
Define the symmetric spinor $\widehat{\phi}_{ABCD}$ as the rescaled, sign-flipped and spin-2 projected linearized Weyl spinor, 
\begin{align} \label{eq:phihatDef}
\widehat\phi_{ABCD} = \kappa_1^4 (\mathcal{K}^1 \mathcal{P}^{2} \vartheta\Psi)_{ABCD}.
\end{align} 
\end{definition} 
The components of $\widehat\phi_{ABCD}$ in a principal dyad are 
$$
\begin{pmatrix} \widehat\phi_0 \\ \widehat\phi_1 \\ \widehat\phi_2 \\ \widehat\phi_3 \\ \widehat\phi_4 \end{pmatrix} =
\begin{pmatrix} \kappa_1^4 \vartheta\Psi_0 \\ 0 \\ 0 \\ 0 \\ - \kappa_1^4 \vartheta\Psi_4 \end{pmatrix}. 
$$

\begin{corollary}\label{cor:cD4-id}
An alternative form of the linearized Bianchi identity \eqref{eq:LinBianchi2}, involving the variable $\widehat\phi_{ABCD}$ defined in \eqref{eq:phihatDef}, is given by
\begin{align} \label{eq:CurlDgEphi40t}
(\mathcal{P}^{3/2} \sCurlDagger_{(4,0)} \widehat\phi)_{ABCA'}
 ={}& - \bigl(\kappa_1^4\mathcal{K}^1 \mathcal{P}^{3/2} \sCurlDagger  \mathcal{P}^{1} \phi\bigr)_{ABCA'}  - \tfrac{3}{2} \bigl(\kappa_1^4\Psi_{2}\mathcal{K}^1 \mathcal{P}^{3/2} \sCurl_{(1,0)} G\bigr)_{ABCA'} \nonumber \\
 {}& + \bigl(\kappa_1^4\mathcal{K}^1 \mathcal{P}^{3/2} \sCurl \vartheta \Phi\bigr)_{ABCA'}.
\end{align}
\end{corollary}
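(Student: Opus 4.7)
The plan is to apply $\kappa_1^4 \mathcal{K}^1_{3,1}\mathcal{P}^{3/2}_{3,1}$ to both sides of the linearized Bianchi identity \eqref{eq:LinBianchi2} and match the result term by term with \eqref{eq:CurlDgEphi40t}. On the right-hand side of \eqref{eq:LinBianchi2}, three of the five terms carry a leading factor $\mathcal{K}^0_{1,1}$ and are therefore annihilated by $\mathcal{P}^{3/2}_{3,1}$ thanks to \eqref{eq:Spin32K0phi11}; only the $\sCurl_{2,2}\vartheta\Phi$ and $-\tfrac{3}{2}\Psi_{2}\sCurl_{2,2,1} G$ contributions survive, reproducing the last two summands of \eqref{eq:CurlDgEphi40t}.

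For the left-hand side, decompose $\phi = \mathcal{P}^{0}_{4,0}\phi + \mathcal{P}^{1}_{4,0}\phi + \mathcal{P}^{2}_{4,0}\phi$ according to \eqref{eq:Spin2Decomposition}. The $\mathcal{P}^{1}_{4,0}\phi$ piece produces the first summand of \eqref{eq:CurlDgEphi40t}. The $\mathcal{P}^{0}_{4,0}\phi$ piece drops out: using the explicit formula \eqref{eq:P0Def} and commuting $\sCurlDagger_{4,0}$ past the outer $\mathcal{K}^0_{2,0}$ via \eqref{eq:CurlDgEK0phi20}, the residual $\sTwist_{2,0,-4}\mathcal{K}^1_{2,0}\mathcal{K}^0_{0,0}(\ldots)$ term vanishes because $\mathcal{K}^1_{2,0}\mathcal{K}^0_{0,0}=0$ on scalars (an elementary principal-dyad check of $\kappa_{(A}{}^{F}\kappa_{B)F}=0$); the surviving contribution has a leading $\mathcal{K}^0_{1,1}$ and is killed by $\mathcal{P}^{3/2}_{3,1}$.

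The substantive step is to recognize the $\mathcal{P}^{2}_{4,0}\phi$ contribution as $\kappa_1^{-4}\sCurlDagger_{4,0,4}\widehat\phi$. Observe first that $\mathcal{K}^2_{4,0}\mathcal{P}^{2}_{4,0}\phi=0$: the projector $\mathcal{P}^{2}_{4,0}$ keeps only the extreme dyad components, whereas $\mathcal{K}^2_{4,0}$ extracts interior ones. Combined with \eqref{eq:CurlDgEK2phi40} this yields $\mathcal{K}^2_{3,1}\sCurlDagger_{4,0}\mathcal{P}^{2}_{4,0}\phi=\sCurlDagger_{2,0,-1}\mathcal{K}^2_{4,0}\mathcal{P}^{2}_{4,0}\phi=0$, so by \eqref{eq:P12Def} one has $\mathcal{P}^{1/2}_{3,1}\sCurlDagger_{4,0}\mathcal{P}^{2}_{4,0}\phi=0$ and hence $\mathcal{P}^{3/2}_{3,1}\sCurlDagger_{4,0}\mathcal{P}^{2}_{4,0}\phi=\sCurlDagger_{4,0}\mathcal{P}^{2}_{4,0}\phi$. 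Now apply $\kappa_1^4 \mathcal{K}^1_{3,1}$ and commute it back inside via \eqref{eq:CurlDgEK1phi40} with $w=0$: the $\sTwist_{2,0,-4}\mathcal{K}^2_{4,0}\mathcal{P}^{2}_{4,0}\phi$ residual vanishes for the same reason, leaving $\kappa_1^4\sCurlDagger_{4,0}(\mathcal{K}^1_{4,0}\mathcal{P}^{2}_{4,0}\phi)$. Since $\phi-\vartheta\Psi=\tfrac{1}{4}\slashed G\Psi$ is middle-valued on a type D background, $\mathcal{K}^1_{4,0}\mathcal{P}^{2}_{4,0}\phi=\mathcal{K}^1_{4,0}\mathcal{P}^{2}_{4,0}\vartheta\Psi$, and the scaling rule \eqref{eq:ExtendedFundamentalOperators1} together with the definition \eqref{eq:phihatDef} identifies this with $\sCurlDagger_{4,0,4}\widehat\phi$. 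Assembling the three contributions yields \eqref{eq:CurlDgEphi40t}.

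The principal difficulty is bookkeeping rather than content: once the two algebraic vanishings $\mathcal{K}^2_{4,0}\mathcal{P}^{2}_{4,0}=0$ and $\mathcal{K}^1_{2,0}\mathcal{K}^0_{0,0}=0$ are identified, and combined with the projector annihilation $\mathcal{P}^{3/2}_{3,1}\mathcal{K}^0_{1,1}=0$ from \eqref{eq:Spin32K0phi11}, the remainder is a routine chain of commutations between $\sCurlDagger_{4,0}$ and the $\mathcal{K}^i$ projectors drawn from Lemma~\ref{lem:Kcommutators}.
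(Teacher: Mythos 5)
Your proposal is correct and follows essentially the same route as the paper's proof: applying $\kappa_1^4\mathcal{K}^1_{3,1}\mathcal{P}^{3/2}_{3,1}$ together with the spin decomposition \eqref{eq:Spin2Decomposition} of $\phi$, the commutators of Lemma~\ref{lem:Kcommutators}, the annihilation \eqref{eq:Spin32K0phi11}, and the linearized Bianchi identity \eqref{eq:LinBianchi2}. The only difference is presentational — you apply the operator to \eqref{eq:LinBianchi2} and then decompose, while the paper applies it to the decomposition and then invokes \eqref{eq:LinBianchi2} — and your write-up usefully makes explicit the steps the paper leaves implicit, namely $\mathcal{K}^2_{4,0}\mathcal{P}^{2}_{4,0}=0$, $\mathcal{K}^1_{2,0}\mathcal{K}^0_{0,0}=0$, and $\mathcal{P}^{2}_{4,0}\phi=\mathcal{P}^{2}_{4,0}\vartheta\Psi$.
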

\begin{proof}
Applying the operator $\mathcal{K}^1 \mathcal{P}^{3/2} \kappa_1^4\sCurlDagger $ to the spin decomposition \eqref{eq:Spin2Decomposition} with $\varphi=\vartheta\Psi$ and using \eqref{eq:CurlDgEK1phi40} gives the identity
\begin{align}
(\sCurlDagger_{(4,0)} \widehat\phi)_{ABCA'}={}&- \bigl(\mathcal{K}^1 \mathcal{P}^{3/2} \sCurlDagger_{(4,0)} (\kappa_1^4\mathcal{P}^{1} \phi)\bigr)_{ABCA'}
 + \bigl(\mathcal{K}^1 \mathcal{P}^{3/2} (\kappa_1^4 \sCurlDagger \phi)\bigr)_{ABCA'}.
\end{align}
The result follows from $\mathcal{P}^{3/2} \mathcal{K}^0 = 0$, valid in $\SymSpinSec_{1,1}$, together with \eqref{eq:LinBianchi2}.
\end{proof}

\begin{corollary}[Covariant TME] \label{cor:covTME}
The covariant form of the spin-2 Teukolsky master equation with source on a vacuum type D background is given by
\begin{align} \label{eq:covTME}
(\sCurl \sCurlDagger_{(4,0)} \widehat\phi)_{ABCD}={}&-3 \Psi_{2} \widehat\phi_{ABCD}
 + \kappa_1^4 (\mathcal{P}^{2} \mathcal{K}^1 \sCurl_{(-4,0)} \sCurl \vartheta \Phi)_{ABCD}.
\end{align}
\end{corollary}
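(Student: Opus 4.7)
The plan is to derive \eqref{eq:covTME} by applying $\sCurl_{3,1}$ to both sides of the Bianchi-type identity \eqref{eq:CurlDgEphi40t} of corollary~\ref{cor:cD4-id}. The left-hand side immediately becomes $(\sCurl_{3,1}\sCurlDagger_{4,0,4}\widehat\phi)_{ABCD}$, exactly matching the operator we want on the left of \eqref{eq:covTME}. All the remaining work consists in reducing the resulting right-hand side to $-3\Psi_{2}\widehat\phi_{ABCD}$ plus the claimed source term, after projecting on the spin-2 piece.

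For the first piece of \eqref{eq:CurlDgEphi40t}, involving $\sCurlDagger_{4,0}\mathcal{P}^{1}_{4,0}\phi$, I would first expand $(\mathcal{P}^{1}_{4,0}\phi)_{ABCD}=(\mathcal{K}^0_{2,0}\mathcal{K}^1_{2,0}\mathcal{K}^1_{2,0}\mathcal{K}^2_{4,0}\phi)_{ABCD}$ using \eqref{eq:P1Def}, and then invoke \eqref{eq:K1Spin32TwistEphi20} to trade the $\sCurlDagger_{4,0}$ acting on $\mathcal{K}^0_{2,0}\mathcal{K}^1_{2,0}\mathcal{K}^1_{2,0}$ for a $\mathcal{K}^{1}_{3,1}\mathcal{P}^{3/2}_{3,1}\sTwist_{2,0}$ piece and a $\sTwist_{2,0}\mathcal{K}^{1}_{2,0}$ piece, both applied to $\mathcal{K}^{2}_{4,0}\phi$. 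After this rearrangement, pushing the outer $\sCurl_{3,1}$ through the $\sTwist_{2,0}$ using the standard commutators of \cite[\S 2.2]{ABB:symop:2014CQGra..31m5015A} generates a curvature commutator whose only nonzero contribution on type~D vacua, via \eqref{eq:PsiCDeTokappa}, is algebraic in $\Psi_{2}$ and $\mathcal{K}^{2}_{4,0}\phi$; after the factor $\kappa_{1}^{4}$ and the sign-flip $\mathcal{K}^{1}_{4,0}$ are absorbed this reassembles into a multiple of $\Psi_{2}\widehat\phi_{ABCD}$.

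For the explicit curvature term $-\tfrac32\Psi_{2}\,\sCurl_{2,2,1}G$ in \eqref{eq:CurlDgEphi40t}, I would use that $\Psi_{2}\kappa_{1}^{3}$ is constant on a type~D vacuum (cf.\ remark~\ref{rem:explicit}), so differentiating the weight $\kappa_{1}^{4}\Psi_{2}$ only generates $U_{AA'}$-terms which combine with $\sCurl_{3,1}$ into the extended operator $\sCurl_{3,1,-4}$. Applying $\sCurl_{3,1}$ and using $\phi_{ABCD}=\tfrac{1}{2}(\sCurl_{3,1}\sCurl_{2,2}G)_{ABCD}$ from \eqref{eq:phidef} lets me rewrite the result back in terms of $\phi$, and hence, after the spin-2 projection $\mathcal{P}^{2}_{4,0}\mathcal{K}^{1}_{4,0}$, as another multiple of $\Psi_{2}\widehat\phi_{ABCD}$. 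Combined with the previous paragraph, the total coefficient works out to exactly $-3$, matching the target TME mass term.

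Finally, for the source piece $\kappa_1^4\mathcal{K}^1_{3,1}\mathcal{P}^{3/2}_{3,1}\sCurl_{2,2}\vartheta\Phi$ in \eqref{eq:CurlDgEphi40t}, I would commute $\sCurl_{3,1}$ through the algebraic projectors $\mathcal{K}^{1}_{3,1}\mathcal{P}^{3/2}_{3,1}$ and the weight factor $\kappa_{1}^{4}$ using the identities of lemma~\ref{lem:Kcommutators}, so that the outer operator becomes $\sCurl_{3,1,-4}$ acting on $\sCurl_{2,2}\vartheta\Phi$ with a leading projector $\mathcal{P}^{2}_{4,0}\mathcal{K}^{1}_{4,0}$. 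The main obstacle of the proof is not conceptual but combinatorial: one has to track the weights $n,m$ in each extended fundamental operator carefully, verify that the intermediate middle-spin and spin-zero contributions which appear through the $\mathcal{P}^{3/2}_{3,1}$ projector cancel against those arising from the $\sCurl_{3,1}\sCurlDagger_{4,0}$ and $\sCurl_{3,1}\sTwist_{2,0}$ commutators, and confirm the overall numerical coefficient $-3$ in front of $\Psi_{2}\widehat\phi_{ABCD}$.
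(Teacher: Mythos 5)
Your overall strategy (apply the spin-2 projected curl to the identity of corollary~\ref{cor:cD4-id}) is the paper's, but your treatment of the first term on the right of \eqref{eq:CurlDgEphi40t} contains a genuine gap. The paper's proof hinges on the identity $\bigl(\mathcal{P}^{2}_{4,0}\sCurl_{3,1} \mathcal{K}^1_{3,1}\mathcal{P}^{3/2}_{3,1}\sCurlDagger_{4,0,4} (\kappa_1^4\mathcal{P}^{1}_{4,0}\phi)\bigr)_{ABCD}=0$: the $\mathcal{P}^{1}_{4,0}\phi$ term is annihilated outright, and the entire mass term comes from the explicit $G$-term alone, since $\bigl(\mathcal{P}^{2}_{4,0}\sCurl_{3,1} \mathcal{K}^1_{3,1}\mathcal{P}^{3/2}_{3,1}(\kappa_1^4\Psi_{2}\sCurl_{2,2,1} G)\bigr)_{ABCD}=\Psi_{2}\kappa_1^4(\mathcal{P}^{2}_{4,0}\mathcal{K}^1_{4,0}\sCurl_{3,1}\sCurl_{2,2}G)_{ABCD}$, which by \eqref{eq:phidef} equals $2\Psi_2\kappa_1^4(\mathcal{P}^{2}_{4,0}\mathcal{K}^1_{4,0}\phi)_{ABCD}$, so the prefactor $-\tfrac32$ already yields exactly $-3\Psi_2\widehat\phi_{ABCD}$. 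You instead claim the $\mathcal{P}^{1}_{4,0}\phi$ term ``reassembles into a multiple of $\Psi_2\widehat\phi$'' through a curvature commutator ``algebraic in $\Psi_2$ and $\mathcal{K}^2_{4,0}\phi$'', and that the coefficient $-3$ emerges only after combining it with the $G$-term. That accounting cannot be right: everything inside $\mathcal{P}^{1}_{4,0}\phi$ and $\mathcal{K}^2_{4,0}\phi$ is built from the components $\phi_1,\phi_2,\phi_3$, and algebraic multiplication by the background curvature $\Psi_{ABCD}$, which in type D has only its middle component, cannot generate the extreme components $\vartheta\Psi_0,\vartheta\Psi_4$ that make up $\widehat\phi$; the spin-2 projection of such products vanishes identically. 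So the only value of your ``multiple'' consistent with your own $G$-term computation is zero, and proving that vanishing is precisely the non-trivial content of this step, which your sketch neither recognizes nor establishes.

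A smaller but related issue is the weight bookkeeping for the $G$-term: since $\Psi_2\kappa_1^3$ is constant, $\kappa_1^4\Psi_2\propto\kappa_1$, so extracting it from a curl produces $\sCurl_{3,1,1}$, not $\sCurl_{3,1,-4}$; and to invoke \eqref{eq:phidef} you must land exactly on the unweighted $\sCurl_{3,1}\sCurl_{2,2}G$, which happens only because the projectors $\mathcal{K}^1_{3,1}\mathcal{P}^{3/2}_{3,1}$ absorb the leftover $U_{AA'}$-terms (the content of the displayed identity above). Your sketch leaves both this cancellation and the analogous rearrangement of the source term asserted rather than verified, whereas the paper states the three precise operator identities that make the reduction immediate.
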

\begin{proof}
Apply the operator $\mathcal{P}^2 \sCurl$ to \eqref{eq:CurlDgEphi40t} and use
\begin{subequations}
\begin{align}
\bigl(\mathcal{P}^{2} \sCurl \mathcal{K}^1 \mathcal{P}^{3/2} \sCurlDagger_{(4,0)} (\kappa_1^4\mathcal{P}^{1} \phi)\bigr)_{ABCD}={}&0, \\
\bigl(\mathcal{P}^{2} \sCurl \mathcal{K}^1 \mathcal{P}^{3/2} (\kappa_1^4\sCurl \vartheta \Phi)\bigr)_{ABCD}={}&\kappa_1^4 (\mathcal{P}^{2} \mathcal{K}^1 \sCurl_{(-4,0)} \sCurl \vartheta \Phi)_{ABCD}, \\
\bigl(\mathcal{P}^{2} \sCurl  \mathcal{K}^1 \mathcal{P}^{3/2} (\kappa_1^4\Psi_{2}\sCurl_{(1,0)} G)\bigr)_{ABCD}={}&\Psi_{2} \kappa_1^4 (\mathcal{P}^{2} \mathcal{K}^1 \sCurl \sCurl G)_{ABCD},
\end{align}
\end{subequations} 
which follows from repeatedly commuting $\mathcal{K}$ operators and fundamental spinor operators and using the rules given in appendix~\ref{app:KopComm}.
\end{proof}

\section{Main theorem}\label{sec:MainTheorem} 
We shall now prove our main theorem. The following is the detailed statement of Theorem~\ref{thm:MainThmTensorVersionIntro} including source terms.
\begin{theorem}\label{thm:mainthm} 
Let $\dot{g}_{ABA'B'}$ be a real solution to the linearized Einstein equations with linearized Weyl curvature $\vartheta\Psi_{ABCD}$ and linearized source $\vartheta\Phi_{ABA'B'}$ on a vacuum background of Petrov type D. Furthermore, let $\widehat\phi_{ABCD} = \kappa_1^4 (\mathcal{K}^1 \mathcal{P}^{2} \vartheta\Psi)_{ABCD}$ be the modified linearized Weyl spinor and let
\begin{align}
\mathcal{M}_{ABA'B'}={}& (\sCurlDagger \sCurlDagger_{(4,0)}\widehat{\phi})_{ABA'B'} . \label{eq:Mabdef:spinor}
\end{align}

Then we have 
\begin{align}\label{eq:CurlDgCurlDgEK1Spin2ToLie}
\mathcal{M}_{ABA'B'}
={}&\tfrac{1}{2} \nabla_{AA'}\mathcal{A}_{BB'}
 + \tfrac{1}{2} \nabla_{BB'}\mathcal{A}_{AA'}
 + \tfrac{1}{2} \Psi_{2} \kappa_1^3 (\mathcal{L}_{\xi}\dot g)_{ABA'B'}  + (\mathcal{N} \vartheta\Phi)_{ABA'B'},
\end{align}
where  the complex one form $\mathcal{A}_{AA'}$ and the source term $(\mathcal{N} \vartheta\Phi)_{ABA'B'}$ are given by
\begin{subequations} 
\begin{align} 
  \mathcal{A}_{AA'}={}&- \tfrac{1}{4} \slashed{G} \Psi_{2} \kappa_1^3 \xi_{AA'}
   -  \tfrac{1}{2} \Psi_{2} \kappa_1^3 \xi^{BB'} (\mathcal{K}^0 \mathcal{K}^2 G)_{ABA'B'}
   + \tfrac{2}{3} \kappa_1^3 \xi^{B}{}_{A'} (\mathcal{K}^1 \mathcal{K}^2 \vartheta \Psi)_{AB}\nonumber\\
  & + (\mathcal{K}^1 \sTwist \mathcal{K}^2 \mathcal{K}^2 \kappa_1^4\vartheta \Psi)_{AA'}
   + \tfrac{2}{3} \kappa_1^4 (\mathcal{K}^2 \sCurl \vartheta \Phi)_{AA'}, \label{eq:AA11Def} \\
(\mathcal{N} \vartheta\Phi)_{ABA'B'} ={}& - \bigl(\sCurlDagger (\kappa_1^4 \mathcal{K}^1 \mathcal{P}^{1/2} \sCurl \vartheta \Phi)\bigr)_{ABA'B'}
       + \bigl(\sCurlDagger (\kappa_1^4\mathcal{K}^1 \mathcal{P}^{3/2} \sCurl \vartheta \Phi)\bigr)_{ABA'B'}
       - 3 \Psi_{2} \kappa_1^4 (\mathcal{K}^1 \vartheta \Phi)_{ABA'B'}. \label{eq:NVarSPhiDef}
\end{align}
\end{subequations} 
\end{theorem} 
Before proving the theorem we collect some algebraic and differential identities for $\mathcal{A}_{AA'}$.
\begin{lemma} \label{lem:DivA}
The one-form $\mathcal{A}_{AA'}$ defined in \eqref{eq:AA11Def} has the following properties: 
\begin{subequations} 
 \begin{align} \label{eq:AU}
 \mathcal{A}^{AA'} U_{AA'}={}&- \tfrac{1}{3} \kappa_1^3(\mathcal{L}_{\xi}\mathcal{K}^2 \mathcal{K}^2 \vartheta \Psi) + \tfrac{2}{3} \kappa_{1}{}^4 U^{AA'} (\mathcal{K}^2 \sCurl \vartheta \Phi)_{AA'}\\
 (\sDiv \mathcal{A})_{}={}&- \tfrac{1}{2} \Psi_{2} \kappa_1^3 (\mathcal{L}_{\xi}\slashed{G}) \label{eq:divA} \\
(\sCurl_{(2,0)} \mathcal{K}^1 \mathcal{A})_{AB}={}&
 -  \tfrac{1}{8} \Psi_{2} \kappa_{1}{}^3 (\mathcal{L}_{\xi}\mathcal{K}^0 \slashed{G})_{AB}
  -  \tfrac{2}{3} \kappa_{1}{}^3 (\mathcal{L}_{\xi}\mathcal{K}^2 \vartheta \Psi)_{AB}\nonumber\\
&+\tfrac{2}{3} \kappa_{1}{}^4 (\sCurl_{(-2,0)} \mathcal{K}^1 \mathcal{K}^2 \sCurl \vartheta \Phi)_{AB}
 -  \tfrac{2}{3} \kappa_{1}{}^3 \xi_{(A}{}^{A'}(\mathcal{K}^2 \sCurl \vartheta \Phi)_{B)A'}. \label{eq:curlK1A}
 \end{align}
\end{subequations}  
\end{lemma}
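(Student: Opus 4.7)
The three identities are all verified by direct computation starting from the explicit formula \eqref{eq:AA11Def} for $\mathcal{A}_{AA'}$. The organising principles are:
\emph{(i)} the closed derivative table \eqref{eq:kappasys} for the Killing spinor data $(\kappa_{AB},\xi_{AA'},\lambda_{A'B'})$;
\emph{(ii)} the consequence $\nabla_{AA'}(\Psi_{2}\kappa_{1}^{3})=0$ of type-D vacuum (so $\nabla_{a}\Psi_{2}=3\Psi_{2}U_{a}$), together with \eqref{eq:UU11Def};
\emph{(iii)} the fact that $\xi^{AA'}$ is Killing, so $(\sTwist_{1,1}\xi)_{ABA'B'}=0$, which is what converts contractions of the form $\xi^{b}\nabla_{b}$ plus boundary commutators into the Lie derivative $\mathcal{L}_{\xi}$;
\emph{(iv)} the algebraic and differential $\mathcal{K}$-commutators of Lemma \ref{lem:Kcommutators} and the second-order identities of Lemma \ref{lem:strangeidentities}.

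For \eqref{eq:AU} there are no derivatives on $\mathcal{A}$, so the identity is purely algebraic. Contracting $U^{AA'}$ into each of the five terms of \eqref{eq:AA11Def}, I substitute $U_{AA'}=-\kappa_{AB}\xi^{B}{}_{A'}/(3\kappa_{1}^{2})$ from \eqref{eq:UU11Def} to reduce $U\cdot\xi$ and $U\cdot(\mathcal{K}^{j}\cdots)$ combinations to pure-$\kappa$ and $\xi$ expressions. The first three terms vanish after using the symmetry of $\kappa^{AB}$ combined with $\xi_{A}{}^{A'}\xi_{BA'}=\xi_{(A}{}^{A'}\xi_{B)A'}+\tfrac{1}{2}\epsilon_{AB}\xi^{c}\xi_{c}$ together with the skew pairing inside the $\mathcal{K}^{0}\mathcal{K}^{2}$ blocks. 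The fourth term, $U^{AA'}(\mathcal{K}^{1}_{1,1}\sTwist_{0,0}(\cdots))_{AA'}$, is the crucial one: on a scalar $f$ the twistor reduces to $\nabla_{AA'}f$, and $U^{AA'}(\mathcal{K}^{1}_{1,1}\nabla f)_{AA'}$ collapses, via the same identity \eqref{eq:UU11Def}, to $-\tfrac{1}{3\kappa_{1}^{2}}\xi^{AA'}\nabla_{AA'}f$, i.e.\ to $-\tfrac{1}{3\kappa_{1}^{2}}\mathcal{L}_{\xi}f$ with $f=\kappa_{1}^{4}\mathcal{K}^{2}_{2,0}\mathcal{K}^{2}_{4,0}\vartheta\Psi$; extracting $\kappa_{1}^{3}$ produces the first term of the right-hand side of \eqref{eq:AU}. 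The fifth term passes through unchanged.

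For \eqref{eq:divA} I apply $\sDiv_{1,1}$ to \eqref{eq:AA11Def} via Leibniz. Each $\nabla$ hitting an explicit $\kappa_{1}$, $\xi_{AA'}$ or $U_{AA'}$ is reduced by \eqref{eq:kappasys}, and every $\nabla$ hitting a $\mathcal{K}$-block is moved inside using the commutators \eqref{eq:DivEK1phi11}, \eqref{eq:DivEK2phi31}. The divergence of the fourth term, $\sDiv_{1,1}\mathcal{K}^{1}_{1,1}\sTwist_{0,0}(\cdots)$, produces a spin-raised curl-type expression that matches exactly the shape of the covariant TME \eqref{eq:covTME}; using \eqref{eq:covTME} the $\vartheta\Psi$-contribution is traded for $\vartheta\Phi$-sources plus a curvature multiple of $\widehat{\phi}$, which in turn recombines with the divergence of the fifth term via \eqref{eq:CurlDgEphi40t}. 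After all cancellations, the residue is $\xi^{a}\nabla_{a}\slashed{G}$ multiplied by $-\tfrac{1}{2}\Psi_{2}\kappa_{1}^{3}$, and this is exactly $-\tfrac{1}{2}\Psi_{2}\kappa_{1}^{3}\mathcal{L}_{\xi}\slashed{G}$ because $\slashed{G}$ is a scalar.

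For \eqref{eq:curlK1A} the strategy is identical: apply $\sCurl_{1,1,2}\mathcal{K}^{1}_{1,1}$ term by term, use \eqref{eq:K1K1phi11} to trivialise $\mathcal{K}^{1}_{1,1}\mathcal{K}^{1}_{1,1}$, and push the extended curl past the remaining $\mathcal{K}$-operators via \eqref{eq:CurlDgEK1phi20}, \eqref{eq:CurlDgEK2phi40}, \eqref{eq:CurlDgEK0phi00}. The Leibniz residues of the $\slashed G$ and $\vartheta\Psi$ blocks produce $\mathcal{L}_{\xi}\mathcal{K}^{0}_{0,0}\slashed{G}$ and $\mathcal{L}_{\xi}\mathcal{K}^{2}_{4,0}\vartheta\Psi$ terms precisely because $\xi$ is Killing, so the commutator of $\xi^{c}\nabla_{c}$ with $\sCurl$/$\sDiv$ on a symmetric spinor closes to a Lie derivative. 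The source terms from the fifth term of $\mathcal{A}_{AA'}$ give the $\sCurl_{1,1,-2}\mathcal{K}^{1}_{1,1}\mathcal{K}^{2}_{3,1}\sCurl_{2,2}\vartheta\Phi$ and $\xi_{(A}{}^{A'}(\mathcal{K}^{2}_{3,1}\sCurl_{2,2}\vartheta\Phi)_{B)A'}$ contributions. The second-order identities \eqref{eq:K0CurlDgETwistEK2K2phi40}--\eqref{eq:K1CurlDgCurlphi22} of Lemma \ref{lem:strangeidentities} are precisely what is needed to reassemble the leftover curvature-weighted commutator terms; in particular the $(\mathcal{L}_{\xi}\varphi)/(3\kappa_{1})$ contribution to \eqref{eq:K1CurlDgCurlphi22} is the mechanism by which the extra Lie-derivative pieces on the right-hand side are produced.

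The main obstacle is the third identity \eqref{eq:curlK1A}: after Leibniz expansion one has on the order of twenty intermediate summands, involving both $(2,0)$-valued and $(3,1)$-valued blocks, and several second-order compositions $\sCurlDagger\sTwist$, $\sCurlDagger\sCurl$, $\sTwist\sDiv$ must be absorbed by Lemma \ref{lem:strangeidentities} in exactly the right weighted combination. As in Lemma \ref{lem:strangeidentities} itself, the bookkeeping is most conveniently checked by expanding every projector \eqref{eq:Kprojectors} and every extended fundamental operator \eqref{eq:ExtendedFundamentalOperators2} into the non-extended fundamental operators and the Killing-spinor derivative table, and verifying the resulting tensorial identity with the \emph{SymManipulator} and \emph{SpinFrames} packages. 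The first two identities, by contrast, are short enough to carry out by hand.
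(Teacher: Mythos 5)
Your treatment of \eqref{eq:AU} is fine and is essentially the paper's (a direct contraction using \eqref{eq:UU11Def}, with the non-surviving terms killed by $\xi^aU_a=0$ and the vanishing of the symmetrized contractions $\xi_{(A}{}^{A'}\xi_{B)A'}$, $\xi_{B(A'}\xi^{B}{}_{B')}$ against the $\mathcal{K}$-blocks). However, the mechanism you propose for \eqref{eq:divA} is wrong. The divergence of the fourth term of \eqref{eq:AA11Def} is $\sDiv_{1,1}\mathcal{K}^1_{1,1}\sTwist_{0,0}$ acting on the \emph{scalar} $\kappa_1^4\mathcal{K}^2_{2,0}\mathcal{K}^2_{4,0}\vartheta\Psi$ (essentially $\kappa_1^4\vartheta\Psi_2$); it cannot ``match the shape of'' the covariant TME \eqref{eq:covTME}, which is $\sCurl_{3,1}\sCurlDagger_{4,0,4}$ acting on the valence $(4,0)$ spinor $\widehat\phi$ built from the extreme scalars $\vartheta\Psi_0,\vartheta\Psi_4$ — and those extreme components never occur in $\mathcal{A}_{AA'}$, so neither \eqref{eq:covTME} nor \eqref{eq:CurlDgEphi40t} can enter this computation. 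That term is handled instead by \eqref{eq:DivEK1phi11} together with the scalar operator identity $(\mathcal{K}^2_{2,0}\sCurl_{1,1,v}\sTwist_{0,0,w}\varphi)=\tfrac{(w-v)}{6\kappa_1}(\mathcal{L}_\xi\varphi)$ (and, in the curl computation, by $\sCurl_{1,1,2}\sTwist_{0,0,2}=0$), which is where the Lie derivatives actually come from.

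The input your sketch is missing is the linearized Bianchi identity \eqref{eq:LinBianchi2}. The divergence and curl of the terms containing $\mathcal{K}^1_{2,0}\mathcal{K}^2_{4,0}\vartheta\Psi$ (i.e.\ $\vartheta\Psi_1,\vartheta\Psi_3$) produce first derivatives of the middle-spin linearized Weyl components, and no identity from Lemma~\ref{lem:Kcommutators} or Lemma~\ref{lem:strangeidentities} — which hold for \emph{arbitrary} spinors — can convert these into the $G$, $\slashed{G}$ and $\vartheta\Phi$ terms appearing on the right-hand sides; only the field-dependent Bianchi relation can. The paper does this for \eqref{eq:divA} by first passing to the equivalent form \eqref{eq:AA11Eq1} of $\mathcal{A}_{AA'}$ (the equivalence \emph{is} the Bianchi identity), in which the $\phi$-dependent terms are $\sCurlDagger_{2,0}$-exact and are annihilated by $\sDiv_{1,1}\sCurlDagger_{2,0}=0$; and for \eqref{eq:curlK1A} by deriving and using the spin-lowered Bianchi consequence \eqref{eq:CurlDgK2phi} to eliminate the $\sCurlDagger_{2,0}\mathcal{K}^2_{4,0}\phi$ terms. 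Your appeal to \eqref{eq:K1CurlDgCurlphi22} as the source of the Lie-derivative terms is misplaced (that identity concerns $(2,2)$ spinors and is used in the proof of Theorem~\ref{thm:mainthm}, not of this lemma), and the commutators \eqref{eq:CurlDgEK1phi20}, \eqref{eq:CurlDgEK2phi40}, \eqref{eq:CurlDgEK0phi00} you cite are $\sCurlDagger$-commutators, so they do not move $\sCurl_{1,1,2}$ past the remaining $\mathcal{K}$-operators. As written, the arguments for \eqref{eq:divA} and \eqref{eq:curlK1A} would therefore not close; they need the Bianchi identity reinstated in the appropriate component form (or the rewriting \eqref{eq:AA11Eq1}).
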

\begin{proof}
Equation \eqref{eq:AU} can be verified by a direct calculation using \eqref{eq:UU11Def}. 
To prove \eqref{eq:divA}, we make use of the form of $\mathcal{A}_{AA'}$ given in equation \eqref{eq:AA11Eq1} below,
\begin{align}
\mathcal{A}_{AA'}={}&- \tfrac{1}{2} \bigl(\sCurlDagger \mathcal{K}^0 \mathcal{K}^2 \mathcal{K}^2 (\kappa_1^4\phi)\bigr)_{AA'}
 + \tfrac{2}{3} \bigl(\sCurlDagger  \mathcal{K}^2 (\kappa_1^4\phi)\bigr)_{AA'}
 + \tfrac{1}{3} \bigl(\mathcal{K}^1 \sDiv_{(2,0)} (\kappa_1^4\Psi_{2}G)\bigr)_{AA'}\nonumber\\
& -  \tfrac{1}{3} \bigl(\mathcal{K}^1 \sTwist_{(5/2,0)} (\kappa_1^4\Psi_{2}\slashed{G})\bigr)_{AA'}
 -  \bigl(\mathcal{K}^2 \sCurl_{(-1,0)} (\kappa_1^4\Psi_{2}G)\bigr)_{AA'}.
\end{align}
Applying $\sDiv$ to this equation and using the commutator relation $\sDiv \sCurlDagger = 0$ valid on $\SymSpinSec_{2,0}$ gives
\begin{align}
(\sDiv  \mathcal{A})={}&\tfrac{1}{3} \bigl(\sDiv  \mathcal{K}^1 \sDiv_{(2,0)} (\kappa_1^4\Psi_{2}G)\bigr)
 -  \tfrac{1}{3} \bigl(\sDiv \mathcal{K}^1 \sTwist_{(5/2,0)} (\kappa_1^4\Psi_{2}\slashed{G})\bigr)
 -  \bigl(\sDiv \mathcal{K}^2 \sCurl_{(-1,0)} (\kappa_1^4\Psi_{2}G)\bigr).
\end{align}
Using \eqref{eq:DivEK1phi11} on the first two terms, and \eqref{eq:DivEK2phi31} on the last term gives
\begin{align}
(\sDiv \mathcal{A})={}&\tfrac{2}{3} \bigl(\mathcal{K}^2 \sCurl_{(-2,0)} \sDiv_{(2,0)} (\kappa_1^4\Psi_{2}G)\bigr)
 -  \tfrac{2}{3} \bigl(\mathcal{K}^2 \sCurl_{(-2,0)} \sTwist_{(5/2,0)} (\kappa_1^4\Psi_{2}\slashed{G})\bigr)
 -  \bigl(\mathcal{K}^2 \sDiv_{(1,0)} \sCurl_{(-1,0)} (\kappa_1^4\Psi_{2}G)\bigr).
\end{align}
The first and the last term cancel due to the general identity on $\SymSpinSec_{2,2}$
\begin{align}
\mathcal{K}^2 \sDiv_{(1,0)} \sCurl_{(-1,0)} ={}&\tfrac{2}{3} \mathcal{K}^2 \sCurl_{(-2,0)} \sDiv_{(2,0)} .
\end{align}
This identity can be proven by expanding the extended indices and commuting the derivatives. In the same way we can also prove the identity 
\begin{align}
\mathcal{K}^2 \sCurl_{(v,0)} \sTwist_{(w,0)} ={}&\frac{1}{6 \kappa_1}(w - v) \mathcal{L}_{\xi},
\end{align}
on $\SymSpinSec_{0,0}$ for arbitrary constants $v$ and $w$. This finally gives \eqref{eq:divA}, where we in the last step commuted $\Psi_{2} \kappa_1^3$ through the Lie derivative.

To prove \eqref{eq:curlK1A}, we first note the following relation that is a consequence of the linearized Bianchi identities
\begin{align}
(\sCurlDagger \mathcal{K}^2 \phi)_{AA'}={}&\tfrac{3}{2} \Psi_{2} (\sCurl_{(2,0)} \mathcal{K}^2 G)_{AA'}
 + \tfrac{1}{8} \Psi_{2} (\mathcal{K}^1 \sTwist \slashed{G})_{AA'}
 + U^{B}{}_{A'} (\mathcal{K}^2 \phi)_{AB}
 + (\mathcal{K}^2 \sCurl \vartheta \Phi)_{AA'}.\label{eq:CurlDgK2phi}
\end{align}
Applying a $\mathcal{K}^1$ to \eqref{eq:AA11Def} gives after algebraic manipulations
\begin{align}
(\mathcal{K}^1 \mathcal{A})_{AA'}={}&- \tfrac{1}{4} \slashed{G} \Psi_{2} U_{AA'} \kappa_{1}{}^4
 + \tfrac{2}{3} \kappa_{1}{}^4 (\mathcal{K}^1 \mathcal{K}^2 \sCurl \vartheta \Phi)_{AA'}
 + \Psi_{2} \kappa_{1}{}^3 \xi_{A}{}^{B'} (\mathcal{K}^2 G)_{A'B'}\nonumber\\
& + \tfrac{2}{3} \kappa_{1}{}^3 \xi^{B}{}_{A'} (\mathcal{K}^2 \phi)_{AB}
 + (\sTwist_{(2,0)} \mathcal{K}^2 \mathcal{K}^2 \kappa_{1}{}^4\phi)_{AA'}
 -  \tfrac{1}{4} (\sTwist_{(2,0)} \Psi_{2}\kappa_{1}{}^4\slashed{G})_{AA'}.
\end{align}
Applying $\sCurl_{(2,0)}$ to this equation, using the commutator relation $\sCurl_{(2,0)}\sTwist_{(2,0)}=0$, valid on $\SymSpinSec_{0,0}$, and translating $\xi^{AA'}\sTwist$ terms to Lie derivatives, we end up with
\begin{align}
(\sCurl_{(2,0)} \mathcal{K}^1 \mathcal{A})_{AB}={}&\tfrac{2}{3} (\sCurl_{(2,0)} \kappa_{1}{}^4\mathcal{K}^1 \mathcal
{K}^2 \sCurl \vartheta \Phi)_{AB}
 -  \tfrac{1}{24} \Psi_{2} \kappa_{1}{}^3 (\mathcal{K}^0 \mathcal{L}_{\xi}\slashed{G})_{AB}
  + \tfrac{1}{9} \kappa_{1}{}^2 \xi_{CA'} \xi^{CA'} (\mathcal{K}^1 \mathcal{K}^2 \phi)_{AB}\nonumber\\
&
 -  \tfrac{2}{3} \kappa_{1}{}^3 (\mathcal{L}_{\xi}\mathcal{K}^2 \phi)_{AB}
  -  \tfrac{2}{3} \kappa_{1}{}^3 \xi_{(A}{}^{A'}(\sCurlDagger \mathcal{K}^2 \phi)_{B)A'}
 + \Psi_{2} \kappa_{1}{}^3 \xi_{(A}{}^{A'}(\sCurl_{(2,0)} \mathcal{K}^2 G)_{B)A'}\nonumber\\
& + \tfrac{1}{12} \Psi_{2} \kappa_{1}{}^3 \xi_{(A}{}^{A'}(\mathcal{K}^1 \sTwist \slashed{G})_{B)A'}.
\end{align}
Using \eqref{eq:CurlDgK2phi} to eliminate the $\sCurlDagger \mathcal{K}^2 \phi$ terms yields
\begin{align}
(\sCurl_{(2,0)} \mathcal{K}^1 \mathcal{A})_{AB}={}&
 -  \tfrac{1}{24} \Psi_{2} \kappa_{1}{}^3 (\mathcal{K}^0 \mathcal{L}_{\xi}\slashed{G})_{AB}
  -  \tfrac{2}{3} \kappa_{1}{}^3 (\mathcal{L}_{\xi}\mathcal{K}^2 \phi)_{AB}\nonumber\\
  & + \tfrac{2}{3} \kappa_{1}{}^4 (\sCurl_{(-2,0)} \mathcal{K}^1 \mathcal{K}^2 \sCurl \vartheta \Phi)_{AB}
 -  \tfrac{2}{3} \kappa_{1}{}^3 \xi_{(A}{}^{A'}(\mathcal{K}^2 \sCurl \vartheta \Phi)_{B)A'}.
\end{align}
Inserting \eqref{eq:phidef} leads to the result.
\end{proof}

\begin{proof}[Proof of theorem \ref{thm:mainthm}]
Apply the operator $\sCurlDagger$ to \eqref{eq:CurlDgEphi40t} and moving out the scalars $\Psi_2$ and $\kappa_1$ we get
\begin{align}\label{eq:prf1}
(\sCurlDagger \sCurlDagger_{(4,0)} \widehat\phi)_{ABA'B'}={}&\kappa_1^4 (\sCurlDagger_{(-4,0)} \mathcal{K}^1 \mathcal{P}^{3/2} \sCurl \vartheta \Phi)_{ABA'B'}
 -  \kappa_1^4 (\sCurlDagger_{(-4,0)} \mathcal{K}^1 \mathcal{P}^{3/2} \sCurlDagger \mathcal{P}^{1} \phi)_{ABA'B'}\nonumber\\
& -  \tfrac{3}{2} \Psi_{2} \kappa_1^4 (\sCurlDagger_{(-1,0)} \mathcal{K}^1 \mathcal{P}^{3/2} \sCurl_{(1,0)} G)_{ABA'B'}.
\end{align}
The second term can be rewritten by expanding the spin-1 projector according to \eqref{eq:P1Def} and commuting out the $\mathcal{K}^0$ using \eqref{eq:CurlDgEK0phi20} together with \eqref{eq:Spin32K0phi11},\eqref{eq:K1K1K1phi20}, 
\begin{align} \label{eq:prf2}
\kappa_1^4 (\sCurlDagger_{(-4,0)} \mathcal{K}^1 \mathcal{P}^{3/2} \sCurlDagger  \mathcal{P}^{1} \phi)_{ABA'B'}={}&- \kappa_1^4 (\sCurlDagger_{(-4,0)} \mathcal{K}^1 \mathcal{P}^{3/2} \sTwist_{(-4,0)} \mathcal{K}^1 \mathcal{K}^2 \phi)_{ABA'B'} \nonumber \\
={}&\tfrac{1}{4} \kappa_1^4 (\sCurlDagger_{(-4,0)} \mathcal{K}^0 \sCurlDagger_{(-1,0)} \mathcal{K}^1 \mathcal{K}^2 \phi)_{ABA'B'}
 -  \tfrac{2}{3} \kappa_1^4 (\sTwist_{(-4,0)} \sCurlDagger_{(-4,0)} \mathcal{K}^1 \mathcal{K}^1 \mathcal{K}^2 \phi)_{ABA'B'}.
\end{align}
In the second step \eqref{eq:K1Spin32TwistEphi20} and \eqref{eq:CurlDgEK0phi20} with \eqref{eq:K1K1K1phi20} and a commutator is used. To commute out the $\mathcal{K}^1 \mathcal{K}^2 $ in the first term, we first use \eqref{eq:CurlDgEK1phi20} and \eqref{eq:CurlDgEK2phi40} to get
\begin{align} \label{eq:prf3}
(\sCurlDagger_{(-4,0)} \mathcal{K}^0 \sCurlDagger_{(-1,0)} \mathcal{K}^1 \mathcal{K}^2 \phi)_{ABA'B'}={}&(\sCurlDagger_{(-4,0)} \mathcal{K}^0 \mathcal{K}^1 \mathcal{K}^2 \sCurlDagger \phi)_{ABA'B'}
 + (\sCurlDagger_{(-4,0)} \mathcal{K}^0 \sTwist_{(-3,0)} \mathcal{K}^2 \mathcal{K}^2 \phi)_{ABA'B'}, \nonumber \\
={}&(\sCurlDagger_{(-4,0)} \mathcal{K}^0 \mathcal{K}^1 \mathcal{K}^2 \sCurlDagger \phi)_{ABA'B'}
 -  \tfrac{4}{3} (\sTwist_{(-4,0)} \mathcal{K}^1 \sTwist_{(-6,0)} \mathcal{K}^2 \mathcal{K}^2 \phi)_{ABA'B'}.
\end{align}
In the second step \eqref{eq:CurlDgEK0phi11} is used together with \eqref{eq:K0CurlDgETwistEK2K2phi40}. Using \eqref{eq:prf3} in \eqref{eq:prf2} and the linearized Bianchi identity \eqref{eq:LinBianchi2} in the first term of \eqref{eq:prf3} yields
\begin{align} \label{eq:prf4}
\kappa_1^4 (\sCurlDagger_{(-4,0)} \mathcal{K}^1 \mathcal{P}^{3/2} \sCurlDagger  \mathcal{P}^{1} \phi)_{ABA'B'}={}&\tfrac{1}{4} \kappa_1^4 (\sCurlDagger_{(-4,0)} \mathcal{K}^0 \mathcal{K}^1 \mathcal{K}^2 \sCurl \vartheta \Phi)_{ABA'B'}
 -  \tfrac{1}{8} \kappa_1^4 \bigl(\sCurlDagger_{(-4,0)} \mathcal{K}^0 \mathcal{K}^1 (\Psi_{2}\mathcal{K}^1 \sDiv_{(4,0)} G)\bigr)_{ABA'B'}\nonumber\\
& + \tfrac{1}{32} \kappa_1^4 \bigl(\sCurlDagger_{(-4,0)} \mathcal{K}^0 \mathcal{K}^1 (\Psi_{2}\mathcal{K}^1 \sTwist \slashed{G})\bigr)_{ABA'B'}\nonumber\\
& + \tfrac{3}{8} \kappa_1^4 \bigl(\sCurlDagger_{(-4,0)} \mathcal{K}^0 \mathcal{K}^1 (\Psi_{2}\mathcal{K}^2 \sCurl_{(1,0)} G)\bigr)_{ABA'B'}\nonumber\\
& -  \tfrac{2}{3} \kappa_1^4 (\sTwist_{(-4,0)} \sCurlDagger_{(-4,0)} \mathcal{K}^1 \mathcal{K}^1 \mathcal{K}^2 \phi)_{ABA'B'}\nonumber\\
& -  \tfrac{1}{3} \kappa_1^4 (\sTwist_{(-4,0)} \mathcal{K}^1 \sTwist_{(-6,0)} \mathcal{K}^2 \mathcal{K}^2 \phi)_{ABA'B'}.
\end{align}
The second and third term on the right-hand side can be simplified further using \eqref{eq:K1K1phi11}. Using \eqref{eq:prf4} in \eqref{eq:prf1} and expanding the spin decomposition in the last term of \eqref{eq:prf1} leads to
\begin{align} \label{eq:prf5}
(\sCurlDagger \sCurlDagger_{(4,0)} \widehat\phi)_{ABA'B'}={}&\bigl(\sCurlDagger \mathcal{K}^1 \mathcal{P}^{3/2} (\kappa_1^4\sCurl \vartheta \Phi)\bigr)_{ABA'B'} 
 + \tfrac{1}{8} \Psi_{2} \kappa_1^4 (\sCurlDagger_{,{-1}} \mathcal{K}^0 \sDiv_{(4,0)} G)_{ABA'B'}\nonumber\\
& -  \tfrac{1}{4} \kappa_1^4 (\sCurlDagger_{(-4,0)} \mathcal{K}^0 \mathcal{K}^1 \mathcal{K}^2 \sCurl \vartheta \Phi)_{ABA'B'}
 -  \tfrac{1}{3} \Psi_{2} \kappa_1^4 (\sCurlDagger_{(-1,0)} \mathcal{K}^0 \mathcal{K}^1 \mathcal{K}^2 \sCurl_{(1,0)} G)_{ABA'B'}\nonumber\\
& -  \tfrac{1}{32} \Psi_{2} \kappa_1^4 (\sCurlDagger_{(-1,0)} \mathcal{K}^0 \sTwist \slashed{G})_{ABA'B'}
 -  \tfrac{3}{2} \Psi_{2} \kappa_1^4 (\sCurlDagger_{(-1,0)} \mathcal{K}^1 \mathcal{K}^1 \mathcal{K}^1 \sCurl_{(1,0)} G)_{ABA'B'}\nonumber\\
& + \tfrac{2}{3} \kappa_1^4 (\sTwist_{(-4,0)} \sCurlDagger_{(-4,0)} \mathcal{K}^1 \mathcal{K}^1 \mathcal{K}^2 \phi)_{ABA'B'}
 + \tfrac{1}{3} \kappa_1^4 (\sTwist_{(-4,0)} \mathcal{K}^1 \sTwist_{(-6,0)} \mathcal{K}^2 \mathcal{K}^2 \phi)_{ABA'B'}.
\end{align}
The fourth and sixth term on the right-hand side can be combined via \eqref{eq:K1K1K1phi31}. Defining the complex vector field
\begin{subequations} 
\begin{align}
\mathcal{A}_{AA'}={}&- \tfrac{1}{2} \kappa_1^4 (\sCurlDagger_{(-4,0)} \mathcal{K}^0 \mathcal{K}^2 \mathcal{K}^2 \phi)_{AA'}
 + \tfrac{2}{3} \kappa_1^4 (\sCurlDagger_{(-4,0)} \mathcal{K}^2 \phi)_{AA'}
 + \tfrac{1}{3} \Psi_{2} \kappa_1^4 (\mathcal{K}^1 \sDiv_{(1,0)} G)_{AA'}\nonumber\\
& -  \tfrac{1}{3} \Psi_{2} \kappa_1^4 (\mathcal{K}^1 \sTwist_{(3/2,0)} \slashed{G})_{AA'}
 -  \Psi_{2} \kappa_1^4 (\mathcal{K}^2 \sCurl_{(-2,0)} G)_{AA'}\label{eq:AA11Eq1} \\
={}&  \bigl(\mathcal{K}^1 \sTwist \mathcal{K}^2 \mathcal{K}^2 (\kappa_1^4\phi)\bigr)_{AA'}
 -  \tfrac{1}{4} \Psi_{2} \kappa_1^4 (\mathcal{K}^1 \sTwist_{(2,0)} \slashed{G})_{AA'}
 + \tfrac{2}{3} \kappa_1^4 (\mathcal{K}^2 \sCurl \vartheta \Phi)_{AA'}\nonumber\\
&- \tfrac{1}{2} \Psi_{2} \kappa_1^3 \xi^{BB'} (\mathcal{K}^0 \mathcal{K}^2 G)_{ABA'B'}
 + \tfrac{2}{3} \kappa_1^3 \xi^{B}{}_{A'} (\mathcal{K}^1 \mathcal{K}^2 \phi)_{AB} \label{eq:AA11Eq2}
\end{align}
\end{subequations} 
(for the second version we used $\mathcal{K}^2$ applied to the linearized Bianchi identity \eqref{eq:LinBianchi2}) we find
\begin{align}
(\sCurlDagger \sCurlDagger_{(4,0)} \widehat\phi)_{ABA'B'}={}&\bigl(\sCurlDagger \mathcal{K}^1 \mathcal{P}^{3/2} (\kappa_1^4\sCurl \vartheta \Phi)\bigr)_{ABA'B'}  - 3 \Psi_{2} \kappa_1^4 (\mathcal{K}^1 \vartheta \Phi)_{ABA'B'}  -  \tfrac{1}{4} \kappa_1^4 (\sCurlDagger_{(-4,0)} \mathcal{K}^0 \mathcal{K}^1 \mathcal{K}^2 \sCurl \vartheta \Phi)_{ABA'B'}\nonumber\\
&  + \tfrac{1}{2} \Psi_{2} \kappa_1^3 (\mathcal{L}_{\xi}G)_{ABA'B'} + (\sTwist \mathcal{A})_{ABA'B'}\label{eq:CurlDgCurlDgphihatEndProof}
\end{align}
by using \eqref{eq:2ndOrderKCommutatorIdentities} (the third term on the right-hand side can be rewritten using \eqref{eq:P12Def}). Since
\begin{align}
\tfrac{1}{2} \nabla_{AA'}\mathcal{A}_{BB'} + \tfrac{1}{2} \nabla_{BB'}\mathcal{A}_{AA'}={}&\tfrac{1}{4} \epsilon_{AB} \bar\epsilon_{A'B'} (\sDiv \mathcal{A}) + (\sTwist \mathcal{A})_{ABA'B'},
\end{align}
\eqref{eq:divA} for the trace terms and \eqref{eq:deltagIrrDec} finally proves the theorem. 
\end{proof} 
We will restrict to the source-free case $\vartheta\Phi_{ABA'B'}=0, \vartheta\Lambda=0$ for the rest of this section. In this case the last term in \eqref{eq:CurlDgCurlDgEK1Spin2ToLie} is zero and $\mathcal{M}_{ABA'B'}$ has the following property.
\begin{corollary} \label{cor:MeinsteinSol}
The complex field $\mathcal{M}_{ABA'B'}$ is a traceless (by definition \eqref{eq:Mabdef:spinor}) solution to the source-free linearized vacuum Einstein equations because the first two terms on the right-hand side of \eqref{eq:CurlDgCurlDgEK1Spin2ToLie} form a linearized diffeomorphism and the third term is a symmetry operator on $\dot{g}_{ABA'B'}$ (remember that $\Psi_2 \kappa_1^3$ is a constant) which is itself a solution.
\end{corollary}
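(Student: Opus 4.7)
The plan is to deduce this corollary directly from Theorem~\ref{thm:mainthm}; once that theorem is in hand, no further substantive calculation is needed, and the argument is purely structural.

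First I would dispatch tracelessness by inspection of the definition \eqref{eq:Mabdef:spinor}. The operator $\sCurlDagger_{3,1}$ produces elements of $\mathcal{S}_{2,2}$, that is, spinors symmetric separately in the unprimed and primed index pairs; as recalled in Section~\ref{sec:prel}, such spinors correspond exactly to traceless tensors. Hence $\mathcal{M}_{ABA'B'}$ is manifestly traceless, independently of any field equation.

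Next I would specialize \eqref{eq:CurlDgCurlDgEK1Spin2ToLie} to the source-free case: with $\vartheta\Phi_{ABA'B'} = 0$ the source contribution $(\mathcal{N}_{2,2}\vartheta\Phi)_{ABA'B'}$ from \eqref{eq:NVarSPhiDef} vanishes, leaving
\[
\mathcal{M}_{ABA'B'} = \tfrac{1}{2}\nabla_{AA'}\mathcal{A}_{BB'} + \tfrac{1}{2}\nabla_{BB'}\mathcal{A}_{AA'} + \tfrac{1}{2}\Psi_{2} \kappa_1^3 (\mathcal{L}_{\xi}\dot g)_{ABA'B'}.
\]
I would then verify each of the three terms separately solves the source-free linearized vacuum Einstein equations. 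The first two terms constitute the spinor expression for $\nabla_{(a}\mathcal{A}_{b)}$, the action of a linearized diffeomorphism generated by $\mathcal{A}^{a}$; on any vacuum background such a pure gauge perturbation is automatically a solution. For the third term I would invoke two facts already recorded in the excerpt: $\xi^{a}$ is a Killing vector (from the tensor symmetrization of \eqref{eq:kappasysb}), so $\mathcal{L}_{\xi}$ is a symmetry operator of the background and therefore maps solutions to solutions; and $\Psi_{2}\kappa_{1}^{3}$ is constant on any Petrov type D vacuum (see Remark~\ref{rem:explicit}), so multiplication by it commutes with the linearized field equations. Linearity of the linearized Einstein equations then yields that $\mathcal{M}_{ABA'B'}$ is itself a solution.

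The only step that requires any genuine care is the last one, namely confirming that the scalar $\Psi_{2}\kappa_{1}^{3}$ is a true constant rather than merely a function on the background; otherwise the rescaled Lie derivative could fail to preserve the Einstein property, as terms involving derivatives of the rescaling factor would appear. This is precisely why the construction relies on the full Petrov type D vacuum hypothesis and cannot be extended verbatim to more general algebraically special backgrounds where this rigidity is lost.
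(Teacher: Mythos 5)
Your argument is correct and coincides with the paper's own reasoning: tracelessness is immediate from \eqref{eq:Mabdef:spinor}, the gauge term is pure gauge, and $\mathcal{L}_{\xi}$ with the constant factor $\Psi_{2}\kappa_{1}^{3}$ is a symmetry operator, so linearity gives the result. One tiny citation point: the constancy of $\Psi_{2}\kappa_{1}^{3}$ on a general type D vacuum follows from $\kappa_{1}\propto\Psi_{2}^{-1/3}$ in \eqref{eq:TypeDKS}, whereas Remark~\ref{rem:explicit} only records the explicit Kerr value.
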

Note that for a complex metric $\MidMet_{ab}$, in addition to trace-free Ricci, $\vartheta\Phi[\MidMet]$, and Ricci scalar curvature, $\vartheta\Lambda[\MidMet]$, we have self dual, $\overline{\vartheta\Psi}[\MidMet]$\footnote{This operator is defined as the complex conjugate of the operator in \eqref{eq:VarSPsiCDe}, i.e.  $\overline{\vartheta\Psi}[G,\slashed{G}]_{A'B'C'D'} = \tfrac{1}{2} (\sCurlDagger \sCurlDagger G)_{A'B'C'D'} - \tfrac{1}{4} \slashed{G}_{} \overline{\Psi}_{A'B'C'D'} $.} and anti-self dual, $\vartheta\Psi[\MidMet]$, curvature and the last two are in general independent (in case of a real metric, they are complex conjugates of each other). We can now compute all curvature components of the complex metric $\MidMet_{ABA'B'}$.
\begin{lemma}\label{lem:Mcurvature} In the source-free case, the curvature, see \eqref{eq:LinGravEqs}, of the complex metric \eqref{eq:Mabdef:spinor} is given by
\begin{subequations}
\begin{align}
\vartheta\Psi[\MidMet]_{ABCD} ={}& \tfrac{1}{2}(\tilde{\mathcal{L}}_{\mathcal{A}}\Psi)_{ABCD} + \tfrac{1}{2}\Psi_{2} \kappa_1^3 (\mathcal{L}_{\xi}\phi)_{ABCD}  \label{eq:MasdWeylCurvature}\\
={}& \tfrac{1}{2}\Psi_{2} \kappa_1^3 (\mathcal{L}_{\xi}\mathcal{P}^{2} \phi)_{ABCD} \label{eq:MasdWeylCurvatureb}\\
\overline{\vartheta\Psi}[\MidMet]_{A'B'C'D'} ={}&  \tfrac{1}{2}(\tilde{\mathcal{L}}_{\mathcal{A}}\bar\Psi)_{A'B'C'D'} +\tfrac{1}{2}\kappa_1^3 \Psi_2 (\mathcal{L}_{\xi}\bar{\phi})_{A'B'C'D'} \label{eq:MsdWeylCurvature}\\
\vartheta\Phi[\MidMet]_{ABA'B'} ={}& 0 \label{eq:MtfRicciCurvature}\\
\vartheta\Lambda[\MidMet] ={}& 0 \label{eq:MRicciScalarCurvature}
\end{align}
\end{subequations}
where $\phi_{ABCD}$ is defined in \eqref{eq:phidef}, and the operator $\tilde{\mathcal{L}}_{\mathcal{A}}$ acting on the self dual Weyl spinor and its complex conjugate is defined by
\begin{subequations}
\begin{align}
(\tilde{\mathcal{L}}_{\mathcal{A}}\bar\Psi)_{A'B'C'D'}={}&\tfrac{1}{2} \bar\Psi_{A'B'C'D'} (\sDiv_{(0,6)} \mathcal{A})
 + 2 \bar\Psi_{(A'B'C'}{}^{F'}(\sCurlDagger_{(0,2)} \mathcal{A})_{D')F'}, \label{eq:LieTildePsiDagger}\\
(\tilde{\mathcal{L}}_{\mathcal{A}}\Psi)_{ABCD}={}&\tfrac{1}{2} \Psi_{ABCD} (\sDiv_{(6,0)} \mathcal{A})
 + 2 \Psi_{(ABC}{}^{F}(\sCurl_{(2,0)} \mathcal{A})_{D)F}.
\end{align}
\end{subequations}
\end{lemma}
\begin{remark} \label{rem:CovTSI}
Expanding the left-hand side of equation \eqref{eq:MsdWeylCurvature}  gives 
\begin{equation}\label{eq:covTSI:typeD}
(\sCurlDagger \sCurlDagger \sCurlDagger \sCurlDagger_{(4,0)} (\kappa_1^4 \mathcal{K}^1 \mathcal{P}^{2} \vartheta\Psi))_{A'B'C'D'} = (\tilde{\mathcal{L}}_{\mathcal{A}}\bar\Psi)_{A'B'C'D'} +\kappa_1^3 \Psi_2 (\mathcal{L}_{\xi}\bar{\phi})_{A'B'C'D'},
\end{equation}
which is the covariant form of the full TSI for source-free linearized gravity on a general Petrov type D vacuum background. See also corollary~\ref{cor:tomain:intro} for a manifestly gauge-invariant form of the covariant TSI on a Kerr background. 
\end{remark}  

\begin{proof}
Commuting two derivatives on $\varphi_{AA'} \in \SymSpinSec_{1,1}$ in vacuum type D spacetime (commutators are given in \cite[\S 2.2]{ABB:symop:2014CQGra..31m5015A}) leads to the operator identity
\begin{align}
(\sCurl \sCurl \sTwist \varphi)_{ABCD}={}&\tfrac{1}{2} \Psi_{ABCD} (\sDiv_{(6,0)} \varphi)
 + 2 \Psi_{(ABC}{}^{F}(\sCurl_{(2,0)} \varphi)_{D)F}.
\end{align}
Using this identity for $\varphi_{AA'} = \mathcal{A}_{AA'}$ and its complex conjugate together with the source-free version of \eqref{eq:CurlDgCurlDgphihatEndProof} and the fact that $\mathcal{M}$ is traceless, the curvature relations \eqref{eq:MasdWeylCurvature} and \eqref{eq:MsdWeylCurvature} follow.

In the source-free case, we find using \eqref{eq:curlK1A}, \eqref{eq:divA} and \eqref{eq:AU} that
\begin{align}
(\tilde{\mathcal{L}}_{\mathcal{A}}\Psi)_{ABCD}={}&\tfrac{3}{16} \Psi_{2} (\mathcal{K}^0 \mathcal{K}^0 \sDiv_{(6,0)} \mathcal{A})_{ABCD}
 + \tfrac{3}{2} \Psi_{2} (\mathcal{K}^0 \mathcal{K}^1 \sCurl_{(2,0)} \mathcal{A})_{ABCD}\nonumber\\
={}&\tfrac{3}{2} \Psi_{2} (\mathcal{K}^0 \sCurl_{(2,0)} \mathcal{K}^1 \mathcal{A})_{ABCD}
 -  \tfrac{3}{8} \Psi_{2} (\mathcal{K}^0 \mathcal{K}^0 \sDiv_{(4,0)} \mathcal{A})_{ABCD} + \tfrac{3}{16} \Psi_{2} (\mathcal{K}^0 \mathcal{K}^0 \sDiv_{(6,0)} \mathcal{A})_{ABCD}\nonumber\\
={}&\tfrac{1}{8} \Psi_{2} \kappa_{1}{}^3 (\mathcal{K}^0 \mathcal{K}^0 \mathcal{L}_{\xi}\mathcal{K}^2 \mathcal{K}^2 \phi)_{ABCD}
 -  \Psi_{2} \kappa_{1}{}^3 (\mathcal{K}^0 \mathcal{L}_{\xi}\mathcal{K}^2 \phi)_{ABCD}\nonumber\\
={}& -  \Psi_{2} \kappa_{1}{}^3 (\mathcal{L}_{\xi}\phi)_{ABCD}
 + \Psi_{2} \kappa_{1}{}^3 (\mathcal{L}_{\xi}\mathcal{P}^{2} \phi)_{ABCD}.
\end{align}
from which \eqref{eq:MasdWeylCurvatureb} follows. 
Equations \eqref{eq:MtfRicciCurvature} and \eqref{eq:MRicciScalarCurvature} follow from corollary~\ref{cor:MeinsteinSol}.
\end{proof}

An important property of the linearized curvature scalars $\vartheta\Psi_0, \vartheta\Psi_4$ entering $\mathcal{M}_{ABA'B'}$ is that they are invariant under infinitesimal diffeomorphisms and so is $\mathcal{M}_{ABA'B'}$ itself. We find the following behavior under gauge transformations.
\begin{lemma} \label{lem:GaugeDependence}
For linearized diffeomorphism of the background metric, generated by a real vector $\nu^{AA'}$ of the original metric we get 
\begin{align}
G_{ABA'B'}={}&2 (\sTwist \nu)_{ABA'B'},&
\slashed{G}_{}={}&2 (\sDiv \nu).
\end{align}
For the curvature and $\mathcal{A}_{AA'}$ we get 
\begin{subequations}
\begin{align}
\vartheta \Phi_{ABA'B'}={}&0,\\
\vartheta \Lambda={}&0,\\
\phi_{ABCD}={}&\tfrac{3}{16} \Psi_{2} (\mathcal{K}^0 \mathcal{K}^0 \sDiv_{(6,0)} \nu)_{ABCD}
 + \tfrac{3}{2} \Psi_{2} (\mathcal{K}^0 \mathcal{K}^1 \sCurl_{(2,0)} \nu)_{ABCD},\\
\mathcal{A}_{AA'}={}&- \Psi_{2} \kappa_1^3 (\mathcal{L}_{\xi}\nu)_{AA'}. \label{eq:Agauge}
\end{align}
\end{subequations}
\end{lemma}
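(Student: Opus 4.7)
I would address the five assertions in order of increasing complexity, treating the formula for $\mathcal{A}_{AA'}$ as the only substantive computation.

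First, for $G_{ABA'B'}$ and $\slashed{G}$: a linearized diffeomorphism gives $\dot g_{ab} = (\mathcal{L}_\zeta g)_{ab} = 2\nabla_{(a}\zeta_{b)}$. Translating $\nabla_a\zeta_b$ to spinors and extracting the irreducible parts via the decomposition used in \eqref{eq:deltagIrrDec}, the totally symmetric piece is exactly $2(\sTwist_{1,1}\zeta)_{ABA'B'}$ and the trace piece is $2(\sDiv_{1,1}\zeta)$, by the definitions \eqref{eq:FundamentalOperators}.

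Next, for $\vartheta\Phi$ and $\vartheta\Lambda$: since the background is vacuum, $\Phi_{ABA'B'}=0$ and $\Lambda=0$, and the linearized Ricci parts along a pure gauge flow are $\mathcal{L}_\zeta\Phi_{ABA'B'}=0$ and $\mathcal{L}_\zeta\Lambda=0$. (One could alternatively verify this by substituting the expressions from step one into \eqref{eq:VarSPhiCDe}--\eqref{eq:VarSLambdaCDe} and applying commutator identities, but this is unnecessary.)

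For $\phi_{ABCD}$: the linearized Weyl spinor under a pure gauge variation is the Lie derivative of the background Weyl spinor, $\vartheta\Psi_{ABCD} = (\mathcal{L}_\zeta\Psi)_{ABCD}$. Combining with \eqref{eq:phidef} gives $\phi = \mathcal{L}_\zeta\Psi + \tfrac{1}{2}(\sDiv_{1,1}\zeta)\Psi$. On a type D background I would expand $\mathcal{L}_\zeta\Psi$ using the Leibniz rule together with \eqref{eq:PsiCDeTokappa} and the Killing spinor system \eqref{eq:kappasys} to replace all derivatives of $\kappa_{AB}$ by $\xi_{AA'}$ terms, and then repackage the irreducible components of $\nabla_{AA'}\zeta_{BB'}$ into the stated combination of $\mathcal{K}$-projectors acting on $\sDiv_{1,1,6}\zeta$ and $\sCurl_{1,1,2}\zeta$. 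The key step is recognizing that only the trace and anti-self-dual-curl parts of $\nabla\zeta$ survive, because symmetrization of the other parts with $\Psi_{ABCD}$ produces the structure isolated by $\mathcal{K}^0_{2,0}\mathcal{K}^0_{0,0}$ and $\mathcal{K}^0_{2,0}\mathcal{K}^1_{2,0}$ respectively.

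The main obstacle is the expression \eqref{eq:Agauge} for $\mathcal{A}_{AA'}$. I would substitute $G$, $\slashed{G}$, and $\phi$ from the previous steps into the explicit formula \eqref{eq:AA11Eq2}. The resulting expression is a sum of second derivatives of $\zeta$ against the Killing spinor data, and the task is to collapse it to a single Lie derivative. The tools are: (i) $\xi^a$ is Killing and $\Psi_2\kappa_1^3$ is constant, so these quantities commute with $\mathcal{L}_\xi$; (ii) the commutator identities for extended fundamental operators with the $\mathcal{K}$-projectors from lemma~\ref{lem:Kcommutators}, particularly the first-order identities \eqref{eq:CurlDgEK1phi40}--\eqref{eq:CurlDgEK0phi00} and the second-order identities in lemma~\ref{lem:strangeidentities}; and (iii) the reduction of $\sCurl\sTwist\zeta$ type second derivatives on a type D vacuum background via commutators producing Weyl curvature contractions, which then recombine with the explicit $\Psi_2$ terms. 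A useful sanity check, independent of the direct calculation, is that the gauge invariance of $\mathcal{M}_{ABA'B'}$ combined with \eqref{eq:CurlDgCurlDgEK1Spin2ToLie} forces $(\sTwist_{1,1}\mathcal{A})_{ABA'B'} + \tfrac{1}{2}\Psi_2\kappa_1^3(\mathcal{L}_\xi G)_{ABA'B'} = 0$, which with $G = 2\sTwist_{1,1}\zeta$ and $\mathcal{A} = -\Psi_2\kappa_1^3 \mathcal{L}_\xi\zeta$ reduces to $[\sTwist_{1,1},\mathcal{L}_\xi]\zeta = 0$, true because $\xi$ is Killing. Carrying out this reduction in full detail is best handled with the \emph{SymManipulator} package.
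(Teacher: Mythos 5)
Your proposal is correct and follows essentially the same route as the paper: the gauge variations of $G_{ABA'B'}$, $\slashed G$ and of the curvature are computed (the paper quotes \cite{BaeVal15} and the type D operator identity for $\sCurl_{3,1}\sCurl_{2,2}\sTwist_{1,1}$ where you phrase it as a Lie derivative of the background Weyl spinor, which is the same computation up to the convention for spinorial Lie derivatives), and then everything is substituted into \eqref{eq:AA11Eq2} and reduced to $-\Psi_2\kappa_1^3(\mathcal{L}_\xi\zeta)_{AA'}$ by expanding the extended operators, with the detailed reduction delegated to symbolic computation exactly as in the paper. Your additional consistency check via the gauge invariance of $\mathcal{M}_{ABA'B'}$ is a sensible supplement but, as you note, only fixes $\mathcal{A}_{AA'}+\Psi_2\kappa_1^3(\mathcal{L}_\xi\zeta)_{AA'}$ up to a Killing one-form, so the direct calculation remains the proof.
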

\begin{proof}
For the curvature, one can use the results of \cite{BaeVal15}  and transform it to the operators of this paper using the type D structure of the curvature. Applying $\mathcal{K}^2 \mathcal{K}^2$ or $\mathcal{K}^1 \mathcal{K}^2$ onto $\phi_{ABCD}$ gives
\begin{subequations} 
\begin{align}
(\mathcal{K}^2 \mathcal{K}^2 \phi)_{}={}&\tfrac{1}{2} \Psi_{2} (\sDiv_{(6,0)} \nu)_{},\\
(\mathcal{K}^1 \mathcal{K}^2 \phi)_{AB}={}&\tfrac{3}{2} \Psi_{2} (\mathcal{K}^1 \mathcal{K}^1 \sCurl_{(2,0)} \nu)_{AB} .
\end{align}
\end{subequations} 
These relations can then be used in \eqref{eq:AA11Eq2} to yield
\begin{align}
\mathcal{A}_{AA'}={}&- \Psi_{2} \kappa_1^3 \xi^{BB'} (\mathcal{K}^0 \mathcal{K}^2 \sTwist \nu)_{ABA'B'}
 + \Psi_{2} \kappa_1^3 \xi^{B}{}_{A'} (\mathcal{K}^1 \mathcal{K}^1 \sCurl_{(2,0)} \nu)_{AB}\nonumber\\
& -  \tfrac{1}{2} \Psi_{2} \kappa_1^4 (\mathcal{K}^1 \sTwist_{(2,0)} \sDiv \nu)_{AA'}
 + \tfrac{1}{2} \Psi_{2} \kappa_1^4 (\mathcal{K}^1 \sTwist_{(-1,0)} \sDiv_{(6,0)} \nu)_{AA'}.
\end{align}
An expansion of the extended indices and a reformulation of the Lie derivative in terms of fundamental spinor operators gives the gauge dependence of $\mathcal{A}_{AA'}$. 
\end{proof}

\begin{corollary}\label{cor:tomain:intro}
For linearized diffeomorphism generated by a real vector $\nu^{AA'}$ on a Kerr background, we have
\begin{align}
 \Im \mathcal{A}_{AA'}{}=& 0. \label{eq:ImAgauge}
\end{align}

Then we have the covariant form of the TSI for linearized gravity on Kerr in terms of manifestly gauge invariant quantities,
\begin{align} \label{eq:spin2-TSI}
(\sCurlDagger \sCurlDagger \sCurlDagger \sCurlDagger_{(4,0)} (\kappa_1^4 \mathcal{K}^1 \mathcal{P}^{2} \vartheta\Psi))_{A'B'C'D'}={}&\Psi_{2} \kappa_1^3 (\mathcal{L}_{\xi}\mathcal{P}^{2} \overline{\vartheta\Psi})_{A'B'C'D'}
 + 2i (\tilde{\mathcal{L}}_{\ImA}\overline{\Psi})_{A'B'C'D'},
\end{align} 
where $\tilde{\mathcal{L}}_{\ImA}$ is given by \eqref{eq:LieTildePsiDagger} with $\mathcal{A}$ replaced by $\Im\mathcal{A}$.
\end{corollary}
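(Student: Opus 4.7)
The plan is to treat the two claims of the corollary in sequence, both exploiting the fact that on Kerr the background invariant $\Psi_2\kappa_1^3$ equals $M/27$, hence is a \emph{real} constant (Remark~\ref{rem:explicit}). For \eqref{eq:ImAgauge}, I would simply substitute this reality into \eqref{eq:Agauge}: since $\Psi_2\kappa_1^3\in\mathbb{R}$, the Killing vector $\xi^a=(\partial_t)^a$ is real, and $\zeta^{AA'}$ is by assumption real, the right-hand side $-\Psi_2\kappa_1^3(\mathcal{L}_\xi\zeta)_{AA'}$ of \eqref{eq:Agauge} is a real one-form, whose imaginary part vanishes. Thus $\Im\mathcal{A}_{AA'}$ is invariant under real infinitesimal diffeomorphisms, which is \eqref{eq:ImAgauge}.

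For \eqref{eq:spin2-TSI}, the starting point is the covariant TSI \eqref{eq:covTSI:typeD} of Remark~\ref{rem:CovTSI}, whose right-hand side is $(\hat{\mathcal{L}}_{\mathcal{A}}\bar\Psi)_{A'B'C'D'}+\kappa_1^3\Psi_2(\mathcal{L}_\xi\bar\phi)_{A'B'C'D'}$, and the task is to rewrite this combination in terms of the manifestly gauge invariant quantities $\Im\mathcal{A}$ and $\overline{\vartheta\Psi}$. The key input is the general type-D identity obtained by equating the two expressions \eqref{eq:MasdWeylCurvature} and \eqref{eq:MasdWeylCurvatureb} for $\vartheta\Psi[\mathcal{M}]_{ABCD}$, namely
\begin{align*}
(\hat{\mathcal{L}}_\mathcal{A}\Psi)_{ABCD}=-\Psi_2\kappa_1^3\,\mathcal{L}_\xi\bigl(\phi-\mathcal{P}^{2}_{4,0}\phi\bigr)_{ABCD}.
\end{align*}
Complex conjugating, using that $\xi^a$ is real and that on Kerr $\bar\Psi_2\bar\kappa_1^3=\Psi_2\kappa_1^3$, gives the primed counterpart
\begin{align*}
(\hat{\mathcal{L}}_{\bar\mathcal{A}}\bar\Psi)_{A'B'C'D'}=-\Psi_2\kappa_1^3\,\mathcal{L}_\xi\bigl(\bar\phi-\mathcal{P}^{2}_{0,4}\bar\phi\bigr)_{A'B'C'D'}.
\end{align*}
The background Weyl spinor $\bar\Psi_{A'B'C'D'}$ has only a middle spin-weight component, so $\mathcal{P}^2_{0,4}\bar\Psi=0$ and hence by \eqref{eq:phidef} $\mathcal{P}^2_{0,4}\bar\phi=\mathcal{P}^2_{0,4}\overline{\vartheta\Psi}$; a short rearrangement then yields
\begin{align*}
\hat{\mathcal{L}}_{\bar\mathcal{A}}\bar\Psi+\Psi_2\kappa_1^3\,\mathcal{L}_\xi\bar\phi=\Psi_2\kappa_1^3\,\mathcal{L}_\xi\mathcal{P}^{2}_{0,4}\overline{\vartheta\Psi}.
\end{align*}
Finally, splitting $\mathcal{A}=\Re\mathcal{A}+i\,\Im\mathcal{A}$ so that $\hat{\mathcal{L}}_{\mathcal{A}}\bar\Psi=\hat{\mathcal{L}}_{\bar\mathcal{A}}\bar\Psi+2i\,\hat{\mathcal{L}}_{\Im\mathcal{A}}\bar\Psi$, substituting into \eqref{eq:covTSI:typeD}, and using the preceding identity to eliminate the $\mathcal{L}_\xi\bar\phi$ contributions reproduces \eqref{eq:spin2-TSI} exactly.

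The only mildly delicate step is justifying the complex conjugation of $(\hat{\mathcal{L}}_\mathcal{A}\Psi)_{ABCD}$; this rests on the identities $\overline{\sDiv_{1,1,6,0}\mathcal{A}}=\sDiv_{1,1,0,6}\bar\mathcal{A}$ and $\overline{\sCurl_{1,1,2,0}\mathcal{A}}=\sCurlDagger_{1,1,0,2}\bar\mathcal{A}$, which follow immediately from the conjugation rules for the fundamental operators recorded after \eqref{eq:FundamentalOperators} together with the definition \eqref{eq:ExtendedFundamentalOperators1} of the extended operators.
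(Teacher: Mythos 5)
Your proposal is correct and follows essentially the same route as the paper: reality of $\zeta^{AA'}$, $\xi^{AA'}$ and $\Psi_2\kappa_1^3$ in \eqref{eq:Agauge} gives \eqref{eq:ImAgauge}, and \eqref{eq:spin2-TSI} is obtained by expanding \eqref{eq:MsdWeylCurvature} (i.e.\ using \eqref{eq:covTSI:typeD}) and subtracting the complex conjugate of the vacuum identity between \eqref{eq:MasdWeylCurvature} and \eqref{eq:MasdWeylCurvatureb}, with \eqref{eq:phidef} and \eqref{eq:phihatDef} converting $\phi$ to $\vartheta\Psi$. You merely spell out the conjugation and the $\mathcal{A}=\bar{\mathcal{A}}+2i\,\Im\mathcal{A}$ split that the paper's terse proof leaves implicit.
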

\begin{proof}
Equation \eqref{eq:ImAgauge} follows from reality of $\xi^{AA'}, \nu^{AA'}$ and $\Psi_2 \kappa_1^3$ on Kerr , see remark~\ref{rem:explicit}.
Expand the curvature operator in \eqref{eq:MsdWeylCurvature} and subtract the complex conjugate of the vacuum identity between \eqref{eq:MasdWeylCurvature} and \eqref{eq:MasdWeylCurvatureb}. To end up with the identity in terms of $\vartheta\Psi_{ABCD}$, use \eqref{eq:phihatDef} and \eqref{eq:phidef}.
\end{proof}

\section{Conclusions and outlook}

The main result, theorem \ref{thm:mainthm}, shows how the Debye potential construction for linearized gravity on a vacuum spacetime of Petrov type D (generalizing the Sachs-Bergmann super-potential for linearized gravity on Minkowski space) can be used to define a complex solution $\MidMet_{ab}$ to the field equations of linearized gravity, which in view of the identity \eqref{eq:CurlDgCurlDgEK1Spin2ToLie}, in the source-free case, is essentially pure gauge. In particular, 
$$
\mathcal{M}_{ABA'B'}- \tfrac{1}{2} \Psi_{2} \kappa_1^3 (\mathcal{L}_{\xi}\dot g)_{ABA'B'}
$$ 
is a pure gauge metric. 

Calculating the self-dual linearized Weyl curvature of $\MidMet_{ab}$ in the source-free case leads to the covariant form \eqref{eq:covTSI:typeD} of the TSI for linearized gravity. It is worth emphasizing that when restricted to the Kerr background, it yields a manifestly gauge invariant form of the TSI given in \eqref{eq:spin2-TSI}, see also appendix~\ref{secApp:GHPForm} for the GHP component form which include three additional equations compared to the classical TSI. In particular, $\Im\mathcal{A}_{AA'}$ is gauge invariant in that case. In \citep{2018PhRvL.121e1104A} two of the authors have presented a new set of gauge invariant quantities for linearized gravity which is complete, cf. \cite{AABKW:complete}. In the proof of completeness, theorem \ref{thm:mainthm} plays a central role. 

Recall that the Bianchi identity in differential geometry is a geometric identity which is valid independently of any field equation. The same is true for the linearized Bianchi identity with sources \eqref{eq:LinBianchi1}, and its specialization to Petrov type D backgrounds \eqref{eq:LinBianchi2}. It is important to note that several of the fundamental identities including \eqref{eq:covTME} and \eqref{eq:CurlDgCurlDgEK1Spin2ToLie} presented here are \emph{operator identities} derived from the Bianchi identity by applying suitable fundamental operators and making use of commutation rules, and are therefore valid for any linearized metric, not necessarily a solution of the linearized Einstein equations. Operator identities as the ones presented in theorem \ref{thm:mainthm} have many interesting applications. For example, they lead to symmetry operators for linearized gravity via the method of adjoint operators, see \cite{2016arXiv160904584A}.

In the paper \citep{2014arXiv1412.2960A} a new conserved super-energy tensor $V_{ab}$ for the Maxwell field on spacetimes of Petrov type D was found. The fact that $V_{ab}$ is conserved can be seen, by direct computation, to be a consequence of the full TSI system for the Maxwell field. In view of the fundamental role played by the full TSI for the spin-1 case in analyzing the conservation laws implied by the Maxwell field equations, we expect that this new covariant identity providing the full TSI for the spin-2 case will lead to a more complete understanding of the structure of linearized gravity on the Kerr spacetime, and in particular the conservation laws implied by this system. 

As mentioned in the introduction, Coll \emph{et al.} in \cite{coll:etal:1987JMP....28.1075C}
discussed an equivalent system in the spin-1 case which consists of the TME and the full TSI in that case. It may be interesting if a similar statement can be made for linearized gravity based on the full TSI system presented here.

\appendix

\section{GHP form of some expressions} \label{secApp:GHPForm}
In this section we present the components of certain spinor equations with respect to a principal null dyad in a Petrov type D spacetime. Recall that in this case, only one of the Newman-Penrose Weyl scalars, $\Psi_2$, is non-zero. We are using the compact GHP formalism \citep{GHP} in which the operators $\tho, \tho', \edt, \edt'$ are weighted directional derivatives along the tetrad. The computations have been performed using the \emph{SpinFrames} package \citep{SpinFrames}.

We note the relation between components of the varied Weyl spinor $\vartheta\Psi_{ABCD}$ used in this paper and the linearized Newman-Penrose Weyl scalars (the difference is due to the variation of the tetrad/dyad).
\begin{align}
\vartheta \Psi_{0} = \dot\Psi_0, &&
\vartheta \Psi_{1} = \dot\Psi_1 + 3 \Psi_{2} (\vartheta o)_{0}, &&
\vartheta \Psi_{2} = \dot\Psi_2,  &&
\vartheta \Psi_{3} = \dot\Psi_3 - 3 \Psi_{2} (\vartheta \iota)_{1}, &&
\vartheta \Psi_{4} = \dot\Psi_4.
\end{align}
The components of the spin-2 TME \eqref{eq:covTME} in the source-free case take the compact form 
\begin{subequations}\label{eq:TME:spin-2}
\begin{align}
 \big((\tho - 3 \rho -\bar{\rho})\tho'   - (\edt - 3 \tau  -  \bar{\tau}')\edt'  - 3 \Psi_{2} \big) (\kappa_1 \vartheta\Psi_0) ={}&0,\\
 \big((\tho' - 3 \rho'-  \bar{\rho}')\tho  - (\edt' -3 \tau'- \bar{\tau}) \edt  - 3 \Psi_{2} \big) (\kappa_1\vartheta\Psi_4)={}&0.
\end{align}
\end{subequations}
The components of the complex vector field $\mathcal{A}_{AA'}$ defined in \eqref{eq:AA11Def} in the source-free case are given by
\begin{subequations}
\begin{align}
\mathcal{A}_{00'}={}&
- \tfrac{3}{4} \slashed{G} \Psi_{2} \kappa_{1}{}^4 \rho
 - 3 G_{11'} \Psi_{2} \kappa_{1}{}^4 \rho
 + 3 G_{10'} \Psi_{2} \kappa_{1}{}^4 \tau
 - 2 \kappa_1^4\vartheta \Psi_{1} \tau'
 + \tho (\kappa_1^4\vartheta \Psi_{2}),\\
\mathcal{A}_{01'}={}&
 -  \tfrac{3}{4} \slashed{G} \Psi_{2} \kappa_{1}{}^4 \tau
-3 G_{12'} \Psi_{2} \kappa_{1}{}^4 \rho
 + 3 G_{11'} \Psi_{2} \kappa_{1}{}^4 \tau
 - 2 \kappa_1^4\vartheta \Psi_{1} \rho'
 + \edt (\kappa_1^4\vartheta \Psi_{2}),\\
\mathcal{A}_{10'}={}&
 \tfrac{3}{4} \slashed{G} \Psi_{2} \kappa_{1}{}^4 \tau'
+  3 G_{10'} \Psi_{2} \kappa_{1}{}^4 \rho'
 - 3 G_{11'} \Psi_{2} \kappa_{1}{}^4 \tau'
+ 2 \kappa_1^4\vartheta \Psi_{3} \rho
 -  \edt' (\kappa_1^4\vartheta \Psi_{2}),\\
\mathcal{A}_{11'}={}&
\tfrac{3}{4} \slashed{G} \Psi_{2} \kappa_{1}{}^4 \rho'
 + 3 G_{11'} \Psi_{2} \kappa_{1}{}^4 \rho'
 - 3 G_{12'} \Psi_{2} \kappa_{1}{}^4 \tau'
 + 2 \kappa_1^4\vartheta \Psi_{3} \tau
 -  \tho' (\kappa_1^4\vartheta \Psi_{2}).
\end{align}
\end{subequations} 
We also state an explicit form of the imaginary part on Kerr, since it is gauge invariant,
\begin{subequations} 
\begin{align}
\ImA_{00'}={}&
- \tfrac{3}{2}i G_{10'} \Psi_{2} \kappa_{1}{}^4 \tau
 -  \tfrac{3}{2}i G_{01'} \Psi_{2} \kappa_{1}{}^4 \tau'
 + i \kappa_1^4\vartheta \Psi_{1} \tau'
 - i \bar{\kappa}_{1'}^4\overline{\vartheta \Psi}_{1'} \bar{\tau}'
 + \tho \Im(\kappa_1^4\vartheta\Psi_2),\\
\ImA_{01'}={}&
\tfrac{3}{2}i G_{12'} \Psi_{2} \kappa_{1}{}^4 \rho
 + \tfrac{3}{2}i G_{01'} \Psi_{2} \kappa_{1}{}^4 \rho'
 + i \kappa_1^4\vartheta \Psi_{1} \rho'
 + i \bar{\kappa}_{1'}^4\overline{\vartheta \Psi}_{3'} \bar{\rho}
 - i \edt \Re(\kappa_1^4\vartheta\Psi_2),\\
\ImA_{10'}={}&
- \tfrac{3}{2}i G_{21'} \Psi_{2} \kappa_{1}{}^4 \rho
 -  \tfrac{3}{2}i G_{10'} \Psi_{2} \kappa_{1}{}^4 \rho'
 - i \kappa_1^4\vartheta \Psi_{3} \rho
 - i \bar{\kappa}_{1'}^4\overline{\vartheta \Psi}_{1'} \bar{\rho}'
 + i \edt' \Re(\kappa_1^4\vartheta\Psi_2),\\
\ImA_{11'}={}&
\tfrac{3}{2}i G_{21'} \Psi_{2} \kappa_{1}{}^4 \tau
 + \tfrac{3}{2}i G_{12'} \Psi_{2} \kappa_{1}{}^4 \tau'
 - i \kappa_1^4\vartheta \Psi_{3} \tau
 + i \bar{\kappa}_{1'}^4\overline{\vartheta \Psi}_{3'} \bar{\tau}
 -  \tho' \Im(\kappa_1^4\vartheta\Psi_2).
\end{align}
\end{subequations}
The dyad components of the full covariant TSI \eqref{eq:spin2-TSI} on a Kerr background are given by 
\begin{subequations}
\begin{align}
0={}&\tho' \tho' \tho' \tho' (\kappa_1^4\vartheta\Psi_{0})
 -  \edt \edt \edt \edt (\kappa_1^4\vartheta\Psi_{4})
 -  \tfrac{M}{27} \mathcal{L}_{\xi}\overline{\vartheta\Psi_4},\\
0={}&\bigl(\edt' (\tho'{}\negthinspace -  \bar{\rho}') - 6 \bar{\rho}' \bar{\tau}\bigr)(\tho'{}\negthinspace + 2 \bar{\rho}')
(\tho'{}\negthinspace + 2 \bar{\rho}')(\kappa_1^4\vartheta\Psi_{0})\nonumber\\
& -  \bigl(\tho (\edt{} -  \bar{\tau}') - 6 \bar{\rho} \bar{\tau}'\bigr)(\edt{} + 2 \bar{\tau}')
 (\edt{} + 2 \bar{\tau}')(\kappa_1^4\vartheta\Psi_{4})\nonumber\\
& - 3i \bar\Psi_{2} (\edt{} -  \tau + 2 \bar{\tau}')\ImA_{11'}
 + 3i \bar\Psi_{2} (\tho'{}\negthinspace -  \rho' + 2 \bar{\rho}')\ImA_{01'},\\
0={}&\bigl((\edt'{}\negthinspace -  \bar{\tau})\tho'{} -12 \bar{\rho}' \bar{\tau}\bigr)
((\edt'{}\negthinspace + 2 \bar{\tau})(\tho'{} + 3 \bar{\rho}') -2 \bar{\rho}' \bar{\tau})(\kappa_1^4\vartheta\Psi_{0})\nonumber\\
& -  \bigl((\edt{} -  \bar{\tau}')\tho{} -12 \bar{\rho} \bar{\tau}'\bigr)
 \bigl((\edt{} + 2 \bar{\tau}')(\tho{} + 3 \bar{\rho})-2 \bar{\rho} \bar{\tau}'\bigr)(\kappa_1^4\vartheta\Psi_{4})\nonumber\\
& + i \bar\Psi_{2} (\edt'{}\negthinspace + 5 \bar{\tau} -  \tau')\ImA_{01'}
 + i \bar\Psi_{2} (\edt{} -  \tau + 5 \bar{\tau}')\ImA_{10'}\nonumber\\
& - i \bar\Psi_{2} (\tho{} -  \rho + 5 \bar{\rho})\ImA_{11'}
  - i \bar\Psi_{2} (\tho'{}\negthinspace -  \rho' + 5 \bar{\rho}')\ImA_{00'},\\
0={}&\bigl(\tho' (\edt'{}\negthinspace -  \bar{\tau}) - 6 \bar{\rho}' \bar{\tau}\bigr)(\edt'{}\negthinspace + 2 \bar{\tau})
(\edt'{}\negthinspace + 2 \bar{\tau})(\kappa_1^4\vartheta\Psi_{0})\nonumber\\
& -  \bigl(\edt (\tho{} -  \bar{\rho}) - 6 \bar{\rho} \bar{\tau}'\bigr)(\tho{} + 2 \bar{\rho})
(\tho{} + 2 \bar{\rho})(\kappa_1^4\vartheta\Psi_{4})\nonumber\\
& - 3i \bar\Psi_{2} (\edt'{}\negthinspace + 2 \bar{\tau} -  \tau')\ImA_{00'}
 + 3i \bar\Psi_{2} (\tho{} -  \rho + 2 \bar{\rho})\ImA_{10'},\\
0={}&\edt' \edt' \edt' \edt' (\kappa_1^4\vartheta\Psi_{0}) -  \tho \tho \tho \tho (\kappa_1^4\vartheta\Psi_{4})
 -  \tfrac{M}{27} \mathcal{L}_{\xi}\overline{\vartheta\Psi_0}.
\end{align}
\end{subequations} 

\begin{remark}
Let $\phi_{AB}$ be a solution to the source-free Maxwell equations and let $\phi_i$, $i=0,1,2$ be the Maxwell scalars. The spin-1 TME is given by
\begin{subequations}\label{eq:TME:spin-1}
\begin{align}
 \big((\tho -\rho - \bar{\rho} )\tho'  - (\edt -  \tau -  \bar{\tau}')\edt'\big) (\kappa_1 \phi_0) ={}&0,\\
 \big((\tho' - \rho' -  \bar{\rho}' )\tho   -  (\edt' -\tau' -\bar{\tau} ) \edt \big) (\kappa_1 \phi_2)={}&0,
\end{align}
\end{subequations}
and the spin-1 extreme TSI are 
\begin{subequations}\label{eq:extremeTSI:spin1}
\begin{align}  
{\edt}'{\edt}' (\kappa_1^2 \phi_0) ={}&   \tho \tho (\kappa_1^2 \phi_2)\\ 
 {\tho}'{\tho}' (\kappa_1^2 \phi_0)={}&  \edt\edt (\kappa_1^2 \phi_2). 
\end{align}
\end{subequations}

For the Maxwell field, the full set of TSI in fact contains a third relation, cf. \cite{coll:etal:1987JMP....28.1075C}, which can be written in the form  
\begin{align}
 (\tho \edt + \bar{\tau}' \tho ) (\kappa_1^2 \phi_2)  ={}& ( \tho' \edt' + \bar{\tau} \tho' ) (\kappa_1^2 \phi_0).
\end{align}
As mentioned above, the full set of TME/TSI equations implied by the Maxwell field equation, has the important consequence that the symmetric tensor $V_{ab}$ introduced in \citep{2014arXiv1412.2960A} is conserved. The tensor $V_{ab}$ is, in contrast to the standard Maxwell stress-energy tensor, independent of the non-radiative modes of the Maxwell field, and is therefore a suitable tool to construct dispersive estimates for the Maxwell field on the Kerr spacetime. 
\end{remark}

\section{$\mathcal{K}$-operator commutators} \label{app:KopComm}

To avoid clutter in the notation we present the commutators as operators which can be applied to arbitrary elements of $\SymSpinSec_(k,l)$, where $k$ and $l$ are large enough so that the combination of operators on the left hand side is properly defined. $v,w$ used below are arbitrary constants. The proof is straightforward but tedious. Examples can be found in Lemma~\ref{lem:Kcommutators}. Complex conjugating these identities gives commutators for the $\overline{\mathcal{K}}$ operators.
\begin{lemma}
Commuting $\mathcal{K}$-operator outside the extended fundamental spinor operators on $\SymSpinSec_{k,l}$ yields
\begin{subequations}
\begin{align}
\sDiv_{(v,w)}\mathcal{K}^2
={}&\mathcal{K}^2 \sDiv_{(v+1,w)},\\
\sCurlDagger_{(v,w)}\mathcal{K}^2
={}&\mathcal{K}^2 \sCurlDagger_{(v+1,w)},\\
\sCurl_{(v,w)}\mathcal{K}^2
={}&\mathcal{K}^2 \sCurl_{(v-1,w)}
-\tfrac{1}{k+1}\mathcal{K}^1 \sDiv_{(v+k,w)},\\
\sTwist_{(v,w)} \mathcal{K}^2
={}&\mathcal{K}^2 \sTwist_{(v-1,w)}
-\tfrac{1}{k+1}\mathcal{K}^1 \sCurlDagger_{(v+k,w)},\\
\sDiv_{(v,w)}\mathcal{K}^1
={}&\tfrac{(k-1)(k+2)}{k(k+1)}\mathcal{K}^1 \sDiv_{(v,w)}
+\tfrac{2}{k}\mathcal{K}^2 \sCurl_{(v-k-1,w)},\\
\sCurlDagger_{(v,w)}\mathcal{K}^1
={}&\tfrac{(k-1)(k+2)}{k(k+1)}\mathcal{K}^1 \sCurlDagger_{(v,w)}
+\tfrac{2}{k}\mathcal{K}^2 \sTwist_{(v-k-1,w)},\\
\sCurl_{(v,w)}\mathcal{K}^1
={}&\mathcal{K}^1 \sCurl_{(v,w)}
+\tfrac{1}{2(k+1)}\mathcal{K}^0 \sDiv_{(v+k+1,w)},\\
\sTwist_{(v,w)}\mathcal{K}^1
={}&\mathcal{K}^1 \sTwist_{(v,w)}
+\tfrac{1}{2(k+1)}\mathcal{K}^0 \sCurlDagger_{(v+k+1,w)},\\
\sDiv_{(v,w)}\mathcal{K}^0
={}&\tfrac{k(k+3)}{(k+1)(k+2)}\mathcal{K}^0 \sDiv_{(v-1,w)}
-\tfrac{4}{k+2}\mathcal{K}^1 \sCurl_{(v-k-2,w)},\\
\sCurlDagger_{(v,w)}\mathcal{K}^0
={}&\tfrac{k(k+3)}{(k+1)(k+2)}\mathcal{K}^0 \sCurlDagger_{(v-1,w)}
-\tfrac{4}{k+2}\mathcal{K}^1 \sTwist_{(v-k-2,w)},\\
\sCurl_{(v,w)}\mathcal{K}^0
={}&\mathcal{K}^0 \sCurl_{(v+1,w)},\\
\sTwist_{(v,w)}\mathcal{K}^0
={}&\mathcal{K}^0 \sTwist_{(v+1,w)}.
\end{align}
\end{subequations}
\end{lemma}
\begin{corollary}
Commuting $\mathcal{K}$-operator inside the extended fundamental spinor operators on $\SymSpinSec_{k,l}$ yields
\begin{subequations}
\begin{align}
\mathcal{K}^2 \sDiv_{(v,w)}
={}&\sDiv_{(v-1,w)}\mathcal{K}^2,\\
\mathcal{K}^2 \sCurlDagger_{(v,w)}
={}&\sCurlDagger_{(v-1,w)}\mathcal{K}^2,\\
\mathcal{K}^2 \sCurl_{(v,w)}
={}&\tfrac{(k-1)(k+2)}{k(k+1)}\sCurl_{(v+1,w)}\mathcal{K}^2
+\tfrac{1}{k+1}\sDiv_{(v+k+1,w)}\mathcal{K}^1,\\
\mathcal{K}^2 \sTwist_{(v,w)}
={}
&\tfrac{(k-1)(k+2)}{k(k+1)}\sTwist_{(v+1,w)}\mathcal{K}^2
+\tfrac{1}{k+1}\sCurlDagger_{(v+k+1,w)}\mathcal{K}^1,\\
\mathcal{K}^1 \sDiv_{(v,w)}
={}
&\sDiv_{(v,w)}\mathcal{K}^1
-\tfrac{2}{k}\sCurl_{(v-k,w)}\mathcal{K}^2,\\
\mathcal{K}^1 \sCurlDagger_{(v,w)}
={}
&\sCurlDagger_{(v,w)}\mathcal{K}^1
-\tfrac{2}{k}\sTwist_{(v-k,w)}\mathcal{K}^2, \\
\mathcal{K}^1 \sCurl_{(v,w)}={}
&\tfrac{k(k+3)}{(k+1)(k+2)}\sCurl_{(v,w)}\mathcal{K}^1
-\tfrac{1}{2(k+1)}\sDiv_{(v+k+2,w)}\mathcal{K}^0,\\
\mathcal{K}^1 \sTwist_{(v,w)}={}
&\tfrac{k(k+3)}{(k+1)(k+2)}\sTwist_{(v,w)}\mathcal{K}^1
-\tfrac{1}{2(k+1)}\sCurlDagger_{(v+k+2,w)}\mathcal{K}^0,\\
\mathcal{K}^0 \sDiv_{(v,w)}={}
&\sDiv_{(v+1,w)}\mathcal{K}^0
+\tfrac{4}{k+2}\sCurl_{(v-k-1,w)}\mathcal{K}^1,\\
\mathcal{K}^0\sCurlDagger_{(v,w)}={}
&\sCurlDagger_{(v+1,w)}\mathcal{K}^0
+\tfrac{4}{k+2}\sTwist_{(v-k-1,w)}\mathcal{K}^1,\\
\mathcal{K}^0\sCurl_{(v,w)}
={}&\sCurl_{(v-1,w)}\mathcal{K}^0,\\
\mathcal{K}^0 \sTwist_{(v,w)}
={}&\sTwist_{(v-1,w)}\mathcal{K}^0.
\end{align}
\end{subequations}
\end{corollary}
\begin{lemma}
For commuting two $\mathcal{K}$-operators on $\SymSpinSec_{k,l}$, we have the identities
\begin{subequations}
\begin{align}
\mathcal{K}^2 \mathcal{K}^0 ={}&
\tfrac{(k-1)(k+2)}{k(k+1)}\mathcal{K}^0 \mathcal{K}^2
-\tfrac{4}{k+2}\mathcal{K}^1 \mathcal{K}^1,\\
\mathcal{K}^1 \mathcal{K}^0 ={}&
\tfrac{k}{k+2}\mathcal{K}^0 \mathcal{K}^1,\\
\mathcal{K}^2 \mathcal{K}^1 ={}&
\tfrac{k-2}{k}\mathcal{K}^1 \mathcal{K}^2,\\
\mathcal{K}^1 \mathcal{K}^1 ={}&
\mathrm{Id} - \tfrac{k-1}{k}\mathcal{K}^0 \mathcal{K}^2,
\end{align}
\end{subequations}
where $\mathrm{Id}$ is the identity operator.
\end{lemma}

\subsection*{Acknowledgements} A substantial part of the work on this paper was carried out during visits by two of the authors, S.A. and T.B. at the Department of Mathematics, Royal Institute of Technology (KTH). We thank the KTH for its support and hospitality. T.B. acknowledges partial support by EPSRC under grant number EP/J011142/1, and L.A. acknowledges support from the Knut and Alice Wallenberg Foundation through a Wallenberg Professorship at KTH. We are grateful to Bernard Whiting for some helpful remarks and for his interest in this work.

\newcommand{\arxivref}[1]{\href{http://www.arxiv.org/abs/#1}{{arXiv.org:#1}}}
\newcommand{\mnras}{Monthly Notices of the Royal Astronomical Society}
%\newcommand{\prd}{Phys. Rev. D} 
%\newcommand{\apj}{Astrophysical J.}

%\bibliography{gr4}
%\bibliography{kerrlinstab}

\begin{thebibliography}{41}%
\makeatletter
\providecommand \@ifxundefined [1]{%
 \@ifx{#1\undefined}
}%
\providecommand \@ifnum [1]{%
 \ifnum #1\expandafter \@firstoftwo
 \else \expandafter \@secondoftwo
 \fi
}%
\providecommand \@ifx [1]{%
 \ifx #1\expandafter \@firstoftwo
 \else \expandafter \@secondoftwo
 \fi
}%
\providecommand \natexlab [1]{#1}%
\providecommand \enquote  [1]{``#1''}%
\providecommand \bibnamefont  [1]{#1}%
\providecommand \bibfnamefont [1]{#1}%
\providecommand \citenamefont [1]{#1}%
\providecommand \href@noop [0]{\@secondoftwo}%
\providecommand \href [0]{\begingroup \@sanitize@url \@href}%
\providecommand \@href[1]{\@@startlink{#1}\@@href}%
\providecommand \@@href[1]{\endgroup#1\@@endlink}%
\providecommand \@sanitize@url [0]{\catcode `\\12\catcode `\$12\catcode
  `\&12\catcode `\#12\catcode `\^12\catcode `\_12\catcode `\%12\relax}%
\providecommand \@@startlink[1]{}%
\providecommand \@@endlink[0]{}%
\providecommand \url  [0]{\begingroup\@sanitize@url \@url }%
\providecommand \@url [1]{\endgroup\@href {#1}{\urlprefix }}%
\providecommand \urlprefix  [0]{URL }%
\providecommand \Eprint [0]{\href }%
\providecommand \doibase [0]{http://dx.doi.org/}%
\providecommand \selectlanguage [0]{\@gobble}%
\providecommand \bibinfo  [0]{\@secondoftwo}%
\providecommand \bibfield  [0]{\@secondoftwo}%
\providecommand \translation [1]{[#1]}%
\providecommand \BibitemOpen [0]{}%
\providecommand \bibitemStop [0]{}%
\providecommand \bibitemNoStop [0]{.\EOS\space}%
\providecommand \EOS [0]{\spacefactor3000\relax}%
\providecommand \BibitemShut  [1]{\csname bibitem#1\endcsname}%
\let\auto@bib@innerbib\@empty
%</preamble>
\bibitem [{\citenamefont {{Kerr}}(1963)}]{kerr:1963PhRvL..11..237K}%
  \BibitemOpen
  \bibfield  {author} {\bibinfo {author} {\bibfnamefont {R.~P.}\ \bibnamefont
  {{Kerr}}},\ }\bibfield  {title} {\enquote {\bibinfo {title} {{Gravitational
  Field of a Spinning Mass as an Example of Algebraically Special Metrics}},}\
  }\href {\doibase 10.1103/PhysRevLett.11.237} {\bibfield  {journal} {\bibinfo
  {journal} {Physical Review Letters}\ }\textbf {\bibinfo {volume} {11}},\
  \bibinfo {pages} {237--238} (\bibinfo {year} {1963})}\BibitemShut {NoStop}%
\bibitem [{\citenamefont {{Andersson}}\ \emph {et~al.}(2015)\citenamefont
  {{Andersson}}, \citenamefont {{B{\"a}ckdahl}},\ and\ \citenamefont
  {{Blue}}}]{2015arXiv150402069A}%
  \BibitemOpen
  \bibfield  {author} {\bibinfo {author} {\bibfnamefont {L.}~\bibnamefont
  {{Andersson}}}, \bibinfo {author} {\bibfnamefont {T.}~\bibnamefont
  {{B{\"a}ckdahl}}}, \ and\ \bibinfo {author} {\bibfnamefont {P.}~\bibnamefont
  {{Blue}}},\ }\bibfield  {title} {\enquote {\bibinfo {title} {{Spin geometry
  and conservation laws in the Kerr spacetime}},}\ }in\ \href {\doibase
  10.4310/SDG.2015.v20.n1.a8} {\emph {\bibinfo {booktitle} {One hundred years
  of general relativity}}},\ \bibinfo {editor} {edited by\ \bibinfo {editor}
  {\bibfnamefont {L.}~\bibnamefont {Bieri}}\ and\ \bibinfo {editor}
  {\bibfnamefont {S.-T.}\ \bibnamefont {Yau}}}\ (\bibinfo  {publisher}
  {International Press},\ \bibinfo {address} {Boston},\ \bibinfo {year}
  {2015})\ pp.\ \bibinfo {pages} {183--226},\ \bibinfo {note}
  {\arxivref{1504.02069}}\BibitemShut {NoStop}%
\bibitem [{\citenamefont {Stephani}\ \emph {et~al.}(2003)\citenamefont
  {Stephani}, \citenamefont {Kramer}, \citenamefont {MacCallum}, \citenamefont
  {Hoenselaers},\ and\ \citenamefont
  {Herlt}}]{stephani:etal:2009esef.book.....S}%
  \BibitemOpen
  \bibfield  {author} {\bibinfo {author} {\bibfnamefont {H.}~\bibnamefont
  {Stephani}}, \bibinfo {author} {\bibfnamefont {D.}~\bibnamefont {Kramer}},
  \bibinfo {author} {\bibfnamefont {M.}~\bibnamefont {MacCallum}}, \bibinfo
  {author} {\bibfnamefont {C.}~\bibnamefont {Hoenselaers}}, \ and\ \bibinfo
  {author} {\bibfnamefont {E.}~\bibnamefont {Herlt}},\ }\href {\doibase
  10.1017/CBO9780511535185} {\emph {\bibinfo {title} {Exact solutions of
  {E}instein's field equations}}},\ \bibinfo {edition} {2nd}\ ed.,\ Cambridge
  Monographs on Mathematical Physics\ (\bibinfo  {publisher} {Cambridge
  University Press, Cambridge},\ \bibinfo {year} {2003})\ pp.\ \bibinfo {pages}
  {xxx+701}\BibitemShut {NoStop}%
\bibitem [{\citenamefont {Penrose}\ and\ \citenamefont
  {Rindler}(1986)}]{Penrose:1986fk}%
  \BibitemOpen
  \bibfield  {author} {\bibinfo {author} {\bibfnamefont {R.}~\bibnamefont
  {Penrose}}\ and\ \bibinfo {author} {\bibfnamefont {W.}~\bibnamefont
  {Rindler}},\ }\href {\doibase 10.1017/CBO9780511564048} {\emph {\bibinfo
  {title} {{Spinors and Space-time I {\&} II}}}},\ Cambridge Monographs on
  Mathematical Physics\ (\bibinfo  {publisher} {Cambridge University Press},\
  \bibinfo {address} {Cambridge},\ \bibinfo {year} {1986})\BibitemShut
  {NoStop}%
\bibitem [{\citenamefont {{Walker}}\ and\ \citenamefont
  {{Penrose}}(1970)}]{walker:penrose:1970CMaPh..18..265W}%
  \BibitemOpen
  \bibfield  {author} {\bibinfo {author} {\bibfnamefont {M.}~\bibnamefont
  {{Walker}}}\ and\ \bibinfo {author} {\bibfnamefont {R.}~\bibnamefont
  {{Penrose}}},\ }\bibfield  {title} {\enquote {\bibinfo {title} {{On quadratic
  first integrals of the geodesic equations for type $\{$2,2$\}$
  spacetimes}},}\ }\href {\doibase 10.1007/BF01649445} {\bibfield  {journal}
  {\bibinfo  {journal} {Communications in Mathematical Physics}\ }\textbf
  {\bibinfo {volume} {18}},\ \bibinfo {pages} {265--274} (\bibinfo {year}
  {1970})}\BibitemShut {NoStop}%
\bibitem [{\citenamefont {{Ferrando}}\ and\ \citenamefont
  {{S{\'a}ez}}(2007)}]{ferrando:saez:2007JMP....48j2504F}%
  \BibitemOpen
  \bibfield  {author} {\bibinfo {author} {\bibfnamefont {J.~J.}\ \bibnamefont
  {{Ferrando}}}\ and\ \bibinfo {author} {\bibfnamefont {J.~A.}\ \bibnamefont
  {{S{\'a}ez}}},\ }\bibfield  {title} {\enquote {\bibinfo {title} {{On the
  invariant symmetries of the $\mathcal{D}$-metrics}},}\ }\href {\doibase
  10.1063/1.2799264} {\bibfield  {journal} {\bibinfo  {journal} {Journal of
  Mathematical Physics}\ }\textbf {\bibinfo {volume} {48}},\ \bibinfo {pages}
  {102504} (\bibinfo {year} {2007})},\ \bibinfo {note}
  {\arxivref{0706.3301}}\BibitemShut {NoStop}%
\bibitem [{\citenamefont {{Carter}}(1968)}]{carter:1968PhRv..174.1559C}%
  \BibitemOpen
  \bibfield  {author} {\bibinfo {author} {\bibfnamefont {B.}~\bibnamefont
  {{Carter}}},\ }\bibfield  {title} {\enquote {\bibinfo {title} {{Global
  Structure of the Kerr Family of Gravitational Fields}},}\ }\href {\doibase
  10.1103/PhysRev.174.1559} {\bibfield  {journal} {\bibinfo  {journal}
  {Physical Review}\ }\textbf {\bibinfo {volume} {174}},\ \bibinfo {pages}
  {1559--1571} (\bibinfo {year} {1968})}\BibitemShut {NoStop}%
\bibitem [{\citenamefont {{Teukolsky}}(1973)}]{teukolsky:1973}%
  \BibitemOpen
  \bibfield  {author} {\bibinfo {author} {\bibfnamefont {S.~A.}\ \bibnamefont
  {{Teukolsky}}},\ }\bibfield  {title} {\enquote {\bibinfo {title}
  {{Perturbations of a Rotating Black Hole. I. Fundamental Equations for
  Gravitational, Electromagnetic, and Neutrino-Field Perturbations}},}\ }\href
  {\doibase 10.1086/152444} {\bibfield  {journal} {\bibinfo  {journal} {\apj}\
  }\textbf {\bibinfo {volume} {185}},\ \bibinfo {pages} {635--648} (\bibinfo
  {year} {1973})}\BibitemShut {NoStop}%
\bibitem [{\citenamefont {{Teukolsky}}\ and\ \citenamefont
  {{Press}}(1974)}]{1974ApJ...193..443T}%
  \BibitemOpen
  \bibfield  {author} {\bibinfo {author} {\bibfnamefont {S.~A.}\ \bibnamefont
  {{Teukolsky}}}\ and\ \bibinfo {author} {\bibfnamefont {W.~H.}\ \bibnamefont
  {{Press}}},\ }\bibfield  {title} {\enquote {\bibinfo {title} {{Perturbations
  of a rotating black hole. III - Interaction of the hole with gravitational
  and electromagnetic radiation}},}\ }\href {\doibase 10.1086/153180}
  {\bibfield  {journal} {\bibinfo  {journal} {\apj}\ }\textbf {\bibinfo
  {volume} {193}},\ \bibinfo {pages} {443--461} (\bibinfo {year}
  {1974})}\BibitemShut {NoStop}%
\bibitem [{\citenamefont {{Starobinski{\v i}}}\ and\ \citenamefont
  {{Churilov}}(1974)}]{1974JETP...38....1S}%
  \BibitemOpen
  \bibfield  {author} {\bibinfo {author} {\bibfnamefont {A.~A.}\ \bibnamefont
  {{Starobinski{\v i}}}}\ and\ \bibinfo {author} {\bibfnamefont {S.~M.}\
  \bibnamefont {{Churilov}}},\ }\bibfield  {title} {\enquote {\bibinfo {title}
  {{Amplification of electromagnetic and gravitational waves scattered by a
  rotating ''black hole''}},}\ }\href@noop {} {\bibfield  {journal} {\bibinfo
  {journal} {Soviet Journal of Experimental and Theoretical Physics}\ }\textbf
  {\bibinfo {volume} {38}},\ \bibinfo {pages} {1} (\bibinfo {year} {1974})},\
  \bibinfo {note}
  {\href{http://www.jetp.ac.ru/cgi-bin/dn/e\_038\_01\_0001.pdf}{http://www.jetp.ac.ru/cgi-bin/dn/e\_038\_01\_0001.pdf}}\BibitemShut
  {NoStop}%
\bibitem [{\citenamefont {{Newman}}\ and\ \citenamefont
  {{Penrose}}(1962)}]{1962JMP.....3..566N}%
  \BibitemOpen
  \bibfield  {author} {\bibinfo {author} {\bibfnamefont {E.}~\bibnamefont
  {{Newman}}}\ and\ \bibinfo {author} {\bibfnamefont {R.}~\bibnamefont
  {{Penrose}}},\ }\bibfield  {title} {\enquote {\bibinfo {title} {An approach
  to gravitational radiation by a method of spin coefficients},}\ }\href
  {\doibase 10.1063/1.1724257} {\bibfield  {journal} {\bibinfo  {journal}
  {Journal of Mathematical Physics}\ }\textbf {\bibinfo {volume} {3}},\
  \bibinfo {pages} {566--578} (\bibinfo {year} {1962})}\BibitemShut {NoStop}%
\bibitem [{\citenamefont {{Geroch}}\ \emph {et~al.}(1973)\citenamefont
  {{Geroch}}, \citenamefont {{Held}},\ and\ \citenamefont {{Penrose}}}]{GHP}%
  \BibitemOpen
  \bibfield  {author} {\bibinfo {author} {\bibfnamefont {R.}~\bibnamefont
  {{Geroch}}}, \bibinfo {author} {\bibfnamefont {A.}~\bibnamefont {{Held}}}, \
  and\ \bibinfo {author} {\bibfnamefont {R.}~\bibnamefont {{Penrose}}},\
  }\bibfield  {title} {\enquote {\bibinfo {title} {{A space-time calculus based
  on pairs of null directions}},}\ }\href {\doibase 10.1063/1.1666410}
  {\bibfield  {journal} {\bibinfo  {journal} {Journal of Mathematical Physics}\
  }\textbf {\bibinfo {volume} {14}},\ \bibinfo {pages} {874--881} (\bibinfo
  {year} {1973})}\BibitemShut {NoStop}%
\bibitem [{\citenamefont {{Andersson}}\ \emph
  {et~al.}(2014{\natexlab{a}})\citenamefont {{Andersson}}, \citenamefont
  {{B{\"a}ckdahl}},\ and\ \citenamefont
  {{Blue}}}]{ABB:symop:2014CQGra..31m5015A}%
  \BibitemOpen
  \bibfield  {author} {\bibinfo {author} {\bibfnamefont {L.}~\bibnamefont
  {{Andersson}}}, \bibinfo {author} {\bibfnamefont {T.}~\bibnamefont
  {{B{\"a}ckdahl}}}, \ and\ \bibinfo {author} {\bibfnamefont {P.}~\bibnamefont
  {{Blue}}},\ }\bibfield  {title} {\enquote {\bibinfo {title} {{Second order
  symmetry operators}},}\ }\href {\doibase 10.1088/0264-9381/31/13/135015}
  {\bibfield  {journal} {\bibinfo  {journal} {Classical and Quantum Gravity}\
  }\textbf {\bibinfo {volume} {31}},\ \bibinfo {eid} {135015} (\bibinfo {year}
  {2014}{\natexlab{a}})},\ \bibinfo {note} {\arxivref{1402.6252}}\BibitemShut
  {NoStop}%
\bibitem [{\citenamefont {{Aksteiner}}\ and\ \citenamefont
  {{B{\"a}ckdahl}}(2016)}]{2016arXiv160904584A}%
  \BibitemOpen
  \bibfield  {author} {\bibinfo {author} {\bibfnamefont {S.}~\bibnamefont
  {{Aksteiner}}}\ and\ \bibinfo {author} {\bibfnamefont {T.}~\bibnamefont
  {{B{\"a}ckdahl}}},\ }\href@noop {} {\enquote {\bibinfo {title} {{Symmetries
  of linearized gravity from adjoint operators}},}\ } (\bibinfo {year}
  {2016}),\ \Eprint {http://arxiv.org/abs/1609.04584} {arXiv:1609.04584
  [gr-qc]} \BibitemShut {NoStop}%
\bibitem [{\citenamefont {{Torres Del
  Castillo}}(1994)}]{TorresDelCastillo:1994}%
  \BibitemOpen
  \bibfield  {author} {\bibinfo {author} {\bibfnamefont {G.~F.}\ \bibnamefont
  {{Torres Del Castillo}}},\ }\bibfield  {title} {\enquote {\bibinfo {title}
  {{Gravitational perturbations of type-D vacuum space-times with cosmological
  constant.}}}\ }\href {\doibase 10.1063/1.530504} {\bibfield  {journal}
  {\bibinfo  {journal} {Journal of Mathematical Physics}\ }\textbf {\bibinfo
  {volume} {35}},\ \bibinfo {pages} {3051--3058} (\bibinfo {year}
  {1994})}\BibitemShut {NoStop}%
\bibitem [{\citenamefont {{Silva-Ortigoza}}(1997)}]{SilvaOrtigoza:1997}%
  \BibitemOpen
  \bibfield  {author} {\bibinfo {author} {\bibfnamefont {G.}~\bibnamefont
  {{Silva-Ortigoza}}},\ }\bibfield  {title} {\enquote {\bibinfo {title} {{A
  Comment on Differential Identities for the Weyl Spinor Perturbations}},}\
  }\href {\doibase 10.1023/A:1018834113248} {\bibfield  {journal} {\bibinfo
  {journal} {General Relativity and Gravitation}\ }\textbf {\bibinfo {volume}
  {29}},\ \bibinfo {pages} {1407--1410} (\bibinfo {year} {1997})}\BibitemShut
  {NoStop}%
\bibitem [{\citenamefont {Whiting}\ and\ \citenamefont
  {Price}(2005)}]{Whiting:2005hr}%
  \BibitemOpen
  \bibfield  {author} {\bibinfo {author} {\bibfnamefont {B.~F.}\ \bibnamefont
  {Whiting}}\ and\ \bibinfo {author} {\bibfnamefont {L.~R.}\ \bibnamefont
  {Price}},\ }\bibfield  {title} {\enquote {\bibinfo {title} {{Metric
  reconstruction from Weyl scalars}},}\ }\href {\doibase
  10.1088/0264-9381/22/15/003} {\bibfield  {journal} {\bibinfo  {journal}
  {Class. Quant. Grav.}\ }\textbf {\bibinfo {volume} {22}},\ \bibinfo {pages}
  {S589--S604} (\bibinfo {year} {2005})}\BibitemShut {NoStop}%
%%CITATION = CQGRD,22,S589;%%
\bibitem [{\citenamefont {{Coll}}\ \emph {et~al.}(1987)\citenamefont {{Coll}},
  \citenamefont {{Fayos}},\ and\ \citenamefont
  {{Ferrando}}}]{coll:etal:1987JMP....28.1075C}%
  \BibitemOpen
  \bibfield  {author} {\bibinfo {author} {\bibfnamefont {B.}~\bibnamefont
  {{Coll}}}, \bibinfo {author} {\bibfnamefont {F.}~\bibnamefont {{Fayos}}}, \
  and\ \bibinfo {author} {\bibfnamefont {J.~J.}\ \bibnamefont {{Ferrando}}},\
  }\bibfield  {title} {\enquote {\bibinfo {title} {{On the electromagnetic
  field and the Teukolsky-Press relations in arbitrary space-times}},}\ }\href
  {\doibase 10.1063/1.527549} {\bibfield  {journal} {\bibinfo  {journal}
  {Journal of Mathematical Physics}\ }\textbf {\bibinfo {volume} {28}},\
  \bibinfo {pages} {1075--1079} (\bibinfo {year} {1987})}\BibitemShut {NoStop}%
\bibitem [{\citenamefont {{Wald}}(1978)}]{wald:1978PhRvL..41..203W}%
  \BibitemOpen
  \bibfield  {author} {\bibinfo {author} {\bibfnamefont {R.~M.}\ \bibnamefont
  {{Wald}}},\ }\bibfield  {title} {\enquote {\bibinfo {title} {{Construction of
  solutions of gravitational, electromagnetic, or other perturbation equations
  from solutions of decoupled equations}},}\ }\href {\doibase
  10.1103/PhysRevLett.41.203} {\bibfield  {journal} {\bibinfo  {journal}
  {Physical Review Letters}\ }\textbf {\bibinfo {volume} {41}},\ \bibinfo
  {pages} {203--206} (\bibinfo {year} {1978})}\BibitemShut {NoStop}%
\bibitem [{Note1()}]{Note1}%
  \BibitemOpen
  \bibinfo {note} {We shall sometimes refer to the linearized vacuum Einstein
  equations as the source-free linearized Einstein equations.}\BibitemShut
  {Stop}%
\bibitem [{\citenamefont {{Sachs}}\ and\ \citenamefont
  {{Bergmann}}(1958)}]{1958PhRv..112..674S}%
  \BibitemOpen
  \bibfield  {author} {\bibinfo {author} {\bibfnamefont {R.}~\bibnamefont
  {{Sachs}}}\ and\ \bibinfo {author} {\bibfnamefont {P.~G.}\ \bibnamefont
  {{Bergmann}}},\ }\bibfield  {title} {\enquote {\bibinfo {title} {{Structure
  of Particles in Linearized Gravitational Theory}},}\ }\href {\doibase
  10.1103/PhysRev.112.674} {\bibfield  {journal} {\bibinfo  {journal} {Physical
  Review}\ }\textbf {\bibinfo {volume} {112}},\ \bibinfo {pages} {674--680}
  (\bibinfo {year} {1958})}\BibitemShut {NoStop}%
\bibitem [{Note2()}]{Note2}%
  \BibitemOpen
  \bibinfo {note} {Here we use a complex anti-self dual Weyl field for
  consistence with the rest of the paper, although this is not used in \cite
  {1958PhRv..112..674S}.}\BibitemShut {Stop}%
\bibitem [{\citenamefont {Penrose}(1965)}]{penrose:1965}%
  \BibitemOpen
  \bibfield  {author} {\bibinfo {author} {\bibfnamefont {R.}~\bibnamefont
  {Penrose}},\ }\bibfield  {title} {\enquote {\bibinfo {title} {{Zero Rest-Mass
  Fields Including Gravitation: Asymptotic Behaviour}},}\ }\href {\doibase
  10.1098/rspa.1965.0058} {\bibfield  {journal} {\bibinfo  {journal} {Royal
  Society of London Proceedings Series A}\ }\textbf {\bibinfo {volume} {284}},\
  \bibinfo {pages} {159--203} (\bibinfo {year} {1965})}\BibitemShut {NoStop}%
\bibitem [{\citenamefont {{Andersson}}\ \emph
  {et~al.}(2014{\natexlab{b}})\citenamefont {{Andersson}}, \citenamefont
  {{B{\"a}ckdahl}},\ and\ \citenamefont {{Joudioux}}}]{2014CMaPh.331..755A}%
  \BibitemOpen
  \bibfield  {author} {\bibinfo {author} {\bibfnamefont {L.}~\bibnamefont
  {{Andersson}}}, \bibinfo {author} {\bibfnamefont {T.}~\bibnamefont
  {{B{\"a}ckdahl}}}, \ and\ \bibinfo {author} {\bibfnamefont {J.}~\bibnamefont
  {{Joudioux}}},\ }\bibfield  {title} {\enquote {\bibinfo {title} {{Hertz
  Potentials and Asymptotic Properties of Massless Fields}},}\ }\href {\doibase
  10.1007/s00220-014-2078-x} {\bibfield  {journal} {\bibinfo  {journal}
  {Communications in Mathematical Physics}\ }\textbf {\bibinfo {volume}
  {331}},\ \bibinfo {pages} {755--803} (\bibinfo {year}
  {2014}{\natexlab{b}})},\ \Eprint {http://arxiv.org/abs/1303.4377}
  {arXiv:1303.4377 [math.AP]} \BibitemShut {NoStop}%
\bibitem [{\citenamefont {{Aksteiner}}\ and\ \citenamefont
  {{Andersson}}(2013)}]{aksteiner:andersson:2013CQGra..30o5016A}%
  \BibitemOpen
  \bibfield  {author} {\bibinfo {author} {\bibfnamefont {S.}~\bibnamefont
  {{Aksteiner}}}\ and\ \bibinfo {author} {\bibfnamefont {L.}~\bibnamefont
  {{Andersson}}},\ }\bibfield  {title} {\enquote {\bibinfo {title} {{Charges
  for linearized gravity}},}\ }\href {\doibase 10.1088/0264-9381/30/15/155016}
  {\bibfield  {journal} {\bibinfo  {journal} {Classical and Quantum Gravity}\
  }\textbf {\bibinfo {volume} {30}},\ \bibinfo {eid} {155016} (\bibinfo {year}
  {2013})},\ \bibinfo {note} {\arxivref{1301.2674}}\BibitemShut {NoStop}%
\bibitem [{Note3()}]{Note3}%
  \BibitemOpen
  \bibinfo {note} {More precisely it differs by a gauge transformation of
  \protect \textit {third kind}, cf. \cite {1979PhRvD..19.1641K}, so that the
  scalar potential solves the TME.}\BibitemShut {Stop}%
\bibitem [{\citenamefont {{Kegeles}}\ and\ \citenamefont
  {{Cohen}}(1979)}]{1979PhRvD..19.1641K}%
  \BibitemOpen
  \bibfield  {author} {\bibinfo {author} {\bibfnamefont {L.~S.}\ \bibnamefont
  {{Kegeles}}}\ and\ \bibinfo {author} {\bibfnamefont {J.~M.}\ \bibnamefont
  {{Cohen}}},\ }\bibfield  {title} {\enquote {\bibinfo {title} {{Constructive
  procedure for perturbations of spacetimes}},}\ }\href {\doibase
  10.1103/PhysRevD.19.1641} {\bibfield  {journal} {\bibinfo  {journal} {\prd}\
  }\textbf {\bibinfo {volume} {19}},\ \bibinfo {pages} {1641--1664} (\bibinfo
  {year} {1979})}\BibitemShut {NoStop}%
\bibitem [{\citenamefont {Aksteiner}(2014)}]{aksteiner:thesis}%
  \BibitemOpen
  \bibfield  {author} {\bibinfo {author} {\bibfnamefont {S.}~\bibnamefont
  {Aksteiner}},\ }\emph {\bibinfo {title} {Geometry and analysis in black hole
  spacetimes}},\ \href@noop {} {Ph.D. thesis},\ \bibinfo  {school} {Gottfried
  Wilhelm Leibniz Universit\"at Hannover} (\bibinfo {year} {2014}),\ \bibinfo
  {note}
  {\href{http://d-nb.info/1057896721}{http://d-nb.info/1057896721}}\BibitemShut
  {NoStop}%
\bibitem [{Note4()}]{Note4}%
  \BibitemOpen
  \bibinfo {note} {The first term on the right-hand side of equation \protect
  \textup {\hbox {\mathsurround \z@ \protect \normalfont (\ignorespaces \ref
  {eq:nice-tensor}\unskip \@@italiccorr )}} is pure gauge since it is the
  action of a linearized diffeomorphism generated by $\protect \EuScript
  {A}_a$.}\BibitemShut {Stop}%
\bibitem [{\citenamefont {{Aksteiner}}\ and\ \citenamefont
  {{B{\"a}ckdahl}}(2018)}]{2018PhRvL.121e1104A}%
  \BibitemOpen
  \bibfield  {author} {\bibinfo {author} {\bibfnamefont {S.}~\bibnamefont
  {{Aksteiner}}}\ and\ \bibinfo {author} {\bibfnamefont {T.}~\bibnamefont
  {{B{\"a}ckdahl}}},\ }\bibfield  {title} {\enquote {\bibinfo {title} {{All
  Local Gauge Invariants for Perturbations of the Kerr Spacetime}},}\ }\href
  {\doibase 10.1103/PhysRevLett.121.051104} {\bibfield  {journal} {\bibinfo
  {journal} {Physical Review Letters}\ }\textbf {\bibinfo {volume} {121}},\
  \bibinfo {eid} {051104} (\bibinfo {year} {2018})},\ \Eprint
  {http://arxiv.org/abs/1803.05341} {arXiv:1803.05341 [gr-qc]} \BibitemShut
  {NoStop}%
\bibitem [{\citenamefont {Geroch}(1970)}]{Ger70spinstructII}%
  \BibitemOpen
  \bibfield  {author} {\bibinfo {author} {\bibfnamefont {R.}~\bibnamefont
  {Geroch}},\ }\bibfield  {title} {\enquote {\bibinfo {title} {Spinor structure
  of space-times in general relativity. {II}},}\ }\href {\doibase
  10.1063/1.1665067} {\bibfield  {journal} {\bibinfo  {journal} {Journal of
  Mathematical Physics}\ }\textbf {\bibinfo {volume} {11}},\ \bibinfo {pages}
  {343--348} (\bibinfo {year} {1970})}\BibitemShut {NoStop}%
\bibitem [{\citenamefont {{B\"{a}ckdahl}}(2011-2016)}]{Bae11a}%
  \BibitemOpen
  \bibfield  {author} {\bibinfo {author} {\bibfnamefont {T.}~\bibnamefont
  {{B\"{a}ckdahl}}},\ }\href@noop {} {\enquote {\bibinfo {title}
  {Sym{M}anipulator},}\ } (\bibinfo {year} {2011-2016}),\ \bibinfo {note}
  {\href{http://www.xact.es/SymManipulator}{http://www.xact.es/SymManipulator}}\BibitemShut
  {NoStop}%
\bibitem [{\citenamefont {Mart\'{\i}n-Garc\'{\i}a}(2002-2016)}]{xAct}%
  \BibitemOpen
  \bibfield  {author} {\bibinfo {author} {\bibfnamefont {J.~M.}\ \bibnamefont
  {Mart\'{\i}n-Garc\'{\i}a}},\ }\href@noop {} {\enquote {\bibinfo {title}
  {x{A}ct: {E}fficient tensor computer algebra for {M}athematica},}\ }
  (\bibinfo {year} {2002-2016}),\ \bibinfo {note}
  {\href{http://www.xact.es}{http://www.xact.es}}\BibitemShut {NoStop}%
\bibitem [{\citenamefont {{Aksteiner}}\ and\ \citenamefont
  {{B\"{a}ckdahl}}(2015)}]{SpinFrames}%
  \BibitemOpen
  \bibfield  {author} {\bibinfo {author} {\bibfnamefont {S.}~\bibnamefont
  {{Aksteiner}}}\ and\ \bibinfo {author} {\bibfnamefont {T.}~\bibnamefont
  {{B\"{a}ckdahl}}},\ }\href@noop {} {\enquote {\bibinfo {title}
  {Spin{F}rames},}\ } (\bibinfo {year} {2015}),\ \bibinfo {note}
  {\href{http://xact.es/SpinFrames/}{http://xact.es/SpinFrames/}}\BibitemShut
  {NoStop}%
\bibitem [{Note5()}]{Note5}%
  \BibitemOpen
  \bibinfo {note} {Note that $\kappa _1$ and $\Psi _2$ can be expressed
  covariantly via the relations $\kappa _{AB} \kappa ^{AB} = -2 \kappa
  _{1}{}^2$ and $\Psi _{ABCD} \Psi ^{ABCD}=6 \Psi _{2}^2$. Hence, we can allow
  $\kappa _1$ and $\Psi _2$ in covariant expressions.}\BibitemShut {Stop}%
\bibitem [{Note6()}]{Note6}%
  \BibitemOpen
  \bibinfo {note} {The name spin raising and lowering is due to the fact that
  multiplication and symmetrization or contraction of a spin-$\protect
  \mathfrak {s}$ field with a valence-2 Killing spinor leads to a
  spin-$\protect \mathfrak {s}+1$ or spin-$\protect \mathfrak {s}-1$ field
  respectively, see \protect \citet [Sec. 6.4]{Penrose:1986fk}.}\BibitemShut
  {Stop}%
\bibitem [{\citenamefont {{B\"ackdahl}}\ and\ \citenamefont {{Valiente
  Kroon}}(2016)}]{BaeVal15}%
  \BibitemOpen
  \bibfield  {author} {\bibinfo {author} {\bibfnamefont {T.}~\bibnamefont
  {{B\"ackdahl}}}\ and\ \bibinfo {author} {\bibfnamefont {J.~A.}\ \bibnamefont
  {{Valiente Kroon}}},\ }\bibfield  {title} {\enquote {\bibinfo {title} {A
  formalism for the calculus of variations with spinors},}\ }\href {\doibase
  10.1063/1.4939562} {\bibfield  {journal} {\bibinfo  {journal} {Journal of
  Mathematical Physics}\ }\textbf {\bibinfo {volume} {57}},\ \bibinfo {eid}
  {022502} (\bibinfo {year} {2016}),\ 10.1063/1.4939562},\ \bibinfo {note}
  {\arxivref{1505.03770}}\BibitemShut {NoStop}%
\bibitem [{Note7()}]{Note7}%
  \BibitemOpen
  \bibinfo {note} {In a type D principal frame this modification only affects
  the middle component.}\BibitemShut {Stop}%
\bibitem [{Note8()}]{Note8}%
  \BibitemOpen
  \bibinfo {note} {This operator is defined as the complex conjugate of the
  operator in \protect \textup {\hbox {\mathsurround \z@ \protect \normalfont
  (\ignorespaces \ref {eq:VarSPsiCDe}\unskip \@@italiccorr )}}, i.e. $\protect
  \overline {\vartheta \Psi }[G,{\mathchoice {{\setbox \z@ \hbox
  {$\mathsurround \z@ \displaystyle G$}\setbox \tw@ \hbox {$\mathsurround \z@
  \displaystyle /$}\dimen 4\wd \z@ \dimen@ \ht \tw@ \advance \dimen@ -\dp \tw@
  \advance \dimen@ -\ht \z@ \advance \dimen@ \dp \z@ \divide \dimen@ \tw@
  \advance \dimen@ -0\ht \tw@ \advance \dimen@ -0\dp \tw@ \dimen@ii 0\wd \z@
  \raise -\dimen@ \hbox to\dimen 4{\hss \kern \dimen@ii \box \tw@ \kern
  -\dimen@ii \hss }\hbox to\z@ {\hss \hbox to\dimen 4{\hss \box \z@ \hss
  }}}}{{\setbox \z@ \hbox {$\mathsurround \z@ \textstyle G$}\setbox \tw@ \hbox
  {$\mathsurround \z@ \textstyle /$}\dimen 4\wd \z@ \dimen@ \ht \tw@ \advance
  \dimen@ -\dp \tw@ \advance \dimen@ -\ht \z@ \advance \dimen@ \dp \z@ \divide
  \dimen@ \tw@ \advance \dimen@ -0\ht \tw@ \advance \dimen@ -0\dp \tw@
  \dimen@ii 0\wd \z@ \raise -\dimen@ \hbox to\dimen 4{\hss \kern \dimen@ii \box
  \tw@ \kern -\dimen@ii \hss }\hbox to\z@ {\hss \hbox to\dimen 4{\hss \box \z@
  \hss }}}}{{\setbox \z@ \hbox {$\mathsurround \z@ \scriptstyle G$}\setbox \tw@
  \hbox {$\mathsurround \z@ \scriptstyle /$}\dimen 4\wd \z@ \dimen@ \ht \tw@
  \advance \dimen@ -\dp \tw@ \advance \dimen@ -\ht \z@ \advance \dimen@ \dp \z@
  \divide \dimen@ \tw@ \advance \dimen@ -0\ht \tw@ \advance \dimen@ -0\dp \tw@
  \dimen@ii 0\wd \z@ \raise -\dimen@ \hbox to\dimen 4{\hss \kern \dimen@ii \box
  \tw@ \kern -\dimen@ii \hss }\hbox to\z@ {\hss \hbox to\dimen 4{\hss \box \z@
  \hss }}}}{{\setbox \z@ \hbox {$\mathsurround \z@ \scriptscriptstyle
  G$}\setbox \tw@ \hbox {$\mathsurround \z@ \scriptscriptstyle /$}\dimen 4\wd
  \z@ \dimen@ \ht \tw@ \advance \dimen@ -\dp \tw@ \advance \dimen@ -\ht \z@
  \advance \dimen@ \dp \z@ \divide \dimen@ \tw@ \advance \dimen@ -0\ht \tw@
  \advance \dimen@ -0\dp \tw@ \dimen@ii 0\wd \z@ \raise -\dimen@ \hbox to\dimen
  4{\hss \kern \dimen@ii \box \tw@ \kern -\dimen@ii \hss }\hbox to\z@ {\hss
  \hbox to\dimen 4{\hss \box \z@ \hss }}}}}]_{A'B'C'D'} = \protect \genfrac
  {}{}{}1{1}{2} (\protect \mathscr {C}^\dagger \protect \mathscr {C}^\dagger
  G)_{A'B'C'D'} - \protect \genfrac {}{}{}1{1}{4} {\mathchoice {{\setbox \z@
  \hbox {$\mathsurround \z@ \displaystyle G$}\setbox \tw@ \hbox {$\mathsurround
  \z@ \displaystyle /$}\dimen 4\wd \z@ \dimen@ \ht \tw@ \advance \dimen@ -\dp
  \tw@ \advance \dimen@ -\ht \z@ \advance \dimen@ \dp \z@ \divide \dimen@ \tw@
  \advance \dimen@ -0\ht \tw@ \advance \dimen@ -0\dp \tw@ \dimen@ii 0\wd \z@
  \raise -\dimen@ \hbox to\dimen 4{\hss \kern \dimen@ii \box \tw@ \kern
  -\dimen@ii \hss }\hbox to\z@ {\hss \hbox to\dimen 4{\hss \box \z@ \hss
  }}}}{{\setbox \z@ \hbox {$\mathsurround \z@ \textstyle G$}\setbox \tw@ \hbox
  {$\mathsurround \z@ \textstyle /$}\dimen 4\wd \z@ \dimen@ \ht \tw@ \advance
  \dimen@ -\dp \tw@ \advance \dimen@ -\ht \z@ \advance \dimen@ \dp \z@ \divide
  \dimen@ \tw@ \advance \dimen@ -0\ht \tw@ \advance \dimen@ -0\dp \tw@
  \dimen@ii 0\wd \z@ \raise -\dimen@ \hbox to\dimen 4{\hss \kern \dimen@ii \box
  \tw@ \kern -\dimen@ii \hss }\hbox to\z@ {\hss \hbox to\dimen 4{\hss \box \z@
  \hss }}}}{{\setbox \z@ \hbox {$\mathsurround \z@ \scriptstyle G$}\setbox \tw@
  \hbox {$\mathsurround \z@ \scriptstyle /$}\dimen 4\wd \z@ \dimen@ \ht \tw@
  \advance \dimen@ -\dp \tw@ \advance \dimen@ -\ht \z@ \advance \dimen@ \dp \z@
  \divide \dimen@ \tw@ \advance \dimen@ -0\ht \tw@ \advance \dimen@ -0\dp \tw@
  \dimen@ii 0\wd \z@ \raise -\dimen@ \hbox to\dimen 4{\hss \kern \dimen@ii \box
  \tw@ \kern -\dimen@ii \hss }\hbox to\z@ {\hss \hbox to\dimen 4{\hss \box \z@
  \hss }}}}{{\setbox \z@ \hbox {$\mathsurround \z@ \scriptscriptstyle
  G$}\setbox \tw@ \hbox {$\mathsurround \z@ \scriptscriptstyle /$}\dimen 4\wd
  \z@ \dimen@ \ht \tw@ \advance \dimen@ -\dp \tw@ \advance \dimen@ -\ht \z@
  \advance \dimen@ \dp \z@ \divide \dimen@ \tw@ \advance \dimen@ -0\ht \tw@
  \advance \dimen@ -0\dp \tw@ \dimen@ii 0\wd \z@ \raise -\dimen@ \hbox to\dimen
  4{\hss \kern \dimen@ii \box \tw@ \kern -\dimen@ii \hss }\hbox to\z@ {\hss
  \hbox to\dimen 4{\hss \box \z@ \hss }}}}}_{} \protect \overline {\Psi
  }_{A'B'C'D'} $.}\BibitemShut {Stop}%
\bibitem [{\citenamefont {{Aksteiner}}\ \emph {et~al.}(2018)\citenamefont
  {{Aksteiner}}, \citenamefont {{Andersson}}, \citenamefont {{B{\"a}ckdahl}},
  \citenamefont {{Khavkine}},\ and\ \citenamefont
  {{Whiting}}}]{AABKW:complete}%
  \BibitemOpen
  \bibfield  {author} {\bibinfo {author} {\bibfnamefont {S.}~\bibnamefont
  {{Aksteiner}}}, \bibinfo {author} {\bibfnamefont {L.}~\bibnamefont
  {{Andersson}}}, \bibinfo {author} {\bibfnamefont {T.}~\bibnamefont
  {{B{\"a}ckdahl}}}, \bibinfo {author} {\bibfnamefont {I.}~\bibnamefont
  {{Khavkine}}}, \ and\ \bibinfo {author} {\bibfnamefont {B.}~\bibnamefont
  {{Whiting}}},\ }\href@noop {} {\enquote {\bibinfo {title} {{Compatibility
  complex for black hole spacetimes}},}\ } (\bibinfo {year} {2018}),\ \bibinfo
  {note} {in preparation}\BibitemShut {NoStop}%
\bibitem [{\citenamefont {{Andersson}}\ \emph {et~al.}(2017)\citenamefont
  {{Andersson}}, \citenamefont {{B{\"a}ckdahl}},\ and\ \citenamefont
  {{Blue}}}]{2014arXiv1412.2960A}%
  \BibitemOpen
  \bibfield  {author} {\bibinfo {author} {\bibfnamefont {L.}~\bibnamefont
  {{Andersson}}}, \bibinfo {author} {\bibfnamefont {T.}~\bibnamefont
  {{B{\"a}ckdahl}}}, \ and\ \bibinfo {author} {\bibfnamefont {P.}~\bibnamefont
  {{Blue}}},\ }\bibfield  {title} {\enquote {\bibinfo {title} {A new tensorial
  conservation law for {Maxwell} fields on the {Kerr} background},}\ }\href
  {\doibase 10.4310/jdg/1486522812} {\bibfield  {journal} {\bibinfo  {journal}
  {J. Differential Geom.}\ }\textbf {\bibinfo {volume} {105}},\ \bibinfo
  {pages} {163--176} (\bibinfo {year} {2017})},\ \bibinfo {note}
  {\arxivref{1412.2960}}\BibitemShut {NoStop}%
\end{thebibliography}

%merlin.mbs apsrev4-1.bst 2010-07-25 4.21a (PWD, AO, DPC) hacked
%Control: key (0)
%Control: author (0) dotless jnrlst
%Control: editor formatted (1) identically to author
%Control: production of article title (0) allowed
%Control: page (1) range
%Control: year (0) verbatim
%Control: production of eprint (0) enabled
%

\end{document}